\newtheorem{theorem}{Theorem}
\newtheorem{lemma}{Lemma}
\newtheorem{prop}{Proposition}
\newtheorem{cor}{Corollary}
\newtheorem{definition}{Definition}
\newtheorem*{remark}{Remark}
\newcommand{\relmiddle}[1]{\mathrel{}\middle#1\mathrel{}}
\def\({\left(}
\def\){\right)}
\begin{document}
\title{Refined finite-size analysis of binary-modulation continuous-variable quantum key distribution}
\author{Takaya Matsuura}
 \email{takaya.matsuura@rmit.edu.au}
 \affiliation{Department of Applied Physics, Graduate School of Engineering, The University of Tokyo, 7-3-1 Hongo, Bunkyo-ku, Tokyo 113-8656, Japan} 
 \affiliation{Centre for Quantum Computation \& Communication Technology, School of Science, RMIT University, Melbourne VIC 3000, Australia}
 \orcid{0000-0003-4164-4307}
 \author{Shinichiro Yamano}
 \affiliation{Department of Applied Physics, Graduate School of Engineering, The University of Tokyo, 7-3-1 Hongo, Bunkyo-ku, Tokyo 113-8656, Japan} 
 \orcid{0000-0003-3245-5344}
 \author{Yui Kuramochi}
 \affiliation{Department of Physics, Faculty of Science, Kyushu University, 744 Motooka, Nishi-ku, Fukuoka, Japan}  
 \orcid{0000-0003-0512-5446}
\author{Toshihiko Sasaki}
 \affiliation{Department of Applied Physics, Graduate School of Engineering, The University of Tokyo, 7-3-1 Hongo, Bunkyo-ku, Tokyo 113-8656, Japan} 
 \affiliation{Photon Science Center, Graduate School of Engineering, The University of Tokyo, 7-3-1 Hongo, Bunkyo-ku, Tokyo 113-8656, Japan}
 \orcid{0000-0003-0745-6791}
\author{Masato Koashi}
 \affiliation{Department of Applied Physics, Graduate School of Engineering, The University of Tokyo, 7-3-1 Hongo, Bunkyo-ku, Tokyo 113-8656, Japan} 
 \affiliation{Photon Science Center, Graduate School of Engineering, The University of Tokyo, 7-3-1 Hongo, Bunkyo-ku, Tokyo 113-8656, Japan}
 \orcid{0000-0002-1952-6470}

\maketitle

\begin{abstract}
    Recent studies showed the finite-size security of binary-modulation CV-QKD protocols against general attacks. 
    However, they gave poor key-rate scaling against transmission distance.  Here, we extend the security proof based on complementarity, which is used in the discrete-variable QKD, to the previously developed binary-modulation CV-QKD protocols with the reverse reconciliation under the finite-size regime and obtain large improvements in the key rates.  Notably, the key rate in the asymptotic limit scales linearly against the attenuation rate, which is known to be optimal scaling but is not achieved in previous finite-size analyses.  This refined security approach may offer full-fledged security proofs for other discrete-modulation CV-QKD protocols.
\end{abstract}

\section{Introduction}
Quantum key distribution (QKD) \cite{BB84} enables two remote parties to share identical random secret bits that are secure against arbitrary eavesdropping allowed under the law of quantum mechanics.  
Using the one-time pad \cite{Shannon1949} with random secret bits shared by the QKD, we can realize the information-theoretic security for bipartite communication. 
Nowadays, there is increasing interest in implementing QKD in the real world.  
Among other things, continuous-variable (CV) QKD \cite{Ralph1999, Hillery2000, Gottesman2001, Cerf2001, GG02, Grosshans2003, hetero04} has the advantages of low-cost implementation and high bit rates at short distances.  Furthermore, it is relatively easy to multiplex several QKD channels via wavelength division multiplexing and co-propagate them with the classical telecommunication because homodyne and heterodyne detectors used in CV-QKD protocols do not require a low-temperature environment and have good wavelength selectivity.  
The (single-)photon detectors used in discrete-variable (DV) QKD, on the other hand, typically require a low-temperature environment for stable operation and a high-quality frequency filter for the wavelength division multiplexing.

The main problem with CV-QKD protocols is the difficulty in establishing a complete security proof.
Compared to the DV-QKD protocols, most of which have complete security proofs even in the finite-size regime, almost all the CV-QKD protocols only have asymptotic security proofs \cite{Silberhorn2002, two_state_Lutken, Garcia2009, Leverrier2009, Leverrier2011, Kaur2021, four_state_Leverrier, four_state_Lutken, liu2021, Upadhyaya2021, denys2021} or security proofs against collective attacks \cite{leverrier2010, Leverrier2015, samsonov2020, Papanastasiou2021continuous, Papanastasiou2021security,mountogiannakis2021,Kanitschar2023}.  
There are, however, some results for the composable finite-size security against general attacks.  
One is for the protocol using the two-mode squeezed vacuum state \cite{Furrer2012, Furrer2014, Furrer2014reverse}, whose security proof is based on the entropic uncertainty relation in the infinite dimensional systems \cite{Berta2016}.  
Unfortunately, this protocol is difficult to implement and has a key-rate that scales poorly as transmission distance grows.   Another is for the $U(N)$-symmetric protocol that uses coherent states with their complex amplitudes modulated according to a Gaussian distribution \cite{Leverrier2013, Leverrier2015, Gaussian_unitary}.  
The security proof for this type of protocol utilizes the de Finetti reduction theorem \cite{leverrier2009security,leverrier2017} to the i.i.d.\ case.  
This methodology has proved the security of several $U(N)$-invariant CV-QKD protocols \cite{Lupo2018, Ghorai2019}.  
However, in practice, ideal Gaussian modulation cannot be implemented and should be approximated by a finite number of coherent states.  
It turns out that an overwhelming number of coherent states is needed to directly approximate the Gaussian ensemble for the security condition to be satisfied \cite{Jouguet2012, Lupo2020}.  
If we try to mitigate the required number, additional assumptions are needed, which makes it difficult to apply in the finite-size regime \cite{Kaur2021}.

The alternative approach \cite{matsuura2021, Yamano2022} is to consider discrete-modulation CV QKD (See also Ref.~\cite{Bauml2023} for recent alternative approach). 
References~\cite{matsuura2021, Yamano2022} show the finite-size security against general attacks for a binary-modulation protocol.  
It can also take into account the discretization of the signal processing, such as binned homodyne and heterodyne measurements (see also Ref.~\cite{lupo2021} for this topic).  
Although it offers such a flexible framework for security proof against general attacks, the obtained key rate has very poor scaling against transmission distance.  
A possible reason for this bad performance is the fact that its security proof is based on entanglement distillation \cite{Shor_Preskill, Lo1999}. 
It is known that the security proof based on entanglement distillation is too stringent in general for secure key distribution.
There are alternative types of security proofs \cite{renner2008, complementarity, tsurumaru2020equivalence} that can be applied to general cases.  
In particular, for CV-QKD protocols, the security proof based on the reverse reconciliation often provides better performance than that based on the direct reconciliation \cite{Silberhorn2002}, which may be unattainable by a security proof based on entanglement distillation due to its symmetric nature between the sender and the receiver in the security proof.

\bigskip
\noindent {\bf Contributions of this paper.}
In this article, we aim to develop another approach to carry out the finite-size security proof for the discrete-modulation CV QKD against general attacks.  The approach should be able to exploit the benefit of the reverse reconciliation.
To do it concretely, we develop refined security proofs based on the reverse reconciliation for the binary-modulation CV-QKD protocols proposed in Refs.~\cite{matsuura2021, Yamano2022}, i.e., the protocol in which the sender Alice performs BPSK-type modulation according to her randomly generated bit and the receiver Bob performs homodyne measurement, heterodyne measurement, and trash randomly \cite{matsuura2021}, or performs heterodyne measurement followed by a random selection of the post-processing of the outcome \cite{Yamano2022}.  We use the same apparatuses and setups as those in Refs.~\cite{matsuura2021, Yamano2022} but slightly change the protocols.  To refine the security proofs, we use an approach based on complementarity \cite{complementarity, Hayashi_Tsurumaru} under the reverse reconciliation, which is more general than the one based on entanglement distillation \cite{tsurumaru2020leftover} and treats Alice and Bob asymmetrically in the security proof.  
In these refined security proofs, we have degrees of freedom that did not appear in the previous analyses.  By setting these degrees of freedom to be optimal in the pure-loss channel \cite{pirandola2017}, we obtain a significant improvement in the key gain rates; in fact, the asymptotic key rates of the protocols scale linearly with regards to the attenuation rate of the pure-loss channel, which is known to be the optimal scaling for the one-way QKD \cite{pirandola2017}.  This shows that we can exploit the benefit of using the reverse reconciliation in the approach based on complementarity.  Although the protocols are still fragile against the excess noise, this approach itself may be a step towards the full-fledged security proofs for discrete-modulation CV QKD.   

\bigskip
\noindent {\bf Organization of this paper.}
The article is organized as follows.  
In Section \ref{sec:security_proof}, we provide the refined security proofs based on complementarity \cite{complementarity} for protocols that use the same experimental setups as proposed in Refs.~\cite{matsuura2021, Yamano2022}.  
The section is further divided into four parts.  
The first part~\ref{sec:actual_protocol} defines the actual protocols, which are almost the same as the ones in Refs.~\cite{matsuura2021, Yamano2022}.  The second part~\ref{sec:virtual_estimation} develops virtual protocols for the complementarity approach \cite{complementarity} and evaluates the necessary amount of privacy amplification through estimation protocols.  In the third part \ref{sec:phase_error_operator}, we derive an explicit form of the phase error operator defined by the virtual procedure of the previous part.  
In the fourth part \ref{sec:finite_size_analysis}, we finish the finite-size security proof by developing operator inequalities.  
In Section \ref{sec:numerical_simulations}, we numerically demonstrate the improved performance of the protocols with our refined security proof.  
Finally, in Section \ref{sec:discussion}, we wrap up our article by discussing future works and open problems. 

\section{Security proof} \label{sec:security_proof}

In this section, we define two binary-modulation CV-QKD protocols that are closely related to the ones proposed in Refs.~\cite{matsuura2021, Yamano2022}, and present their security proofs based on the reverse reconciliation.  
The definition of the (composable) security is the same as that in Ref.~\cite{matsuura2021} and given in Appendix~\ref{sec:definition_security}. 

\subsection{Definition of the protocol} \label{sec:actual_protocol}

The setups of the protocols are illustrated in Fig.~\ref{fig:setups}.
In the following, a random number is denoted with a hat such as $\hat{\cdot}$.  For the places where the slash ``/'' is used, one can adopt either its left-hand side or right-hand side depending on which of ``Homodyne protocol'' or ``Heterodyne protocol'' defined in Fig.~\ref{fig:setups} one chooses.  Note that Homodyne protocol is the same as the protocol proposed in Ref.~\cite{matsuura2021} except for the definition of $f_{\rm suc}(x)$ as well as the way of bit error correction, and Heterodyne protocol is the same as the protocol proposed in Ref.~\cite{Yamano2022} except for the additional trash round as well as the way of bit error correction. 

\begin{figure}
    \centering
    \includegraphics[width=0.97\linewidth]{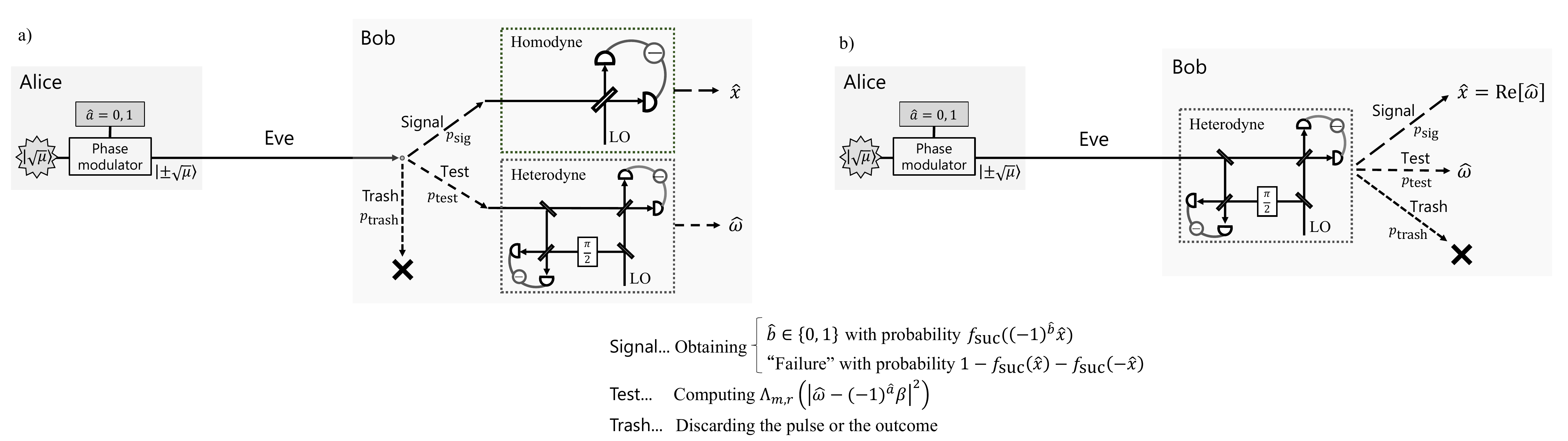}
    \caption{Setups of the protocols.  In both protocols, the sender Alice modulates the optical phase of a laser pulse prepared in a coherent state $\ket{\mu}$ with $0$ or $\pi$ according to her random bit $\hat{a}=0$ or $1$.  (a) ``Homodyne protocol'', which is similar to the protocol proposed in Ref.~\cite{matsuura2021}.  In this protocol, the receiver Bob randomly switches three types of measurements according to probabilities $p_{\rm sig}$, $p_{\rm test}$, and $p_{\rm trash}$, respectively.  In ``Signal'', Bob performs homodyne measurement and obtains the outcome $\hat{x} \in \mathbb{R}$.  Then, he obtains $\hat{b}\in\{0,1\}$ with probability $f_{\rm suc}\bigl((-1)^{\hat{b}}\hat{x}\bigr)$, respectively, or announces ``Failure'' with probability $1-f_{\rm suc}(\hat{x})-f_{\rm suc}(-\hat{x})$.  In ``Test'', Bob performs heterodyne measurement and obtains the outcome $\hat{\omega} \in \mathbb{C}$.  Then, he computes $\Lambda_{m,r}\bigl(|\hat{\omega}-(-1)^{\hat{a}}|^2\bigr)$ with Alice's bit $\hat{a}$ announced.  In ``Trash'', Bob discards the received optical pulse and produces no outcome.  (b) ``Heterodyne protocol'', which is similar to the protocol proposed in Ref.~\cite{Yamano2022}.  In this protocol, the receiver Bob performs heterodyne measurement, obtains the outcome $\hat{\omega}$, and randomly switches three types of post-processings according to probabilities $p_{\rm sig}$, $p_{\rm test}$, and $p_{\rm trash}$, respectively.  In ``Signal'', Bob defines $\hat{x}=\Re[\hat{\omega}]$ and follows the same procedure of obtaining the bit $\hat{b}$ or ``Failure'' as in Homodyne protocol.  In ``Test'', Bob follows the same procedure of computing $\Lambda_{m,r}\bigl(|\hat{\omega}-(-1)^{\hat{a}}|^2\bigr)$ as in Homodyne protocol.  In ``Trash'', Bob discards the outcome $\hat{\omega}$.}
    \label{fig:setups}
\end{figure}

Prior to the protocol, Alice and Bob determine the number $N$ of total rounds, the acceptance probability function $f_{\rm suc}(x)\ (x \in \mathbb{R})$ of the homodyne/heterodyne measurement satisfying $f_{\rm suc}(x)+f_{\rm suc}(-x)\leq 1$, an odd integer $m$ and a real $r$ for the test function $\Lambda_{m,r}(\nu)\coloneqq e^{-r \nu}(1+r)L_m^{(1)}((1+r)\nu)$ with $L_m^{(1)}$ being the associated Laguerre polynomial \cite{matsuura2021}, and the protocol parameters $(\mu, p_{\rm sig}, p_{\rm test}, p_{\rm trash}, \beta, s, \kappa, \gamma)$ satisfying $p_{\rm sig}+p_{\rm test}+p_{\rm trash}=1$ and $\beta < \sqrt{\mu}$, where all the parameters are positive.  Alice and Bob then run the protocol described in Box 1, where the outcome $\hat{x}\in\mathbb{R}$ of a homodyne measurement and the outcome $\hat{\omega}\in\mathbb{C}$ of a heterodyne measurement shall be normalized such that their variances are $\braket{(\varDelta \hat{x})^2}=1/4$ and $\braket{\bigl(\varDelta \Re[\hat{\omega}]\bigr)^2}=\braket{\bigl(\varDelta \Im[\hat{\omega}]\bigr)^2}=1/2$ for the vacuum.
Unless aborted or $\hat{N}^{\rm suc}=0$, the protocol generates a shared final key of length 
\begin{equation} 
  \hat{N}^{\rm fin} = \max\left\{\hat{N}^{\rm suc} - \Bigl\lceil \hat{N}^{\rm suc} h\!\(\min\bigl\{U(\hat{F}, \hat{N}^{\rm trash})/\hat{N}^{\rm suc}, 0.5\bigr\}\) \Bigr\rceil - s, 0\right\},
  \label{eq:final_key_length}
\end{equation}
where $\lceil\cdot\rceil$ is the ceiling function, $h(x)$ is the binary entropy function defined as
\begin{equation}
    h(x)\coloneqq 
        -x \log_2(x)-(1-x) \log_2(1-x),
\end{equation}
and the function $U(\hat{F}, \hat{N}^{\rm trash})$ will be specified later.

\smallskip

\begin{framed}
{\noindent \bf Box 1: Actual protocol}

\vspace{0.15cm}
\begin{enumerate}
  \setlength{\leftskip}{0.5cm}
  \item For each of the $N$ rounds, Alice generates a random bit $\hat{a}\in\{0,1\}$ and sends an optical pulse $\tilde{C}$ in a coherent state with amplitude $(-1)^{\hat{a}} \sqrt{\mu}$ to Bob. Bob receives an optical pulse $C$ for each round.
  \item For the received pulse $C$ in each round, Bob chooses a label from $\rm \{signal, test, trash\}$ with probabilities $p_{\rm sig},p_{\rm test}$, and $p_{\rm trash}$, respectively, and announces it. According to the label, Alice and Bob do one of the following procedures.
  \begin{itemize}
  \item[\textrm{[}signal\textrm{]}] Bob performs a homodyne/heterodyne measurement on the received optical pulse $C$ and obtains an outcome $\hat{x} \in \mathbb{R}$.  (For the heterodyne measurement, $\hat{x}$ is defined as the real part of the outcome $\hat{\omega} \in \mathbb{C}$.)  Bob defines a sifted-key bit $\hat{b}$ as $\hat{b}=0$ with a probability $f_{\rm suc}(\hat{x})$ and $\hat{b}=1$ with a probability $f_{\rm suc}(-\hat{x})$.  When Bob has defined his sifted key bit, he announces ``success'', and otherwise, he announces ``failure''.  In the case of a success, Alice (resp.~Bob) records a bit $\hat{a}$ $(\hat{b})$.
  \item[\textrm{[}test\textrm{]}] Bob performs a heterodyne measurement on the received optical pulse $C$ and obtains an outcome $\hat{\omega}$.  Alice announces her bit $a$.  Bob calculates the value of $\Lambda_{m,r}(|\hat{\omega} -(-1)^{\hat{a}}\beta|^2)$. 
  \item[\textrm{[}trash\textrm{]}] Alice and Bob produce no outcomes.
  \end{itemize}
  \item We refer to the numbers of ``success'' and ``failure'' signal rounds, test rounds, and trash rounds as $\hat{N}^{\rm suc}, \hat{N}^{\rm fail}, \hat{N}^{\rm test}$, and $\hat{N}^{\rm trash}$, respectively.  ($N= \hat{N}^{\rm suc}+\hat{N}^{\rm fail}+\hat{N}^{\rm test}+\hat{N}^{\rm trash}$ holds by definition.) 
  Bob calculates the sum of $\Lambda_{m,r}(|\hat{\omega} -(-1)^{\hat{a}}\beta|^2)$ obtained in the $\hat{N}^{\rm test}$ test rounds, which is denoted by $\hat{F}$.
  \item For error correction, they use $H_{\rm EC}$-bits of encrypted communication consuming a pre-shared secret key to do the following. 
  According to (the upper bound on) the bit error rate $e_{\rm qber}$, Bob randomly chooses an error-correcting code and sends it with the $H_{\rm EC}$-bits syndrome to Alice.  Alice reconciles her sifted key accordingly. 
  \item Bob computes and announces the final key length $\hat{N}^{\rm fin}$ according to Eq.~\eqref{eq:final_key_length}.
  Alice and Bob apply privacy amplification to obtain the final key. 
\end{enumerate}
\end{framed}

For simplicity, we omitted the bit-error-sampling rounds in the above protocol.  
To satisfy the required correctness $\varepsilon_{\rm cor}$ for the final key (see Appendix~\ref{sec:definition_security}), Alice and Bob randomly insert $N_{\rm smp}$ sampling rounds among $N$ rounds in which Bob performs the same measurement as that of the signal round and estimate an upper bound $e_{\rm qber}$ on the bit error rate.
Let $\hat{N}_{\rm smp}^{\rm suc}$ be the number of ``success'' in $N_{\rm smp}$ sampling rounds, and let $\hat{E}_{\rm obs}$ be the number of discrepancies between Alice's and Bob's bits observed in the ``success'' sampling rounds.
Then, Bob sets $e_{\rm qber}$ to 
\begin{equation}
    e_{\rm qber} = \left.\left(\tilde{M}_{\hat{N}^{\rm suc}+\hat{N}_{\rm smp}^{\rm suc},\hat{N}_{\rm smp}^{\rm suc},\varepsilon_{\rm cor}/2}(\hat{E}_{\rm obs}) - \hat{E}_{\rm obs}\right) \right/ \hat{N}^{\rm suc},
    \label{eq:upper_bit_error_rate}
\end{equation}
where the function $\tilde{M}_{N,n,\epsilon}$ is defined in Eq.~\eqref{eq:upper_bound_sampling} in Appendix~\ref{sec:estimate_num_of_err}.
The proof that this definition of $e_{\rm qber}$ upper-bounds the actual bit error rate with probability no smaller than $1-\varepsilon_{\rm cor}/2$ is also shown in Appendix~\ref{sec:estimate_num_of_err}. 
The required amount $H_{\rm EC}$ of the error syndrome Bob sends to Alice in the bit error correction depends on the error correction method; here we assume 
\begin{equation}
    H_{\rm EC} = \hat{N}^{\rm suc} \left[f\, h(\min\{e_{\rm qber},0.5\}) + (1 - f)\right], 
    \label{eq:bit_error_fraction}
\end{equation} 
where $f\in[0,1]$ denotes an error correction efficiency \cite{lodewyck2007,jouguet2014,four_state_Leverrier,four_state_Lutken,liu2021,wen2021} for the error correction to succeed with the probability no smaller than $1-\varepsilon_{\rm cor}/2$.
The net key gain $\hat{G}$ per pulse is thus given by
\begin{equation}
    \hat{G} = (\hat{N}^{\rm fin} - H_{\rm EC}) / (N + N_{\rm smp}).
\end{equation}
Here, we do not use verification in the post-processing, unlike Refs.~\cite{matsuura2021, Yamano2022}, due to the subtleties to incorporate it in our security proof.
The acceptance probability $f_{\rm suc}(x)$ should be chosen to post-select the rounds with larger values of $x$, for which the bit error probability is expected to be lower.  
The definition of $f_{\rm suc}(x)$ in this article follows Ref.~\cite{Yamano2022} and is slightly more general than that of Ref.~\cite{matsuura2021}.  (Note that Ref.~\cite{matsuura2021} can also use this definition of $f_{\rm suc}(x)$.)  It is ideally a step function with a threshold $x_{\mathrm{th}}(>0)$, but our security proof applies to any form of $f_{\rm suc}(x)$.  The test function $\Lambda_{m,r}(\nu)$ is the same as the one defined in Ref.~\cite{matsuura2021} where it is shown to satisfy
\begin{equation}
    \mathbb{E}_{\rho}[\Lambda_{m,r}(|\hat{\omega}-\beta|^2)]\leq \bra{\beta}\rho\ket{\beta}
    \label{eq:lower_fid_coherent}
\end{equation}
for any odd integer $m$, positive real $r$, and density operator $\rho$ (see Appendix~\ref{sec:fidelity_lower_bound}).
The parameter $\beta$ is typically chosen to be $\sqrt{\eta\mu}$ with $\eta$ being a nominal transmissivity of the quantum channel, while the security proof itself holds for any choice of $\beta$.  
The parameter $s$ is related to the overall security parameter in the security proof below.

\subsection{Construction of a virtual protocol and reduction to an estimation protocol} \label{sec:virtual_estimation}

We determine a sufficient amount of the privacy amplification according to the complementarity, or in other words, the phase error correction \cite{complementarity, Hayashi_Tsurumaru}, which has been widely used for the DV-QKD protocols.  
We aim at showing the secrecy of Bob's final key against the adversary Eve (see Appendix~\ref{sec:definition_security}).  
To do so, we consider a virtual protocol in which Bob has a qubit for each success signal round such that the outcome of the $Z$-basis measurement on it is equivalent to his sifted key bit $\hat{b}$.  
Alice can do arbitrary quantum operations in the virtual protocol as long as all the statistics and available information to the adversary Eve are the same as those in the actual protocol.  
Then, after Bob's $Z$-basis measurement on the qubit, the reduced classical-quantum state between Bob and Eve in the virtual protocol is the same as that in the actual protocol.

In the following, we explicitly describe the virtual protocol.
For Alice, we introduce a qubit $A$ and assume that she entangles it with an optical pulse $\tilde{C}$ in a state
\begin{equation}
    \ket{\Psi}_{A\tilde{C}}\coloneqq \frac{\ket{0}_A\ket{\sqrt{\mu}}_{\tilde{C}}+\ket{1}_A\ket{-\sqrt{\mu}}_{\tilde{C}}}{\sqrt{2}},
    \label{eq:prepared_state}
\end{equation}
where $\ket{\omega}_{\tilde{C}}$ with $\omega\in\mathbb{C}$ denotes the coherent state with the amplitude $\omega$, which is defined as
\begin{equation}
    \ket{\omega}_{\tilde{C}} \coloneqq e^{-\frac{|\omega|^2}{2}} \sum_{n=0}^{\infty} \frac{\omega^{n}}{\sqrt{n!}} \ket{n}_{\tilde{C}}.
\end{equation}
Then, the optical pulse $\tilde{C}$ emitted by Alice is in the same state as that in the actual protocol.
For Bob, we construct a process of probabilistically converting the received optical pulse $C$ to a qubit $B$, which can be regarded as a coherent version of Bob's signal measurement.  For Homodyne protocol, consider a map $\mathcal{K}_{C\to B}^{\rm hom}$ defined as \cite{matsuura2021}
\begin{equation}
    \label{eq:I_homo}
    \mathcal{K}_{C\to B}^{\rm hom}(x)(\rho_{C})\coloneqq  K^{\rm hom}_{\rm suc}(x) \,\rho_{C} \, \bigl(K^{\rm hom}_{\rm suc}(x)\bigr)^{\dagger}
\end{equation}
with
\begin{equation}
    K^{\rm hom}_{\rm suc}(x) \coloneqq \sqrt{f_{\rm suc}(x)} \bigl(\ket{0}_{B}\!\bra{x}_C + \ket{1}_{B} \!\bra{-x}_C \bigr),
\end{equation}
where $\bra{x}$ maps a state vector to the value of its wave function at $x$; i.e., for a coherent state vector $\ket{\omega}$, $\bra{x}$ acts as
\begin{equation}
    \braket{x|\omega} = \left(\frac{2}{\pi}\right)^{\frac{1}{4}} \exp\!\left[-(x-\omega_r)^2 + 2i\omega_i x - i\omega_r \omega_i\right],\label{eq:wave_func_coherent}
\end{equation}
where $\omega = \omega_r + i\omega_i$ with $\omega_r, \omega_i\in\mathbb{R}$.
Let $\Pi_{\rm ev(od)}$ denote a projection operator onto the subspace of even(odd) photon numbers.  Since $\bra{x}(\Pi_{\rm ev}-\Pi_{\rm od})=\bra{-x}$ holds, we have
\begin{align}
    \label{eq:K_x}
      K^{\rm hom}_{\rm suc}(x) = \sqrt{2 f_{\rm suc}(x)} \bigl(\ket{+}_{B}\!\bra{x}_C \Pi_{\rm ev} + \ket{-}_{B} \!\bra{x}_C \Pi_{\rm od}\bigr). 
\end{align}
This defines an instrument $\mathcal{I}_{C\to B}^{\rm hom}$ for the process of producing the outcome $\hat{x}$ and leaving $C$ in a post-measurement state; i.e., given a measurable set $\Delta\subseteq\mathbb{R}$, the unnormalized post-measurement state is given by
\begin{equation}
    \mathcal{I}_{C\to B}^{\rm hom}(\Delta)(\rho_{C}) = \int_{\Delta} dx\; \mathcal{K}_{C\to B}^{\rm hom}(x)(\rho_{C})
    \label{eq:instrument_hom}
\end{equation}
with $\mathrm{Tr}[\mathcal{I}_{C\to B}^{\rm hom}(\Delta)(\rho_{C})]$ being a probability of ``success'' signal event with the outcome $\hat{x}\in\Delta$.
Similarly, for Heterodyne protocol, consider a map $\mathcal{K}_{C\to B}^{\rm het}$ defined as \cite{Yamano2022}
\begin{equation}
    \label{eq:I_hetero}
    \mathcal{K}_{C\to B}^{\rm het}(\omega)(\rho_{C})\coloneqq  K^{\rm het}_{\rm suc}(\omega)\,\rho_{C} \bigl(K^{\rm het}_{\rm suc}(\omega)\bigr)^{\dagger} 
\end{equation}
with 
\begin{equation}
    \label{eq:K_omega}
    \begin{split}
        K^{\rm het}_{\rm suc}(\omega) &\coloneqq \sqrt{\frac{f_{\rm suc}(\omega_r)}{\pi}} \bigl(\ket{0}_{B}\!\bra{\omega}_C + \ket{1}_{B} \!\bra{-\omega}_C \bigr)\\
    &= \sqrt{\frac{2f_{\rm suc}(\omega_r)}{\pi}} \left(\ket{+}_{B}\!\bra{\omega}_C \Pi_{\rm ev} + \ket{-}_{B}\!\bra{\omega}_C \Pi_{\rm od}  \right),
    \end{split}
\end{equation}
where $\ket{\omega}$ denotes a coherent state vector and $\omega = \omega_r + i\omega_i$ with $\omega_r, \omega_i\in\mathbb{R}$.
Similarly to Homodyne protocol, we can define an instrument $\mathcal{I}_{C\to B}^{\rm het}$ composed of the heterodyne outcome and the (unnormalized) post-measurement state, which is given by
\begin{equation}
    \mathcal{I}_{C\to B}^{\rm het}(\Delta')(\rho_{C}) = \int_{\Delta'} d\omega_r d\omega_i\; \mathcal{K}_{C\to B}^{\rm het}(\omega)(\rho_{C}),
    \label{eq:instrument_het}
\end{equation}
where $\Delta'\subseteq \mathbb{R}^2$ is a measurable set.
If Bob measures the qubit $B$ in the $Z$ basis after the instrument~\eqref{eq:instrument_hom} (resp.~\eqref{eq:instrument_het}), he obtains the same sifted key bit with the same probability as in the actual protocol when $\hat{x}\in \Delta$ (resp.~$\hat{\omega}\in\Delta'$) \cite{matsuura2021, Yamano2022}.

At this point, one has a degree of freedom to perform quantum operations on the system $AB$ for each outcome $\hat{x}$ (resp.~$\hat{\omega}$) as long as it does not change the $Z$-basis value of the qubit $B$.  This is because we aim at showing the secrecy of Bob's final key against the adversary Eve with Alice's system traced out.
Thus, after applying the map ${\cal K}_{C\to B}^{\rm hom}$ (resp.~${\cal K}_{C\to B}^{\rm het}$), we assume that Alice and Bob perform a controlled isometry $V_{B; A\to R}^{\rm hom}(x)$ (resp.~$V_{B; A\to R}^{\rm het}(\omega)$) of the form
\begin{align}
    V_{B;A\to R}^{\rm hom}(x)&\coloneqq \left[\ket{0}\!\bra{0}_B\otimes V^{(0)}_{A\to R}(x) + \ket{1}\!\bra{1}_B\otimes V^{(1)}_{A\to R}(x)\right] \text{C-}X_{BA}\label{eq:controlled-isometry_1} \\
    V_{B;A\to R}^{\rm het}(\omega)&\coloneqq \left[\ket{0}\!\bra{0}_B\otimes V'^{(0)}_{A\to R}(\omega) + \ket{1}\!\bra{1}_B\otimes V'^{(1)}_{A\to R}(\omega)\right]\text{C-}X_{BA},
    \label{eq:controlled-isometry_2}
\end{align}
where $\text{C-}X_{BA}\coloneqq \ket{0}\!\bra{0}_B \otimes I_A + \ket{1}\!\bra{1}_B \otimes X_A$ denotes the Controlled-NOT gate and $V^{(j)}_{A\to R}(x)$ (resp.~$V'^{(j)}_{A\to R}(\omega)$) for $j=0,1$ denotes an isometry from the system $A$ to another system $R$ that is no smaller than $A$ \footnote{Here, a subtlety for using the verification comes in.  In order to know whether verification succeeds or not, Alice has to confirm the syndrome bits for the verification.  However, this procedure may not commute with the action of $V_{B; A\to R}^{\rm hom}(x)$ (resp.~$V_{B; A\to R}^{\rm het}(\omega)$).  We do not currently have a method to evaluate how much the verification affects the secrecy condition.}.  If $V^{(j)}_{A\to R}(x)$ (resp.~$V'^{(j)}_{A\to R}(\omega)$) is an identity, then the analysis reduces to the previous results \cite{matsuura2021, Yamano2022}.  Let ${\cal V}_{B; A\to R}^{\rm hom}(x)$ (resp.~${\cal V}_{B; A\to R}^{\rm het}(\omega)$) be the CPTP map defined as 
\begin{align}
    {\cal V}_{B; A\to R}^{\rm hom}(x)(\rho_{AB}) &= V_{B; A\to R}^{\rm hom}(x) \rho_{AB} \bigl(V_{B; A\to R}^{\rm hom}(x)\bigr)^{\dagger}, \\
    {\cal V}_{B; A\to R}^{\rm het}(\omega)(\rho_{AB}) &= V_{B; A\to R}^{\rm het}(\omega) \rho_{AB} \bigl(V_{B; A\to R}^{\rm het}(\omega)\bigr)^{\dagger}.
\end{align}
The composition of the map ${\cal V}_{B; A\to R}^{\rm hom}(x)$ and the map~\eqref{eq:I_homo} (resp.~map ${\cal V}_{B; A\to R}^{\rm het}(\omega)$ and the map~\eqref{eq:I_hetero}) with Alice's system traced out at the end defines a quantum operation ${\cal F}^{\rm hom}_{AC\to B}$ (resp.~${\cal F}^{\rm het}_{AC\to B}$) that (probabilistically) outputs Bob's qubits for his sifted key as
\begin{align}
    {\cal F}^{\rm hom}_{AC\to B}(\rho_{AC}) &= \int_{-\infty}^{\infty} dx\; {\cal K}'^{\,\rm hom}_{AC\to B}(x)(\rho_{AC}), \label{eq:combined_instrument_hom} \\
    {\cal F}^{\rm het}_{AC\to B}(\rho_{AC}) &= \iint_{-\infty}^{\infty} d\omega_r d\omega_i\; {\cal K}'^{\,\rm het}_{AC\to B}(x)(\rho_{AC}) ,\label{eq:combined_instrument_het}
\end{align}
with ${\cal K}'^{\,\rm hom}_{AC\to B}(x)$ (resp.~${\cal K}'^{\,\rm het}_{AC\to B}(\omega)$) given by
\begin{align}
    {\cal K}'^{\,\rm hom}_{AC\to B}(x)(\rho_{AC})&\coloneqq \mathrm{Tr}_R\left[{\cal V}_{B;A\to R}^{\rm hom}(x) \circ \bigl(\mathrm{Id}_A\otimes {\cal K}^{\rm hom}_{C\to B}(x)\bigr)(\rho_{AC})\right], \label{eq:K_prime_hom}\\
    {\cal K}'^{\,\rm het}_{AC\to B}(\omega)(\rho_{AC})&\coloneqq \mathrm{Tr}_R\left[{\cal V}_{B;A\to R}^{\rm het}(\omega) \circ \bigl(\mathrm{Id}_A\otimes {\cal K}^{\rm het}_{C\to B}(\omega)\bigr)(\rho_{AC})\right], \label{eq:K_prime_het}
\end{align}
where $\mathrm{Id}$ denotes the identity map.
Note that the idea of acting the isometry $V_{B; A\to R}^{\rm hom}(x)$ or $V_{B; A\to R}^{\rm het}(\omega)$ is closely related to the twisting operation on the shield system \cite{horodecki2006, Renes2007, horodecki2008, horodecki2009, Bourassa2020}.  The difference is that in our case it acts on the system $A$ in a way that is incompatible with the $Z$-basis measurement on $A$.  This is allowed in a security proof based on complementarity since what we need to prove in the virtual protocol is that the outcome of the $Z$-basis measurement on $B$ is secret to Eve when the system $A$ is traced out \cite{complementarity}; i.e., the system $A$ works as a shield system.

We then introduce a virtual protocol that explicitly incorporates the action of ${\cal F}^{\rm hom}_{AC\to B}$ in Eq.~\eqref{eq:combined_instrument_hom} (resp.~${\cal F}^{\rm het}_{AC\to B}$ in Eq.~\eqref{eq:combined_instrument_het}) in Box 2.
\smallskip

\begin{framed}
{\noindent \bf Box 2: Virtual protocol}

\vspace{0.15cm}
\begin{enumerate}
  \setlength{\leftskip}{0.5cm}
  \item[$1'$.] For each of the $N$ rounds, Alice prepares a qubit $A$ and an optical pulse $\tilde{C}$ in a state $\ket{\Psi}_{A\tilde{C}}$ defined in \eqref{eq:prepared_state} and sends the pulse $\tilde{C}$ to Bob.  Bob receives an optical pulse $C$ for each round.
  \item[$2'$.] For the received pulse $C$ in each round, Bob announces a label in the same way as that at Step~2. 
  Alice and Bob do one of the following procedures according to the label.
  \begin{itemize}
    \item[[signal\textrm{]}] Alice and Bob perform the quantum operation on the system $A$ and the received pulse $C$ specified by the map ${\cal F}^{\rm hom}_{AC\to B}$ defined in Eq.~\eqref{eq:combined_instrument_hom} (resp.~${\cal F}^{\rm het}_{AC\to B}$ defined in Eq.~\eqref{eq:combined_instrument_het}) to determine success or failure of detection, obtain the qubit $B$ upon success, and perform the controlled isometry given in Eq.~\eqref{eq:controlled-isometry_1} (resp.~Eq.~\eqref{eq:controlled-isometry_2}).  Bob announces the success or failure of the detection. 
    \item[[test\textrm{]}] Bob performs a heterodyne measurement on the received optical pulse $C$, and obtains an outcome $\hat{\omega}$. 
    Alice measures her qubit $A$ in the $Z$ basis and announces the outcome $\hat{a}\in \{0,1\}$.  Bob calculates the value of $\Lambda_{m,r}(|\hat{\omega} -(-1)^{\hat{a}}\beta|^2)$.
    \item[[trash\textrm{]}] Alice measures her qubit $A$ in the $X$ basis to obtain $\hat{a}'\in \{+,-\}$.
  \end{itemize}
  \item[$3'$.] $\hat{N}^{\rm suc},  \hat{N}^{\rm fail}, \hat{N}^{\rm test}, \hat{N}^{\rm trash}$, and $\hat{F}$ are defined in the same way as those at Step~3. Let $\hat{Q}_{-}$ be the number of rounds with $\hat{a}'=-$ among the $\hat{N}^{\rm trash}$ trash rounds.
  \item[$4'$.]  According to (the upper bound on) the bit error rate $e_{\rm qber}$, Bob performs $H_{\rm EC}$ bits of encrypted communication consuming a pre-shared secret key to send a dummy message.
  \item[$5'$.] Bob computes and announces the final key length $\hat{N}^{\rm fin}$ according to Eq.~\eqref{eq:final_key_length}.  Bob performs a randomly chosen unitary on his qubits (see the main text), and measures the first $\hat{N}^{\rm fin}$ qubits in the $Z$ bases.
\end{enumerate}
\end{framed}

\noindent In the last line of Step $5'$, the random choice of a unitary is constructed so that, along with the subsequent $\hat{N}^{\rm fin}$-qubit measurement in the $Z$ bases, it is equivalent to the privacy amplification.  This is possible because for any $n\times n$ linear transformation $C$ on the $n$-bit sequence, there always exists a corresponding unitary $U(C)$ that satisfies $U(C)\ket{\bm{z}}=\ket{C\bm{z}}$ in the $Z$ basis.  
As has already been claimed, if Eve performs the same attacks as those in the actual protocol, the resulting classical-quantum state between Bob and Eve is the same as that in the actual protocol.  

The complementarity argument \cite{complementarity} in a reverse reconciliation scenario relates the amount of privacy amplification to the so-called phase error patterns of Bob's qubits.  Suppose that, just before the $Z$-basis measurement at Step~$5'$ of the virtual protocol, Bob's quantum state on the first $\hat{N}^{\rm fin}$ qubits is $\varepsilon_{\rm sc}$-close to $\ket{+}\!\bra{+}^{\otimes \hat{N}^{\rm fin}}$ in terms of the infidelity when averaged over $\hat{N}^{\rm fin}$, where the cases $\hat{N}^{\rm fin}=0$ are regarded to have no infidelity.  Then, the secrecy condition of the final key is satisfied with a secrecy parameter $\sqrt{2\varepsilon_{\rm sc}}$ \cite{Hayashi_Tsurumaru, matsuura2019, matsuura2023digital}.  For this to be true, the errors in the $X$ bases (i.e., the phase errors) on Bob's qubits should be corrected by the procedure at Step~$5'$ of the virtual protocol with failure probability no larger than $\varepsilon_{\rm sc}$ when averaged over $\hat{N}^{\rm fin}$ ($\hat{N}^{\rm fin}=0$ cases are regarded as no failure).  To see the correctability of the phase errors at Step~$5'$, suppose that Bob measured his $\hat{N}^{\rm suc}$ qubits in the $X$ basis $\{\ket{+}, \ket{-} \}$ at the end of Step~$3'$, and obtained a sequence of $+$ and $-$.  The minuses in the sequence are regarded as phase errors.
It has already been known that, if we can find an upper bound on the number of possible phase-error patterns, then we can prove the security \cite{complementarity,matsuura2023digital}.  To make the argument more precise, we introduce an estimation protocol in Box 3.

\begin{framed}
{\noindent \bf Box 3: Estimation protocol}

\vspace{0.15cm}
\begin{enumerate}
  \setlength{\leftskip}{0.5cm}
  \item[$1''$.] For each of the $N$ rounds, Alice prepares a qubit $A$ and an optical pulse $\tilde{C}$ in a state $\ket{\Psi}_{A\tilde{C}}$ defined in \eqref{eq:prepared_state} and sends the pulse $\tilde{C}$ to Bob.  Bob receives an optical pulse $C$ for each round.
  \item[$2''$.] For the received pulse $C$ in the $i$th round ($i=1,\ldots, N$), Bob announces a label in the same way as that at Step~2.  Alice and Bob do one of the following procedures according to the label and obtain the values of random variables $\hat{N}^{{\rm suc}\,(i)}_{\rm ph}$, $\hat{F}^{(i)}$, and $\hat{Q}_-^{(i)}$.  Unless explicitly written, these random variables are set to be zeros. 
  \begin{itemize}
    \item[[signal\textrm{]}] Alice and Bob do the same procedure as that at ``signal'' of Step $2'$.  Upon ``success'', Bob performs the $X$-basis measurement on qubit $B$ and obtains $\hat{b}'\in\{+,-\}$.  When $\hat{b}'=-$, $\hat{N}^{{\rm suc}\,(i)}_{\rm ph}$ is set to be unity.
    \item[[test\textrm{]}] Alice and Bob do the same procedure as that at ``test'' of Step $2'$.  Then $\hat{F}^{(i)}$ is set to be $\Lambda_{m,r}(|\hat{\omega} -(-1)^{\hat{a}}\beta|^2)$.
    \item[[trash\textrm{]}] Alice does the same procedure as that at ``trash'' of Step $2'$.  When $\hat{a}'=-$, $\hat{Q}_-^{(i)}$ is set to be unity.   
  \end{itemize}
  \item[$3''$.] Same as Steps $3'$ of the virtual protocol.  Note that $\hat{F}=\sum_{i=1}^{N}\hat{F}^{(i)}$ and $\hat{Q}_-=\sum_{i=1}^{N}\hat{Q}_-^{(i)}$ hold.
  \item[$4''$.]  Regarding $+$ as zero and $-$ as unity for each $\hat{b}'$ in success signal round, define the $\hat{N}^{\rm suc}$-bit sequence $\hat{\bm{x}}_{\rm ph}$.  Let $\hat{N}^{\rm suc}_{\rm ph}$ be the Hamming weight of $\hat{\bm{x}}_{\rm ph}$, i.e., $\hat{N}^{\rm suc}_{\rm ph}=\sum_{i=1}^{N}\hat{N}^{{\rm suc}\,(i)}_{\rm ph}$.
  \item[$5''$.] Bob performs a universal${}_2$ hashing on $\hat{\bm{x}}_{\rm ph}$ and obtains $(\hat{N}^{\rm suc} - \hat{N}^{\rm fin})$-bit syndrome.  From this syndrome, Bob estimates the binary string $\hat{\bm{x}}_{\rm ph}$.
\end{enumerate}
\end{framed}

\noindent If we could uniquely identify the $\hat{N}^{\rm suc}$-bit sequence $\hat{\bm{x}}_{\rm ph}$ with failure probability at most $\varepsilon_{\rm sc}$ (when averaged over $\hat{N}^{\rm fin}$ with $\hat{N}^{\rm fin}=0$ cases regarded as no failure), then we could identify the phase errors in the virtual protocol as well and correct them by appropriately acting Pauli-$Z$ operations, and thus the actual protocol can be made $\sqrt{2\varepsilon_{\rm sc}}$-secret.  Therefore, the task of proving the security of the actual protocol is now reduced to identifying the $\hat{N}^{\rm suc}$-bit sequence $\hat{\bm{x}}_{\rm ph}$ with a bounded failure probability (when averaged over $\hat{N}^{\rm fin}$) in the estimation protocol.

We show in the following that the task is further reduced to constructing a function $U(\hat{F}, \hat{N}^{\rm trash})$ that satisfies 
\begin{equation}
  {\rm Pr}\left[\hat{N}_{\rm ph}^{\rm suc} > U(\hat{F}, \hat{N}^{\rm trash}), \hat{N}^{\rm suc}\geq 1\right]\leq \epsilon \label{eq:probability_condition}
\end{equation}
for any attack in the estimation protocol, and setting the final key length to $\hat{N}^{\rm fin}=\max\{\hat{N}^{\rm suc}-H_{\rm PA}-s, 0\}$, where $H_{\rm PA}$ is defined as
\begin{equation}
    H_{\rm PA}\coloneqq \Bigl\lceil \hat{N}^{\rm suc} h\!\(\min\bigl\{U(\hat{F}, \hat{N}^{\rm trash})/\hat{N}^{\rm suc}, 0.5\bigr\}\) \Bigr\rceil.
\end{equation}
In fact, if the condition~\eqref{eq:probability_condition} is satisfied, then the number of possible patterns of $\hat{\bm{x}}_{\rm ph}$ can be bounded from above by $2^{H_{\rm PA}}$ \cite{Cover2012,matsuura2023digital} except for the failure case with its probability no larger than $\epsilon$.
When the number of candidates of $\hat{\bm{x}}_{\rm ph}$ were restricted to $2^{H_{\rm PA}}$, Bob could uniquely identify $\hat{\bm{x}}_{\rm ph}$ with failure probability no larger than $2^{-s}$ by extracting an $(H_{\rm PA}+s)$-bit syndrome from $\hat{\bm{x}}_{\rm ph}$ using a universal${}_2$ hash function \cite{complementarity,tsurumaru2013,tsurumaru2020leftover,tsurumaru2020equivalence,matsuura2023digital}.
Therefore, using the union bound, the failure probability of identifying $\hat{\bm{x}}_{\rm ph}$ can be bounded from above by $\epsilon+2^{-s}$ (when averaged over $\hat{N}^{\rm fin}$) if the condition~\eqref{eq:probability_condition} is satisfied and the final key length is set to be $\max\{\hat{N}^{\rm suc}-H_{\rm PA}-s,0\}$ at the end of the estimation protocol.

Finally, we mention that Step~$5'$ in the virtual protocol realizes a phase error correction.
In fact, the $(\hat{N}^{\rm suc}-\hat{N}^{\rm fin})$-bit syndrome extraction in the $X$ bases via the universal${}_2$ hash function can be realized through the randomly chosen unitary at Step~$5'$ in the virtual protocol, and the error recovery in the $X$ bases of the rest $\hat{N}^{\rm fin}$ qubits by the Pauli-$Z$ action leaves $Z$-basis values of these qubits unchanged and thus irrelevant (due to the successive $Z$-basis measurement at Step~$5'$).  Since a unitary $U(C^{-1})$ that acts as the matrix $C^{-1}$ in the $Z$ bases acts as $C^{\top}$ in the $X$ bases, i.e., $U(C^{-1})\ket{\bm{x}_X}=\ket{C^{\top}\bm{x}_X}$ where ${\cdot}_X$ denotes the $X$ basis, the (random) unitary that acts as the universal${}_2$ hashing in the $X$ bases of the last $(\hat{N}^{\rm suc}-\hat{N}^{\rm fin})$ qubits acts as the dual universal${}_2$ hashing in the $Z$ bases of the first $\hat{N}^{\rm fin}$ qubits \cite{tsurumaru2013,tsurumaru2020equivalence} (i.e., in the actual protocol).  Thus, we can conclude that the actual protocol is $\sqrt{2}\sqrt{\epsilon+2^{-s}}$-secret once the condition~\eqref{eq:probability_condition} is satisfied in the estimation protocol, the final key length is set as Eq.~\eqref{eq:final_key_length}, and the privacy amplification is done by the dual universal${}_2$ hashing.

Combining these, the condition \eqref{eq:probability_condition} implies that the actual protocol with the adaptive final key length given in Eq.~\eqref{eq:final_key_length} is $\varepsilon_{\rm sec}$-secure with a security parameter
$\varepsilon_{\rm sec}=\sqrt{2}\sqrt{\epsilon+2^{-s}}+\varepsilon_{\rm cor}$ \cite{complementarity,Hayashi_Tsurumaru,matsuura2019}.  From now on, we thus focus on the estimation protocol for finding a function $U(\hat{F}, \hat{N}^{\rm trash})$ to satisfy Eq.~\eqref{eq:probability_condition}.

\subsection{Phase error operator} \label{sec:phase_error_operator}
In this section, we explain how to obtain refined phase error operators that eventually lead to tighter bounds on phase errors than those in Refs.~\cite{matsuura2021, Yamano2022}.
The phase error operators depend on the choice of the controlled isometry $V_{B; A\to R}^{\rm hom}(x)$ or $V_{B; A\to R}^{\rm het}(\omega)$ in the virtual and the estimation protocol.
As we mentioned previously, when the isometries $V^{(0)}_{A\to R}(x)$ and $V^{(1)}_{A\to R}(x)$  in the controlled isometry $V_{B; A\to R}^{\rm hom}(x)$ defined in Eq.~\eqref{eq:controlled-isometry_1} satisfy $V^{(0)}_{A\to R}(x)=V^{(1)}_{A\to R}(x)$, the analysis reduces to the previous one \cite{matsuura2021}.  This is because in this case the phase error can be defined as the state $\ket{-}$ in the $X$ basis of Bob's qubit after acting Controlled-NOT operation from Bob's qubit to Alice's, as can be seen from Eqs.~\eqref{eq:controlled-isometry_1}, \eqref{eq:combined_instrument_hom}, and \eqref{eq:K_prime_hom}.  This is equivalent to observing the discrepancy between the $X$-basis values of Alice's and Bob's qubits before the action of Controlled-NOT, and thus equivalent to the definition of the phase error in Ref.~\cite{matsuura2021}.  The relation between our new analysis for Heterodyne protocol and that in Ref.~\cite{Yamano2022} can be interpreted in the same way.
Here in our proposal, we have a degree of freedom to ``twist'' the Alice's qubit system with these isometries depending on protocol parameters as well as a homodyne (resp.~heterodyne) outcome.  Unfortunately, finding the optimal choice of these isometries with respect to all protocol parameters and observables in the actual protocol may be intractable.
We thus take a suboptimal strategy; fix $V_{B; A\to R}^{\rm hom}(x)$ (resp.~$V_{B; A\to R}^{\rm het}(\omega)$) so that the probability of the phase error event $\hat{b}'=-$ in the estimation protocol is minimized for an ideal pure-loss channel \cite{pirandola2017} with transmission $\eta=\beta^2/\mu$, where $\beta$ and $\mu$ are the positive parameters predetermined in the actual protocol.
There are several reasons to do this.  One is that the loss is the dominant noise for CV-QKD protocols and thus we prioritize the robustness against it to obtain better performance.  Another is that the optimization for the pure-loss channel is much easier since we have an analytical expression of the state when the initial state Eq.~\eqref{eq:prepared_state} in the virtual protocol is subject to the pure-loss channel.
In fact, when the state $\ket{\Psi}_{A\tilde{C}}$ in Eq.~\eqref{eq:prepared_state} is put into a pure-loss channel with the channel output being $\ket{\pm\beta}_C$, the resulting state $\ket{\Phi}_{ACE}$ on systems $A, C$, and an adversary's system $E$ (i.e., an environment of the pure-loss channel) is given by
\begin{equation}
    \ket{\Phi}_{ACE} = \frac{1}{\sqrt{2}}\(\ket{0}_A \Ket{\beta}_C \Ket{\sqrt{\mu - \beta^2}}_E + \ket{1}_A \Ket{-\beta}_C \Ket{-\sqrt{\mu - \beta^2}}_E\).
    \label{eq:initial_state}
\end{equation} 
Tracing out the system $E$, the reduced state $\Phi_{AC}$ is given by
\begin{equation}
    \Phi_{AC} = (1 - q_{\mu,\beta}) \ket{\phi_+}\!\bra{\phi_+}_{AC} + q_{\mu,\beta} \ket{\phi_-}\!\bra{\phi_-}_{AC},
    \label{eq:pure_loss_output}
\end{equation}
where 
\begin{align}
    \ket{\phi_+}_{AC} &\coloneqq \frac{1}{\sqrt{2}}(\ket{0}\ket{\beta} + \ket{1}\ket{-\beta}) = \ket{+}_A\otimes\Pi_{\rm ev}\ket{\beta}_C + \ket{-}_A\otimes\Pi_{\rm od}\ket{\beta}_C, \label{eq:phi_plus}\\
    \ket{\phi_-}_{AC} &\coloneqq \frac{1}{\sqrt{2}}(\ket{0}\ket{\beta} - \ket{1}\ket{-\beta}) = \ket{+}_A\otimes\Pi_{\rm od}\ket{\beta}_C + \ket{-}_A\otimes\Pi_{\rm ev}\ket{\beta}_C = (Z_A\otimes I_C) \ket{\phi_+}_{AC}, \label{eq:phi_minus}
\end{align}
and
\begin{equation}
    q_{\mu,\beta} \coloneqq \frac{1 - e^{-2(\mu-\beta^2)}}{2} (>0).
\end{equation}

Now we aim at obtaining the optimal choice of $V_{B; A\to R}^{\rm hom}(x)$ (resp.~$V_{B; A\to R}^{\rm het}(\omega)$) so that the probability that Bob obtains $\hat{b}'=-$ is minimized for the state $\Phi_{AC}$ in Eq.~\eqref{eq:pure_loss_output}.
For Homodyne protocol, we observe that
\begin{align}
    &\text{C-}X_{BA}\,\bigl({\rm Id}_A\otimes\mathcal{K}_{C\to B}^{\rm hom}(x)\bigr)(\Phi_{AC})\, \text{C-}X_{BA} \label{eq:tau_hom_def}\\
    \begin{split}
    &= 2f_{\rm suc}(x)\,\text{C-}X_{BA} \left[(1 - q_{\mu,\beta})\,\hat{P}\bigl(\bra{x}\Pi_{\rm ev}\ket{\beta}\ket{++}_{AB} + \bra{x}\Pi_{\rm od}\ket{\beta}\ket{--}_{AB}\bigr) \right.\\
    &\hspace{3.5cm} \left. + q_{\mu,\beta}\,\hat{P}\bigl(\bra{x}\Pi_{\rm od}\ket{\beta}\ket{+-}_{AB} + \bra{x}\Pi_{\rm ev}\ket{\beta}\ket{-+}_{AB}\bigr)\right]\text{C-}X_{BA}
    \end{split}\\
    \begin{split}
    &= 2f_{\rm suc}(x) \left[(1 - q_{\mu,\beta})\,\hat{P}\bigl(\bra{x}\Pi_{\rm ev}\ket{\beta}\ket{+}_{A} + \bra{x}\Pi_{\rm od}\ket{\beta}\ket{-}_{A}\bigr)\otimes\ket{+}\!\bra{+}_B \right.\\
    &\hspace{3.5cm} \left.+ q_{\mu,\beta}\,\hat{P}\bigl(\bra{x}\Pi_{\rm od}\ket{\beta}\ket{+}_{A} + \bra{x}\Pi_{\rm ev}\ket{\beta}\ket{-}_{A}\bigr)\otimes\ket{-}\!\bra{-}_B \right]
    \end{split}\\
    \begin{split}
    &= f_{\rm suc}(x) \left[(1 - q_{\mu,\beta})\,\hat{P}\Bigl(\sqrt{g_{\beta,1/4}(x)}\ket{0}_{A} + \sqrt{g_{-\beta,1/4}(x)}\ket{1}_{A}\Bigr)\otimes\ket{+}\!\bra{+}_B  \right. \\
    &\hspace{3.5cm} \left. + q_{\mu,\beta}\,\hat{P}\Bigl(\sqrt{g_{\beta,1/4}(x)}\ket{0}_{A} - \sqrt{g_{-\beta,1/4}(x)}\ket{1}_{A}\Bigr)\otimes\ket{-}\!\bra{-}_B\right],\label{eq:phase_entangled}
    \end{split} 
\end{align}
where $\hat{P}(\psi)\coloneqq \psi \psi^{\dagger}$ (and thus $\hat{P}(\ket{\psi})=\ket{\psi}\!\bra{\psi}$), and $g_{m,V}$ is the normal distribution with the mean $m$ and the variance $V$, i.e.,
\begin{equation}
    g_{m,V}(x) \coloneqq \frac{1}{\sqrt{2\pi V}} \exp\left[-\frac{(x-m)^2}{2V}\right].
\end{equation}
In the above (i.e., Eqs.~\eqref{eq:tau_hom_def}--\eqref{eq:phase_entangled}), the first equality follows from Eq.~\eqref{eq:K_x}, and the third follows from Eq.~\eqref{eq:wave_func_coherent} as well as the fact that $\beta$ is real.
We define $\tau_{AB}^{\rm hom}(x)$ as
\begin{equation}
    \begin{split}
    \tau_{AB}^{\rm hom}(x) \coloneqq 
    &(1 - q_{\mu,\beta})\,\hat{P}\Bigl(\sqrt{g_{\beta,1/4}(x)}\ket{0}_{A} + \sqrt{g_{-\beta,1/4}(x)}\ket{1}_{A}\Bigr)\otimes\ket{+}\!\bra{+}_B   \\
    &\hspace{2.5cm}+ q_{\mu,\beta}\,\hat{P}\Bigl(\sqrt{g_{\beta,1/4}(x)}\ket{0}_{A} - \sqrt{g_{-\beta,1/4}(x)}\ket{1}_{A}\Bigr)\otimes\ket{-}\!\bra{-}_B.
    \end{split} \label{eq:tau_hom}
\end{equation}
From Eqs.~\eqref{eq:controlled-isometry_1}, \eqref{eq:K_prime_hom}, \eqref{eq:phase_entangled}, and \eqref{eq:tau_hom}, the probability density of an outcome $x$ with occurrence of the phase error is given by
\begin{align}
    &\mathrm{Tr}\Bigl[\ket{-}\!\bra{-}_B \,{\cal K}'^{\,\rm hom}_{AC\to B}(x)(\Phi_{AC})\Bigr] \nonumber\\
    \begin{split}&= \frac{f_{\rm suc}(x)}{2}  \,\mathrm{Tr}\Bigl[\bra{0}_B\tau_{AB}^{\rm hom}(x)\ket{0}_B + \bra{1}_B\tau_{AB}^{\rm hom}(x)\ket{1}_B \label{eq:phase_error_prob}\\
    & \qquad  - \left(V_{A\to R}^{(1)}(x)\right)^{\dagger}V_{A\to R}^{(0)}(x) \bra{0}_B\tau_{AB}^{\rm hom}(x)\ket{1}_B -  \bra{1}_B\tau_{AB}^{\rm hom}(x)\ket{0}_B \left(V_{A\to R}^{(0)}(x)\right)^{\dagger} V_{A\to R}^{(1)}(x)\Bigr]
    \end{split}\\
    &= f_{\rm suc}(x)\left[\frac{1}{2}\; \mathrm{Tr}\left(\tau_{AB}^{\rm hom}(x) \right) - \mathrm{Re}\left(\mathrm{Tr}\left[ \left(V_{A\to R}^{(1)}(x)\right)^{\dagger} V_{A\to R}^{(0)}(x) \bra{0}_B\tau_{AB}^{\rm hom}(x)\ket{1}_B \right]\right)\right] \\
    &\geq f_{\rm suc}(x)\left[\frac{1}{2}\; \mathrm{Tr}\left(\tau_{AB}^{\rm hom}(x) \right) - \left\| \bra{0}_B\tau_{AB}^{\rm hom}(x)\ket{1}_B \right\|_1\right], \label{eq:minimum_phase_err_prob}
\end{align} 
where the last inequality follows from the matrix H\"{o}lder inequality.  If we write the polar decomposition of $\bra{0}_B\tau_{AB}^{\rm hom}(x)\ket{1}_B$ as $W_{A}^{\rm hom}(x) \bigl| \bra{0}_B\tau_{AB}^{\rm hom}(x)\ket{1}_B \bigr|$, the equality in \eqref{eq:minimum_phase_err_prob} can be achieved by setting 
\begin{equation}
    \left(V_{A\to R}^{(1)}(x)\right)^{\dagger}V_{A\to R}^{(0)}=\left(W_{A}^{{\rm hom}}(x)\right)^{\dagger}. \label{eq:Vs_to_W_hom}
\end{equation}
From Eq.~\eqref{eq:tau_hom}, $\bra{0}_B\tau_{AB}^{\rm hom}(x)\ket{1}_B$ is given by
\begin{equation}
    \begin{split}
    \bra{0}_B\tau_{AB}^{\rm hom}(x)\ket{1}_B &= \frac{1}{2}\left[(1 - q_{\mu,\beta})\,\hat{P}\Bigl(\sqrt{g_{\beta,1/4}(x)}\ket{0}_{A} + \sqrt{g_{-\beta,1/4}(x)}\ket{1}_{A}\Bigr) \right.\\ 
    &\hspace{3cm}\left. - q_{\mu,\beta}\,\hat{P}\Bigl(\sqrt{g_{\beta,1/4}(x)}\ket{0}_{A} - \sqrt{g_{-\beta,1/4}(x)}\ket{1}_{A}\Bigr)\right],\label{eq:tau_hom_0_1}
    \end{split}
\end{equation}
which is hermitian with two eigenvalues having opposite signs.  Let $\ket{u_+^{\rm hom}(x)}_A$ and $\ket{u_-^{\rm hom}(x)}_A$ be eigenvectors of $\bra{0}_B\tau_{AB}^{\rm hom}(x)\ket{1}_B$ with positive and negative eigenvalues, respectively.  Then, $W_A^{\rm hom}(x)$ is given by
\begin{equation}
    W_A^{\rm hom}(x) = \ket{u_+^{\rm hom}(x)}\!\bra{u_+^{\rm hom}(x)}_A - \ket{u_-^{\rm hom}(x)}\!\bra{u_-^{\rm hom}(x)}_A.\label{eq:W_A_hom}
\end{equation}
The explicit form of $\ket{u_\pm^{\rm hom}(x)}_A$ is given in Eq.~\eqref{eq:form_of_u_pm} in Appendix~\ref{sec:operator_ineq}.
The choice of the isometry $V_{A\to R}^{(j)}(x)$ to satisfy Eq.~\eqref{eq:Vs_to_W_hom} is not unique; one of the reasons is the arbitrariness of the dimension of the system $R$. 
Here, we set $R=A$ and set
\begin{equation}
    \begin{split}
        V_{A\to R}^{(0)}(x) &= I_A, \\
        V_{A\to R}^{(1)}(x) &= W_A^{\rm hom}(x) ,
    \end{split}\label{eq:V_i_hom}
\end{equation}
which, with Eqs.~\eqref{eq:controlled-isometry_1} and \eqref{eq:W_A_hom}, leads to
\begin{equation}
    V^{\rm hom}_{B;A\to A}(x) = \left[\ket{u_+^{\rm hom}(x)}\!\bra{u_+^{\rm hom}(x)}_A \otimes I_B + \ket{u_-^{\rm hom}(x)}\!\bra{u_-^{\rm hom}(x)}_A\otimes Z_B\right] \text{C-}X_{BA} .\label{eq:controlled_iso_hom_alt}
\end{equation}

For Heterodyne protocol, the calculation similar to Eqs.~\eqref{eq:tau_hom_def}--\eqref{eq:tau_hom} leads to
\begin{align} 
    &\text{C-}X_{BA}\,\bigl({\rm Id}_A\otimes\mathcal{K}_{C\to B}^{\rm het}(\omega)\bigr)(\Phi_{AC})\, \text{C-}X_{BA}\\
    \begin{split}
        &= \frac{2f_{\rm suc}(\omega_r)}{\pi}\,\text{C-}X_{BA} \left[(1 - q_{\mu,\beta})\,\hat{P}\bigl(\bra{\omega}\Pi_{\rm ev}\ket{\beta}\ket{++}_{AB} + \bra{\omega}\Pi_{\rm od}\ket{\beta}\ket{--}_{AB}\bigr) \right.\\
        &\hspace{4cm} \left. + q_{\mu,\beta}\,\hat{P}\bigl(\bra{\omega}\Pi_{\rm od}\ket{\beta}\ket{+-}_{AB} + \bra{\omega}\Pi_{\rm ev}\ket{\beta}\ket{-+}_{AB}\bigr)\right]\text{C-}X_{BA}
    \end{split}\\
    \begin{split}
        &= \frac{f_{\rm suc}(\omega_r)}{\pi} \left[(1 - q_{\mu,\beta})\,\hat{P}\bigl(\braket{\omega|\beta}\ket{0}_{A} + \braket{-\omega|\beta}\ket{1}_{A}\bigr)\otimes\ket{+}\!\bra{+}_B  \right. \\
        &\hspace{4cm} \left. + q_{\mu,\beta}\,\hat{P}\bigl(\braket{\omega|\beta}\ket{0}_{A} - \braket{-\omega|\beta}\ket{1}_{A}\bigr)\otimes\ket{-}\!\bra{-}_B\right],
    \end{split} 
\end{align}
Since $\braket{\omega|\beta} = e^{-\frac{1}{2}[(\omega_r - \beta)^2 + \omega_i^2 + 2i\omega_i\beta]}$ is not real in general, we heuristically insert a $\theta$-rotation around the $Z$ basis
\begin{equation}
    R^Z_{A}(\theta)\coloneqq \exp(-i\theta Z_A/2) \label{eq:rotation_around_z}
\end{equation} 
in order to have
\begin{align}
        &\left(R^Z_{A}(2\omega_i\beta)\right)^{\dagger} \text{C-}X_{BA}\,\bigl({\rm Id}_A\otimes\mathcal{K}_{C\to B}^{\rm het}(\omega)\bigr)(\Phi_{AC})\, \text{C-}X_{BA} \,R^{Z}_{A}(2\omega_i\beta) \\
    \begin{split}
        &= \frac{e^{-\omega_i^2}f_{\rm suc}(\omega_r)}{\sqrt{\pi}} \left[(1 - q_{\mu,\beta})\,\hat{P}\Bigl(\sqrt{g_{\beta,1/2}(\omega_r)}\ket{0}_{A} + \sqrt{g_{-\beta,1/2}(\omega_r)}\ket{1}_{A}\Bigr)\otimes\ket{+}\!\bra{+}_B  \right. \\
        &\hspace{3.5cm} \left. + q_{\mu,\beta}\,\hat{P}\Bigl(\sqrt{g_{\beta,1/2}(\omega_r)}\ket{0}_{A} - \sqrt{g_{-\beta,1/2}(\omega_r)}\ket{1}_{A}\Bigr)\otimes\ket{-}\!\bra{-}_B\right]. 
    \end{split}
\end{align}
We define $\tau_{AB}^{\rm het}(\omega_r)$ as 
\begin{equation}
    \begin{split}
    \tau_{AB}^{\rm het}(\omega_r) &\coloneqq (1 - q_{\mu,\beta})\,\hat{P}\Bigl(\sqrt{g_{\beta,1/2}(\omega_r)}\ket{0}_{A} + \sqrt{g_{-\beta,1/2}(\omega_r)}\ket{1}_{A}\Bigr)\otimes\ket{+}\!\bra{+}_B  \\
    &\hspace{2.5cm} + q_{\mu,\beta}\,\hat{P}\Bigl(\sqrt{g_{\beta,1/2}(\omega_r)}\ket{0}_{A} - \sqrt{g_{-\beta,1/2}(\omega_r)}\ket{1}_{A}\Bigr)\otimes\ket{-}\!\bra{-}_B.
    \end{split}\label{eq:tau_het}
\end{equation}
Thus, the structure of the matrix $\tau_{AB}^{\rm het}(\omega_r)$ is essentially the same as $\tau_{AB}^{\rm hom}(x)$ of Homodyne protocol.  (The heuristics of inserting $R^Z_{A}(\theta)$ above is for this reduction.)
In the same way as Homodyne protocol, the probability density of outcome $\omega$ with the occurrence of a phase error is given by
\begin{align}
    &\mathrm{Tr}\Bigl[\ket{-}\!\bra{-}_B \,{\cal K}'^{\,\rm het}_{AC\to B}(\omega)(\Phi_{AC})\Bigr]\nonumber\\
    \begin{split}
    &= \frac{e^{-\omega_i^2}f_{\rm suc}(\omega_r)}{\sqrt{\pi}}\left[\frac{1}{2}\;\mathrm{Tr}\left(\tau_{AB}^{\rm het}(\omega_r)\right) \right.\\
    &\hspace{3cm}\left. - \mathrm{Re}\left(\mathrm{Tr}\left[ \bigl(V'^{(1)}_{A\to R}(\omega_r)\bigr)^{\dagger} V'^{(0)}_{A\to R}(\omega_r) R^Z_{A}(2\omega_i\beta) \bra{0}_B\tau_{AB}^{\rm het}(\omega_r)\ket{1}_B \left(R^Z_{A}(2\omega_i\beta)\right)^{\dagger}  \right]\right) \right] 
    \end{split}\\
    & \geq \frac{e^{-\omega_i^2}f_{\rm suc}(\omega_r)}{\sqrt{\pi}} \left[\frac{1}{2}\;\mathrm{Tr}\left(\tau_{AB}^{\rm het}(\omega_r)\right) - \left\| \bra{0}_B\tau_{AB}^{\rm het}(\omega_r)\,\ket{1}_B  \right\|_1 \right]. \label{eq:minimum_phase_err_prob_het}
\end{align}
If we write the polar decomposition of $\bra{0}_B\tau_{AB}^{\rm het}(\omega_r)\,\ket{1}_B$ by $W_{A}^{\rm het}(\omega_r) \bigl|\bra{0}_B\tau_{AB}^{\rm het}(\omega_r)\,\ket{1}_B\bigr|$, then the equality of Eq.~\eqref{eq:minimum_phase_err_prob_het} can be achieved by setting 
\begin{equation}
    \left(R^Z_{A}(2\omega_i\beta)\right)^{\dagger}\left(V'^{(1)}_{A\to R}(\omega)\right)^{\dagger} V'^{(0)}_{A\to R}(\omega)R^Z_{A}(2\omega_i\beta) =\left(W_{A}^{\rm het}(\omega_r)\right)^{\dagger}. \label{eq:Vs_to_W_het}
\end{equation}
From Eq.~\eqref{eq:tau_het}, $\bra{0}_B\tau_{AB}^{\rm het}(\omega_r)\,\ket{1}_B$ is given by
\begin{equation}
    \begin{split}
        \bra{0}_B \tau_{AB}^{\rm het}(\omega_r)\,\ket{1}_B 
        &= \frac{1}{2} \left[(1 - q_{\mu,\beta})\,\hat{P}\Bigl(\sqrt{g_{\beta,1/2}(\omega_r)}\ket{0}_{A} + \sqrt{g_{-\beta,1/2}(\omega_r)}\ket{1}_{A}\Bigr) \right. \\
        &\hspace{3cm} \left. - q_{\mu,\beta}\,\hat{P}\Bigl(\sqrt{g_{\beta,1/2}(\omega_r)}\ket{0}_{A} - \sqrt{g_{-\beta,1/2}(\omega_r)}\ket{1}_{A}\Bigr)\right], \label{eq:tau_het_0_1}
    \end{split}
\end{equation}
which is hermitian.
Let $\ket{u_+^{\rm het}(\omega_r)}_A$ and $\ket{u_-^{\rm het}(\omega_r)}_A$ be eigenvectors of $\bra{0}_B \tau_{AB}^{\rm het}(\omega_r)\,\ket{1}_B $ with positive and negative eigenvalues, respectively.  Then, $W_A^{\rm het}(\omega_r)$ is given by 
\begin{equation}
    W_A^{\rm het}(\omega_r) = \ket{u_+^{\rm het}(\omega_r)}\!\bra{u_+^{\rm het}(\omega_r)}_A - \ket{u_-^{\rm het}(\omega_r)}\!\bra{u_-^{\rm het}(\omega_r)}_A. \label{eq:W_A_het}
\end{equation}
We can choose $V'^{(j)}_{A\to R}(\omega)$ to satisfy Eq.~\eqref{eq:Vs_to_W_het} in the same way as Homodyne protocol.  We set $R=A$ and set
\begin{equation}
    \begin{split}
        V'^{(0)}_{A\to R}(\omega) &= \left(R^Z_{A}(2\omega_i\beta)\right)^{\dagger}, \\
        V'^{(1)}_{A\to R}(\omega) &= W_A^{\rm het}(\omega_r)\left(R^Z_{A}(2\omega_i\beta)\right)^{\dagger},
    \end{split} \label{eq:V_i_het}
\end{equation}
which, with Eqs.~\eqref{eq:controlled-isometry_2} and \eqref{eq:W_A_het}, leads to
\begin{equation}
    V^{\rm het}_{B;A\to A}(\omega) = \left[\ket{u_+^{\rm het}(\omega_r)}\!\bra{u_+^{\rm het}(\omega_r)}_A \otimes I_B + \ket{u_-^{\rm het}(\omega_r)}\!\bra{u_-^{\rm het}(\omega_r)}_A\otimes Z_B\right]\left(R^Z_{A}(2\omega_i\beta)\right)^{\dagger} \text{C-}X_{BA} .\label{eq:controlled_iso_het_alt}
\end{equation}

We thus obtained the optimal choice of the controlled isometry $V_{B; A\to R}^{\rm hom}(x)$ in Eq.~\eqref{eq:controlled_iso_hom_alt} for Homodyne protocol (resp.~$V_{B; A\to R}^{\rm het}(\omega)$ in Eq.~\eqref{eq:controlled_iso_het_alt} for Heterodyne protocol) so that the probability that Bob obtains $\hat{b}'=-$ is minimized for the pure-loss channel.
As explained previously, we set $V_{A\to R}^{(j)}(x)$ to the one in Eq.~\eqref{eq:V_i_hom} (resp.~$V'^{(j)}_{A\to R}(\omega)$ to the one in Eq.~\eqref{eq:V_i_het}) also for arbitrary channels, i.e., arbitrary coherent attacks by Eve.  This choice is suboptimal for general channels but is expected to be close to optimal for channels that are close to the pure-loss one.  
Now that the controlled isometry $V_{B; A\to A}^{\rm hom}(x)$ (resp.~$V_{B; A\to A}^{\rm het}(\omega)$) is fixed, we can interpret the event that Bob announces ``success'' and obtains $\hat{b}'=-$ (i.e., the phase error) at the signal round of Estimation protocol as the outcome of a generalized measurement on Alice's qubit $A$ and the optical pulse $C$, and define the corresponding POVM element $M_{\rm ph}^{\rm hom /het}$ (i.e., the phase error operator) through Eq.~\eqref{eq:combined_instrument_hom} (resp.~Eq.~\eqref{eq:combined_instrument_het}) as
\begin{align}
    M^{\rm hom}_{\rm ph} &\coloneqq {\cal F}^{{\rm hom}\; \ddagger}_{AC\to B}\bigl(\ket{-}\!\bra{-}_{B}\bigr)= \int_{-\infty}^{\infty} dx\, \left({\cal K}'^{\,\rm hom}_{AC\to B}(x)\right)^{\ddagger} \bigl(\ket{-}\!\bra{-}_{B}\bigr),\\
    M^{\rm het}_{\rm ph} &\coloneqq {\cal F}^{{\rm het} \; \ddagger}_{AC\to B}\bigl(\ket{-}\!\bra{-}_{B}\bigr) = \iint_{-\infty}^{\infty} d\omega_r\,d\omega_i\, \left( {\cal K}'^{\,\rm het}_{AC\to B}(\omega)\right)^{\ddagger} \bigl(\ket{-}\!\bra{-}_{B}\bigr),
\end{align}
where $\ddagger$ denotes the adjoint map.  Then, for any density operator $\rho$ on the joint system $AC$, $M_{\rm ph}^{\rm hom}$ (resp.~$M_{\rm ph}^{\rm het}$) satisfies
\begin{equation}
    \mathbb{E}_{\rho}\left[\hat{N}^{{\rm suc}\,(i)}_{\rm ph}\right] = p_{\rm sig}\mathrm{Tr}\!\left[\rho\, M_{\rm ph}^{\rm hom /het}\right] \label{eq:N_suc_expectation}
\end{equation}
in Homodyne (resp.~Heterodyne) protocol, where $\hat{N}_{\rm ph}^{{\rm suc}\,(i)}$ is defined in Estimation protocol (see Box 3).
For Homodyne protocol, by using Eqs.~\eqref{eq:I_homo}, \eqref{eq:K_prime_hom}, and \eqref{eq:controlled_iso_hom_alt}, we have
\begin{align}
    M^{\rm hom}_{\rm ph} 
    &=  \int_{-\infty}^{\infty} dx\, \left[I_A\otimes\bigl(K^{\rm hom}_{\rm suc}(x)\bigr)^{\dagger}\right] \bigl(V^{\rm hom}_{B;A\to A}(x)\bigr)^{\dagger} \bigl(I_A\otimes \ket{-}\!\bra{-}_B\bigr)V^{\rm hom}_{B;A\to A}(x) \left[I_A\otimes K^{\rm hom}_{\rm suc}(x)\right]\\
    \begin{split}
    &=  \int_{-\infty}^{\infty} dx    \left[\hat{P}\!\left(\bigl[I_A\otimes \bigl(K^{\rm hom}_{\rm suc}(x)\bigr)^{\dagger}\bigr]\, \text{C-}X_{BA} \ket{u_+^{\rm hom}(x)}_A\otimes \ket{-}_B  \right)\right.\\
    & \hspace{3cm} \left.+ \hat{P}\left(\bigl[I_A\otimes \bigl(K^{\rm hom}_{\rm suc}(x)\bigr)^{\dagger}\bigr]\, \text{C-}X_{BA} \ket{u_-^{\rm hom}(x)}_A\otimes \ket{+}_B \right)\right],
    \end{split}\label{eq:phase_error_hom}
\end{align}
where we used the fact that the adjoint map of the tracing-out $\mathrm{Tr}_{A}$ is taking the tensor product with $I_A$.  Using the relation $\text{C-}X_{BA} = \ket{+}\!\bra{+}_A\otimes I_B + \ket{-}\!\bra{-}_A \otimes Z_B$ as well as Eq.~\eqref{eq:K_x}, we have
\begin{equation}
    \begin{split}
        M^{\rm hom}_{\rm ph} &= \int_{-\infty}^{\infty} 2f_{\rm suc}(x) dx  \left[\hat{P}\!\left(\Pi^{(+,{\rm od}),(-,{\rm ev})}_{AC}\ket{u_+^{\rm hom}(x)}_A\otimes\ket{x}_C \right) \right. \\
        &\hspace{4cm} \left.  +  \hat{P}\!\left(\Pi^{(-,{\rm od}),(+,{\rm ev})}_{AC}\ket{u_-^{\rm hom}(x)}_A\otimes\ket{x}_C \right) \right],
    \end{split}
\label{eq:hom_direct_sum}
\end{equation}
where two orthogonal projections $\Pi_{AC}^{(+,\mathrm{od}),(-,\mathrm{ev})}$ and $\Pi_{AC}^{(-,\mathrm{od}), (+,\mathrm{ev})}$ are defined as
\begin{align}
    \Pi_{AC}^{(+,\mathrm{od}), (-,\mathrm{ev})} & \coloneqq \ket{+}\!\bra{+}_A \otimes \Pi_{\rm od}+\ket{-}\!\bra{-}_A\otimes \Pi_{\rm ev}, \label{eq:plus_od_minus_ev}\\
    \Pi_{AC}^{(-,\mathrm{od}), (+,\mathrm{ev})} & \coloneqq \ket{-}\!\bra{-}_A\otimes \Pi_{\rm od} + \ket{+}\!\bra{+}_A\otimes \Pi_{\rm ev}.\label{eq:minus_od_plus_ev}
\end{align}
A similar relation holds for Heterodyne protocol by replacing $K^{\rm hom}_{\rm suc}(x)$ with $K^{\rm het}_{\rm suc}(\omega)$ and $V_{B;A\to A}^{\rm hom}(x)$ with $V_{B;A\to A}^{\rm het}(\omega)$ as well as using Eqs.~\eqref{eq:K_omega}, \eqref{eq:controlled_iso_het_alt}, \eqref{eq:plus_od_minus_ev}, and \eqref{eq:minus_od_plus_ev}:
\begin{align}
    \begin{split}
    M^{\rm het}_{\rm ph} &= \iint_{-\infty}^{\infty}  d\omega_r d\omega_i  \left[\hat{P}\!\left(\left[I_A\otimes\bigl(K^{\rm het}_{\rm suc}(\omega)\bigr)^{\dagger}\right] \text{C-}X_{BA}\, R^Z_{A}(2\omega_i\beta) \ket{u_+^{\rm het}(\omega_r)}_A\otimes \ket{-}_B\right) \right.\\
    & \hspace{3cm} \left. + \hat{P}\!\left(\left[I_A\otimes\bigl(K^{\rm het}_{\rm suc}(\omega)\bigr)^{\dagger}\right] \text{C-}X_{BA}\, R^Z_{A}(2\omega_i\beta)\ket{u_-^{\rm het}(\omega_r)}_A\otimes \ket{+}_B\right) \right]
    \end{split}\\
    \begin{split}
    &= \iint_{-\infty}^{\infty}\frac{2f_{\rm suc}(\omega_r)}{\pi} d\omega_r d\omega_i  \left[\hat{P}\!\left(\Pi_{AC}^{(+,\mathrm{od}), (-,\mathrm{ev})}R^Z_{A}(2\omega_i\beta)\ket{u_+^{\rm het}(\omega_r)}_A\otimes \ket{\omega}_C\right) \right.\\
    &\hspace{5cm} \left.+ \hat{P}\!\left(\Pi_{AC}^{(-,\mathrm{od}), (+,\mathrm{ev})}R^Z_{A}(2\omega_i\beta)\ket{u_-^{\rm het}(\omega_r)}_A\otimes \ket{\omega}_C\right)\right].
    \end{split}
     \label{eq:phase_error_het}
\end{align}
Using Eq.~\eqref{eq:wave_func_coherent}, we observe that
\begin{align}
    \frac{1}{\pi} \int d\omega_i \exp(\pm 2i\omega_i \beta) \ket{\omega}\!\bra{\omega} &= \frac{1}{\pi} \iiint d\omega_i dx dx' \sqrt{\frac{2}{\pi}} e^{\pm 2i\omega_i \beta -(x-\omega_r)^2 + 2i\omega_i x - (x'-\omega_r)^2 - 2i\omega_i x'} \ket{x}\!\bra{x'} \\ &= 2\iint dx dx'\, \delta(2(x\pm\beta - x')) \ket{x}\!\braket{x|\omega_r}\!\braket{\omega_r|x'}\!\bra{x'}  \\ &= \int dx \ket{x}\!\braket{x|\omega_r}\!\braket{\omega_r|x\pm \beta}\!\bra{x\pm\beta}.
\end{align}
Applying this to Eq.~\eqref{eq:phase_error_het} and changing the integration variable appropriately, we have
\begin{equation}
    \begin{split}
    M^{\rm het}_{\rm ph} &= \iint_{-\infty}^{\infty} 2f_{\rm suc}(\omega_r) d\omega_r dx \left[\hat{P}\left(\Pi_{AC}^{(+,\mathrm{od}), (-,\mathrm{ev})} O_{AC}^{\beta}(x)\ket{u_+^{\rm het}(\omega_r)}_A \otimes \ket{\omega_r}_C\right) \right.\\
    & \hspace{5cm} \left.  + \hat{P}\left(\Pi_{AC}^{(-,\mathrm{od}), (+,\mathrm{ev})} O_{AC}^{\beta}(x)\ket{u_-^{\rm het}(\omega_r)}_A \otimes \ket{\omega_r}_C\right)\right],
    \end{split} \label{eq:het_integrated}
\end{equation}
where the operator $O_{AC}^{\beta}(x)$ is defined as 
\begin{equation}
    O_{AC}^{\beta}(x)\coloneqq \ket{0}\!\bra{0}_A\otimes \ket{x}\!\bra{x}_C + \ket{1}\!\bra{1}_A\otimes \ket{x-\beta}\!\bra{x-\beta}_C.\label{eq:def_O}
\end{equation}

\subsection{Finite-size analysis} \label{sec:finite_size_analysis}
Since the phase error operator was defined on systems $A$ and $C$, we can follow essentially the same analysis as that in Ref.~\cite{matsuura2021}.  
Let us define the following operators:
\begin{align}
    \Pi^{\rm fid} &\coloneqq \ket{0}\!\bra{0}_A\otimes \ket{\beta}\!\bra{\beta}_C + \ket{1}\!\bra{1}_A\otimes \ket{-\beta}\!\bra{-\beta}_C \label{eq:def_of_pi_fid}\\
    &= \ket{\phi_-}\!\bra{\phi_-}_{AC} + \ket{\phi_+}\!\bra{\phi_+}_{AC}, \\
    \Pi^{\rm trash}_{-} &\coloneqq \ket{-}\!\bra{-}_A\otimes I_{C},
\end{align}
where $\ket{\phi_{\pm}}_{AC}$ are defined in Eqs.~\eqref{eq:phi_plus} and \eqref{eq:phi_minus}.
For any density operator $\rho$ on the joint system $AC$, these operators satisfy
\begin{align}
    \mathbb{E}_{\rho}\left[\hat{F}^{(i)}\right] &\leq p_{\rm test}\mathrm{Tr}\!\left[\rho\,\Pi^{\rm fid}\right], \label{eq:F_expectation}\\
    \mathbb{E}_{\rho}\left[\hat{Q}_{-}^{(i)}\right] &= p_{\rm trash}\mathrm{Tr}\!\left[\rho\,\Pi^{\rm trash}_{-}\right],\label{eq:Q_minus_expectation}
\end{align}
where the first inequality follows from Corollary~\ref{cor:lower_fid_coherent} in Appendix~\ref{sec:fidelity_lower_bound} as well as the definition of $\hat{F}^{(i)}$ in Estimation protocol (see Box 3), and the second equality follows from the definition of $\hat{Q}_{-}^{(i)}$ in Estimation protocol (see Box 3). 
Let $M^{\rm hom/het}[\kappa,\gamma]$ for positive numbers $\kappa$ and $\gamma$ determined prior to the protocol be defined as
\begin{equation}
    M^{\rm hom/het}[\kappa,\gamma] \coloneqq M^{\rm hom/het}_{\rm ph} + \kappa \Pi^{\rm fid} - \gamma \Pi^{\rm trash}_{-}. \label{eq:def_of_M_kappa_gamma}
\end{equation}
In Corollaries \ref{cor:ope_ineq_M} and \ref{cor:ope_ineq_M_het} in Appendix~\ref{sec:operator_ineq}, we show an inequality
\begin{equation}
    M^{\rm hom/het}[\kappa,\gamma] \leq B^{\rm hom/het}(\kappa, \gamma)\, I_{AC}
    \label{eq:ope_ineq_first}
\end{equation}
with a computable convex function $B^{\rm hom/het}(\kappa,\gamma)$.
Let $\hat{T}^{(i)}[\kappa,\gamma]$ be a linear combination of random variables at $i$th round in Estimation protocol given by
\begin{equation}
    \hat{T}^{(i)}[\kappa,\gamma] \coloneqq p_{\rm sig}^{-1}\hat{N}_{\rm ph}^{{\rm suc}\,(i)} + p_{\rm test}^{-1} \kappa \hat{F}^{(i)} - p_{\rm trash}^{-1}\gamma \hat{Q}_{-}^{(i)}.
    \label{eq:def_of_T}
\end{equation}
From Eqs.~\eqref{eq:N_suc_expectation}, \eqref{eq:F_expectation}, \eqref{eq:Q_minus_expectation}, and \eqref{eq:ope_ineq_first}, we have
\begin{equation}
    \mathbb{E}_{\rho}\!\left[\hat{T}^{(i)}[\kappa,\gamma]\right] \leq \mathrm{Tr}\left[\rho M^{\rm hom/het}[\kappa,\gamma]\right] \leq B^{\rm hom/het}(\kappa, \gamma),
\end{equation}
for any density operator $\rho$.
Furthermore, from the definition of $\hat{N}^{{\rm suc}\,(i)}_{\rm ph}$, $\hat{F}^{(i)}$, and $\hat{Q}_-^{(i)}$ in Estimation protocol (see Box 3), we have
\begin{equation}
    \min\Bigl\{p_{\rm test}^{-1}\kappa \,\min_{\nu\geq 0}\Lambda_{m,r}(\nu),-p_{\rm trash}^{-1}\gamma\Bigr\} \leq \hat{T}^{(i)}[\kappa,\gamma] \leq \max\Bigl\{p_{\rm sig}^{-1}, p_{\rm test}^{-1}\kappa \,\max_{\nu\geq 0}\Lambda_{m,r}(\nu)\Bigr\}.
\end{equation}
Let $\delta_1(\epsilon)$ be defined as
\begin{equation}
    \delta_1(\epsilon) \coloneqq \left(\max\Bigl\{p_{\rm sig}^{-1}, p_{\rm test}^{-1}\kappa \,\max_{\nu\geq 0}\Lambda_{m,r}(\nu)\Bigr\} - \min\Bigl\{p_{\rm test}^{-1}\kappa \,\min_{\nu\geq 0}\Lambda_{m,r}(\nu),-p_{\rm trash}^{-1}\gamma\Bigr\}\right)\sqrt{\frac{N}{2}\ln\!\(\frac{1}{\epsilon}\)}. \label{eq:def_delta_1}
\end{equation}
Then, by applying Proposition~\ref{prop:azuma_doob} in Appendix~\ref{sec:azuma_doob} to $\{\hat{T}^{(k)}[\kappa,\gamma]\}_{k=1,\ldots,N}$ as well as using Eqs.~\eqref{eq:def_of_T}--\eqref{eq:def_delta_1}, we have
\begin{equation}
    \mathrm{Pr}\left[\left(\sum_{k=1}^{N}\hat{T}^{(k)}[\kappa,\gamma] =\right) p_{\rm sig}^{-1}\hat{N}_{\rm ph}^{\rm suc} + p_{\rm test}^{-1} \kappa \hat{F} - p_{\rm trash}^{-1}\gamma \hat{Q}_{-} \geq N B^{\rm hom/het}(\kappa,\gamma) + \delta_1(\epsilon/2)\right] \leq \epsilon/2.
    \label{eq:ineq_for_T}
\end{equation}
Since $\hat{Q}_{-}$ is determined solely by Alice's qubits, each in the state $\mathrm{Tr}_{\tilde{C}}(\ket{\Phi}\!\bra{\Phi}_{A\tilde{C}})$ with $\ket{\Phi}_{A\tilde{C}}$ given in Eq.~\eqref{eq:prepared_state}, it follows the same statistics as a tally of $\hat{N}^{\rm trash}$ Bernoulli trials with a probability $q_{-}\coloneqq \|\bra{-}_A\ket{\Psi}_{A\tilde{C}}\|^2 = (1-e^{-2\mu})/2$.  Hence, by using the Chernoff-Hoeffding inequality~\cite{Hoeffding1963,Cover2012}, we have
\begin{equation}
    \mathrm{Pr}\left[\hat{Q}_{-} \geq q_- \hat{N}^{\rm trash} + \delta_{2}(\epsilon/2;\hat{N}^{\rm trash})\right] \leq \epsilon/2.
    \label{eq:ineq_for_Q}
\end{equation}
Here, $\delta_2(\epsilon;n)$ is defined as \cite{matsuura2021}
\begin{equation}
    \begin{cases}
        D(q_{-}+\delta_{2}(\epsilon;n)/n \| q_-) = -\frac{1}{n}\log_2(\epsilon) & (\epsilon > q_-^n) \\
        \delta_{2}(\epsilon;n) = (1 - q_-)n & (\epsilon \leq q_-^n)
    \end{cases},
\end{equation}
where 
\begin{equation}
    D(x\|y) \coloneqq x\log_2\frac{x}{y} + (1 - x)\log_2\frac{1 - x}{1 - y}
    \label{eq:kl_divergence}
\end{equation}
is the Kullback-Leibler divergence.
Combining Eqs.~\eqref{eq:ineq_for_T}, and \eqref{eq:ineq_for_Q}, by setting 
\begin{equation}
    U(\hat{F},\hat{N}^{\rm trash}) = p_{\rm sig}\bigl(N B^{\rm hom/het}(\kappa,\gamma) + \delta_1(\epsilon/2)\bigr) - \frac{p_{\rm sig}}{p_{\rm test}} \kappa \hat{F} + \frac{p_{\rm sig}}{p_{\rm trash}}\gamma \bigl(q_- \hat{N}^{\rm trash} + \delta_{2}(\epsilon/2;\hat{N}^{\rm trash})\bigr),
\end{equation}
we observe that Eq.~\eqref{eq:probability_condition} holds from the union bound.

\section{Numerical simulations} \label{sec:numerical_simulations}
We compute (the lower bound on) the net key gain per pulse (i.e., key rate $\hat{G}$) against the transmission distance with various values of excess noise at the channel output.  In this model, Bob receives Gaussian states $\rho_{\rm model}^{(\hat{a})}$ obtained by randomly displacing attenuated coherent states $\ket{(-1)^{\hat{a}} \sqrt{\eta\mu}}$ with attenuation rate $\eta$ to increase their variances via factor of $(1+\xi)$, i.e.,
\begin{equation}
    \rho_{\rm model}^{(\hat{a})} \coloneqq \frac{2}{\pi \xi}\int_{\mathbb{C}} e^{-2|\gamma|^2/\xi}\ket{(-1)^{\hat{a}} \sqrt{\eta\mu} + \gamma}\!\bra{(-1)^{\hat{a}} \sqrt{\eta\mu} + \gamma} d^2\gamma.
    \label{eq:channel_output}
\end{equation}
For simplicity, the number $N_{\rm smp}$ of the sampling rounds is set to be $N/100$, and the bit error correction efficiency $f$ in Eq.~\eqref{eq:bit_error_fraction} is to be $0.95$ \footnote{Currently, this level of efficiency may be too optimistic because the bit error correction in our protocol must succeed with probability no smaller than $1-\varepsilon_{\rm cor}/2$ without the use of the verification.}.  The acceptance probability $f_{\rm suc}(x)$ is assumed to be a step function $\Theta(x-x_{\rm th})$ with a threshold $x_{\rm th}(>0)$, where $\Theta(x)$ denotes the Heaviside step function.  The expected amplitude of the coherent state $\beta$ is chosen to be $\sqrt{\eta\mu}$.  We set the security parameter $\varepsilon_{\rm sec}=2^{-50}$, and set $\varepsilon_{\rm cor}=\varepsilon_{\rm sec}/2$ and $\epsilon = 2^{-s} = \varepsilon_{\rm sec}^2/16$.  

We assume that the number of ``success'' signal rounds $\hat{N}^{\rm suc}$ is equal to its expectation, i.e.,
\begin{align}
    \mathbb{E}[\hat{N}^{\rm suc}]  &= p_{\rm sig}N\int_{-\infty}^{\infty} (f_{\rm suc}(x) + f_{\rm suc}(-x))\bra{x}\frac{1}{2}\sum_{a\in\{0,1\}}\rho_{\rm model}^{(a)}\ket{x} dx \\
    \begin{split}
    &= p_{\rm sig}N\int_{-\infty}^{\infty} \frac{1}{2}\left[(f_{\rm suc}(x) + f_{\rm suc}(-x))\bra{x}\rho_{\rm model}^{(0)}\ket{x} \right. \\
    &\hspace{3.5cm}\left. + (f_{\rm suc}(-x) + f_{\rm suc}(x))\bra{-x}\rho_{\rm model}^{(1)}\ket{-x}\right] dx  \end{split}\\
    &= p_{\rm sig}N\int_{-\infty}^{\infty} \frac{1}{2}\sum_{a\in\{0,1\}}(f_{\rm suc}(x) + f_{\rm suc}(-x))\bra{(-1)^a x}\rho_{\rm model}^{(a)}\ket{(-1)^a x}  dx \\
    &= p_{\rm sig}N(P^+_{\rm hom} + P^-_{\rm hom}),
\end{align}
where 
\begin{align}
    P^{\pm}_{\rm hom} &\coloneqq \int_{-\infty}^{\infty}  f_{\rm suc}(\pm x)\bra{(-1)^a x} \rho_{\rm model}^{(a)} \ket{(-1)^a x} dx \\
    &= \int_{x_{\rm th}}^{\infty}  \bra{\pm(-1)^a x} \rho_{\rm model}^{(a)} \ket{\pm(-1)^a x} dx \\
    &= \frac{1}{2}{\rm erfc}\biggl((x_{\rm th} \mp \sqrt{\eta\mu})\sqrt{\frac{2}{1+\xi}} \biggr),
\end{align}
for Homodyne protocol \cite{matsuura2021}.  In the above, $P^{+}_{\rm hom}$ (resp.~$P^{-}_{\rm hom}$) denotes the probability that Alice and Bob obtain coincident (resp.~incoincident) bit values in the signal round.
(Note that the apparent $a$-dependence in the definition of $P^{\pm}_{\rm hom}$ above is not the case because of the symmetry of $\rho_{\rm model}^{(a)} $ defined in Eq.~\eqref{eq:channel_output}.)
In a similar way, we obtain for Heterodyne protocol that 
\begin{align}
    \mathbb{E}[\hat{N}^{\rm suc}] &= p_{\rm sig}N(P^+_{\rm het} + P^-_{\rm het}), \\
    P^{\pm}_{\rm het} &\coloneqq \iint_{-\infty}^{\infty}  \frac{f_{\rm suc}(\pm \omega_r)}{\pi}\bra{(-1)^a \omega} \rho_{\rm model}^{(a)} \ket{(-1)^a \omega} d\omega_r d\omega_i \\
    &= \int_{-\infty}^{\infty}d\omega_i \int_{x_{\rm th}}^{\infty} d\omega_r \frac{f_{\rm suc}(\pm \omega_r)}{\pi}\bra{(-1)^a \omega} \rho_{\rm model}^{(a)} \ket{(-1)^a \omega} \\
    &=\frac{1}{2}{\rm erfc}\biggl((x_{\rm th} \mp \sqrt{\eta\mu})\sqrt{\frac{2}{2+\xi}} \biggr).
\end{align}
We also assume that the number of ``success'' sampling rounds is equal to $(P^+_{\rm hom/het} + P^-_{\rm hom/het})N_{\rm smp}$, the number of test rounds $\hat{N}^{\rm test}$ is equal to $p_{\rm test}N$, and the number of trash rounds $\hat{N}^{\rm trash}$ is equal to $p_{\rm trash}N$.  The test outcome $\hat{F}$ is assumed to be equal to its expectation given by \cite{matsuura2021}
\begin{align}
    \mathbb{E}[\hat{F}] &= p_{\rm test}N\, \frac{1}{2}\sum_{a\in\{0,1\}}\mathbb{E}_{\rho_{\rm model}^{(a)}}[\Lambda_{m,r}(|\hat{\omega} - (-1)^a\sqrt{\eta\mu}|^2)] \\
    &= \frac{p_{\rm test}N}{1 + \xi/2}\left[1 - (-1)^{m+1}\left(\frac{\xi/2}{1 + r(1 + \xi/2)}\right)^{m+1}\right].
\end{align}
For the test function $\Lambda_{m,r}$ in the above, we adopt $m=1$ and $r=0.4120$, which leads to $(\max_{\nu\geq 0} \Lambda_{m,r}(\nu), \min_{\nu\geq 0} \Lambda_{m,r}(\nu))=(2.824,-0.9932)$.
We assume that the number $\hat{E}_{\rm obs}$ of bit errors observed in the ``success'' sampling rounds is equal to its expectation $\hat{E}_{\rm obs}=P^-_{\rm hom/het} N_{\rm smp}$.   
The upper-bound $e_{\rm qber}$ on the bit error rate is thus given by Eq.~\eqref{eq:upper_bit_error_rate} with the parameters $\hat{N}^{\rm suc}$, $\hat{N}^{\rm suc}_{\rm smp}$, and $\hat{E}_{\rm obs}$ given above.
Under these assumptions, the remaining parameters to be determined are six parameters $(\mu, x_{\rm th}, p_{\rm sig}, p_{\rm test}, \kappa, \gamma)$.  We determined $(\kappa,\gamma)$ via a convex optimization using CVXPY 1.2.1 and $(\mu,x_{\rm th},p_{\rm sig}, p_{\rm test})$ via the Nelder-Mead in the scipy.minimize library in Python, for each transmission distance $L$ with the attenuation rate $\eta$ assumed to be $10^{-0.02L}$.

\begin{figure*}[t]
    \centering
    \includegraphics[width=0.99\linewidth]{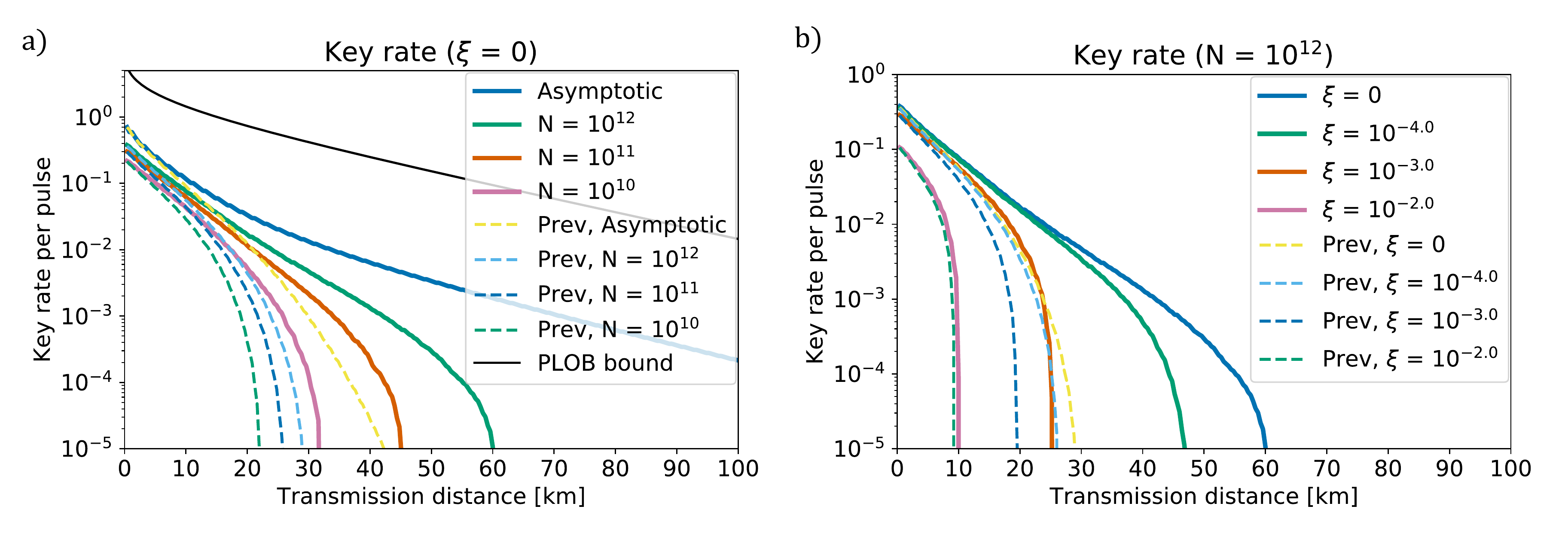}
    \caption{Key rates of the Homodyne protocol against transmission distance over an optical fiber.  The attenuation rate of the optical fiber is assumed to be $10^{-0.02L}$ with transmission distance $L$ km, an error correction efficiency $f$ in Eq.~\eqref{eq:bit_error_fraction} is set to be $0.95$, and the number of sampling rounds $N_{\rm smp}$ is set to be $N/100$.  a) Key rates when the excess noise $\xi$ at the channel output is zero; that is, the channel is pure loss.  The bold solid lines show the key rates with our refined analysis developed here, the broken lines show those with the previous analysis \cite{matsuura2021}, and the black thin line shows the PLOB bound, which is the ultimate limit of the key rate of one-way QKD \cite{pirandola2017}.  One can see that the logarithm of the asymptotic key rate decreases in parallel to the PLOB bound with our refined analysis against the transmission distance ($\gg 1$ km) as opposed to the previous results \cite{matsuura2021}.  Improvement in the key rate is sustained in the finite-size case.   b) Key rates when $N=10^{12}$ with various values of excess noise parameter $\xi$.  (The detail of the noise model is given in the main text.)  The solid lines show the key rates with our refined analysis, and the broken lines show those with the previous results \cite{matsuura2021}.  One can see that, although the key rate significantly improves for the pure-loss channel, the excess noise as high as $\xi=10^{-3}$--$10^{-2}$ degrades the performance to almost the same level as that of the previous results.}
    \label{fig:key_rates_hom}
\end{figure*}

Figure~\ref{fig:key_rates_hom} shows the key rates of Homodyne protocol for the channel model explained above.  Figures show that under the condition of low excess noise, our refined analysis results in significantly higher key rates and longer transmission distance than that of the previous results~\cite{matsuura2021} even in the finite-key case.  Furthermore, the logarithm of the asymptotic key rate in the pure-loss case (i.e., $\xi=0$) is in parallel to the PLOB bound \cite{pirandola2017} against the transmission distance; that is, it achieves a linear scaling against the channel transmission, which is known to be optimal for one-way QKD in the pure-loss channel.  When the excess noise $\xi$ is around $10^{-3.0}$--$10^{-2.0}$, however, the improvements in our refined analysis are lost.  The result of the parameter optimization implies that our refined analysis generates the key with relatively small intensity $\mu$ of the input coherent states compared to the previous analyses; e.g., the optimized input intensity $\mu$ of Homodyne protocol is $\sim 0.04$ in our refined analysis compared to $\sim 0.2$ in the previous analysis \cite{matsuura2021} at $\eta=0.1$ (i.e., 50 km) for the asymptotic pure-loss case.

The key rate of Heterodyne protocol has a similar behavior.  Figure~\ref{fig:key_rates_het} shows the key rates of Heterodyne protocol with the same noise model as above.  Figures show that our refined analysis significantly improves the key rate against the pure-loss channel, but is fragile against excess noise.  One can see, however, that, while the key rate of Heterodyne protocol is still low compared to that of Homodyne protocol, the achievable distance (i.e., the distance with a non-zero key rate) now becomes comparable with our refined analysis.  This implies that our refined analysis based on the reverse reconciliation is more effective for Heterodyne protocol.

\begin{figure*}[t]
    \centering
    \includegraphics[width=0.99\linewidth]{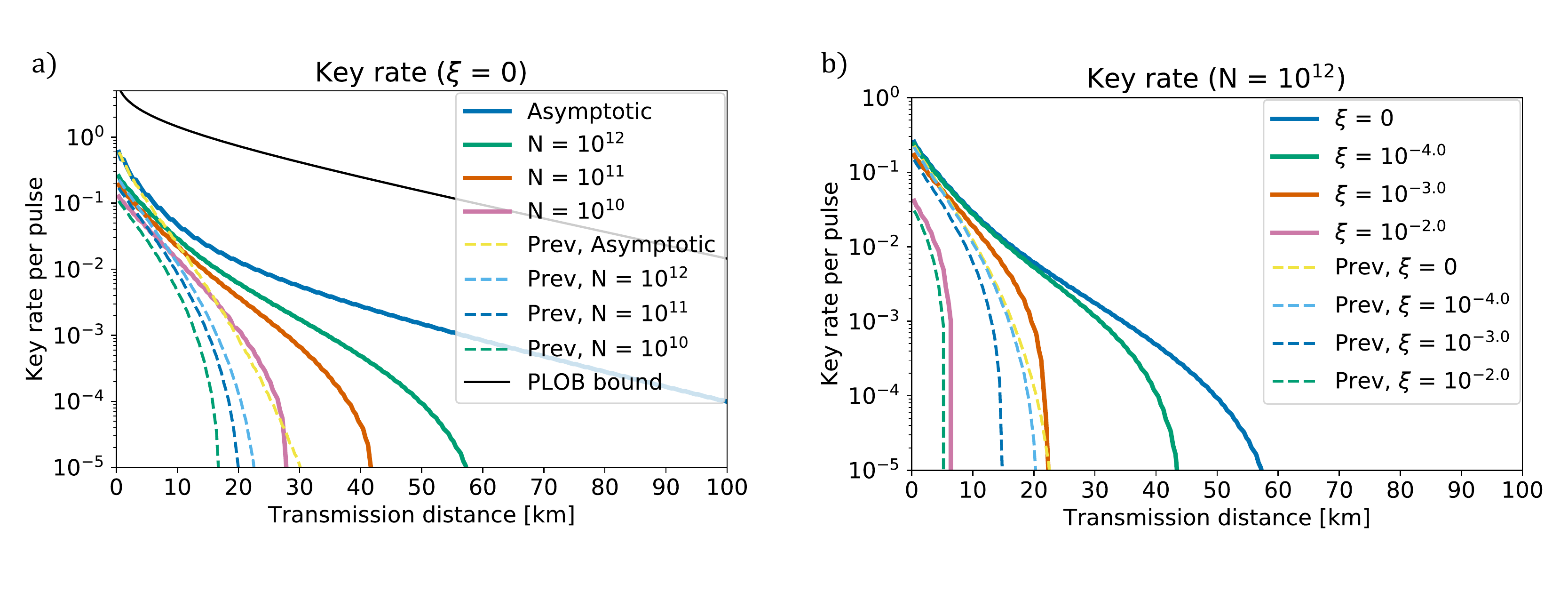}
    \caption{Key rates of the Heterodyne protocol against transmission distance over an optical fiber.  The noise models are the same as those of Homodyne protocol.  a) Key rates when the excess noise $\xi$ at the channel output is zero; that is, the channel is pure loss.  The bold solid lines show the key rates with our refined analysis developed here, the broken lines show those with the previous analysis \cite{Yamano2022}, and the black thin line shows the PLOB bound, which is the ultimate limit of the key rate of one-way QKD \cite{pirandola2017}.  One can see that the logarithm of the asymptotic key rate is in parallel to the PLOB bound when the transmission distance is large in the same way as that of Homodyne protocol.  The key rate is still less (about half) than that of Homodyne protocol.   b) Key rates when $N=10^{12}$ with various values of excess noise parameter $\xi$.  The solid lines show the key rates with our refined analysis, and the broken lines show those with the previous result \cite{Yamano2022}. 
    }
    \label{fig:key_rates_het}
\end{figure*}

\section{Discussion} \label{sec:discussion}
We propose a refined security analysis for the protocol proposed in Ref.~\cite{matsuura2021} based on the reverse reconciliation.  The motivating ideas of our refinement come from the facts that the distillability of a secret key from a quantum state is a looser condition than the distillability of an entanglement from it \cite{horodecki2006, Renes2007, horodecki2008, renner2008,horodecki2009, complementarity} and that the reverse reconciliation can increase the key rate for CV QKD protocols \cite{Silberhorn2002}.
To exploit the ideas, we developed the procedure of ``twisting'' Alice's system with $V^{\rm hom}_{B; A\to R}(x)$ (resp.~$V^{\rm het}_{B; A\to R}(\omega)$) controlled by Bob's qubit.  Similar techniques have already appeared in previous works \cite{horodecki2006, renner2008, horodecki2008, horodecki2009, complementarity, Bourassa2020}.  
Our finding is that by using the twisting operation that minimizes the phase error probability for the pure-loss channel, the protocol has asymptotically optimal scaling in the key rates both for Homodyne and Heterodyne protocols.  
This is a clear distinction from the previous results \cite{matsuura2021, Yamano2022}; there, the asymptotic key rate non-linearly decreases against the channel transmission. 
This results challenge conventionally considered limitations on the phase error correction applied to the CV QKD.  For example, in Ref.~\cite{Furrer2014reverse}, the security proof based on the entropic uncertainty relation and the phase error correction did not achieve optimal linear key-rate scaling in the asymptotic case even with reverse reconciliation.  The author ascribed this bad performance to the gap between a conditional entropy with a quantum conditioning and one with a classical conditioning, where the latter corresponds to measuring the quantum system with a fixed basis.  Here, our security proof can be seen from Eqs.~\eqref{eq:controlled_iso_hom_alt}, \eqref{eq:controlled_iso_het_alt} and the definitions of $\ket{u_\pm^{\rm hom/het}(x)}_A$ as ``twisting'' the (measurement) basis in which Alice's system is reduced to classical depending on the expected amplitude of a received optical signal as well as an observed homodyne/heterodyne outcome. This may minimize the above mentioned gap and thus achieve the asymptotically optimal scaling.  In this sense, the conventionally-thought limitation may not be an actual limitation.

The improvement in the performance remains in the finite-key case but is lost under the existence of excess noise as high as $\xi=10^{-3}\text{--}10^{-2}$ at the channel output.  
This may limit the feasibility of our binary-modulation protocol, but current theoretical progress in CV QKD reveals that the discrete-modulation CV-QKD protocols with four types of modulation have more tolerance against excess noise than those with binary modulation \cite{four_state_Leverrier,four_state_Lutken,liu2021}.  
What is important is that our security proof can be extended to the four-state protocols with binary outcomes, such as Protocol 2 in Ref.~\cite{four_state_Lutken} and a protocol in Ref.~\cite{liu2021}, by replacing the bit-extracting measurements of these protocols with the qubit-extracting maps as shown in Eq.~\eqref{eq:I_homo} and constructing the corresponding phase error operator.   This is, however, much more complicated than the previous analysis, and we leave the problem as future work.

There are several remaining questions with our present results.  
The first and foremost is whether we can obtain higher tolerance against excess noise by extending our analysis to the four-state protocols.  
As explained above, our analysis can be extended to the four-state protocols with binary outputs \cite{four_state_Lutken,liu2021}, i.e., protocols that use homodyne measurement to distinguish signals.  
With the same type of argument based on the phase error estimation, we can carry out the finite-size security proof for these protocols in principle.  
However, developing the analyses that preserve the robustness against excess noise for these protocols still has non-triviality.  
A more challenging problem is to apply our finite-size security proof to the four-state protocols with more than two outputs, such as a protocol in Ref.~\cite{four_state_Leverrier} and Protocol 1 in Ref.~\cite{four_state_Lutken}.  
In this case, the definition of phase errors is already non-trivial as opposed to those with binary outputs, and we have to develop more elaborated finite-size security proof.  
Whether we can extend our techniques to these protocols or protocols with even more constellations \cite{denys2021} is still open.

Another important theoretical question is whether the trusted-noise model can be applied to our security analysis.  In practice, even the excess noise of $\xi=10^{-3}$ at the channel output is difficult to realize if all the noises are untrusted.  Recently, efforts have been made in the field of CV QKD on how to incorporate noises that are intrinsic to apparatuses and thus inaccessible to Eve into the security proof as trusted noises.  This effectively eases the requirement on the experimental apparatuses.  In the present security analysis as well as ones in Refs.~\cite{matsuura2021, Yamano2022}, the fidelity test measures the fidelity to a pure coherent state, which cannot be naively generalized to the fidelity to a mixed state.  Whether we can incorporate trusted noises into the fidelity test may be crucial in this direction.

From the viewpoint of the feasibility of the protocol, the total number of $10^{12}$ of rounds to obtain a tolerable finite-size performance may be demanding.  
The finite-size performance may be improved by applying recently developed refinement \cite{kato2020} of the Azuma's inequality \cite{Azuma_ineq} that utilizes unconfirmed knowledge.  
What is non-trivial for the application of this is that the random variable in our application of Azuma's inequality can not directly be observed even at the end of the protocol.  
Whether we can apply the refined concentration inequality \cite{kato2020} with the information accessible in our protocol (in a similar fashion to Ref.~\cite{zhou2021}) may be an interesting problem.

\begin{acknowledgments}
    This work was supported by the Ministry of Internal Affairs and Communications (MIC), R\&D of ICT Priority Technology Project (grant number JPMI00316); Cross-ministerial Strategic Innovation Promotion Program (SIP) (Council for Science, Technology and Innovation (CSTI)); JSPS KAKENHI Grant Number JP22K13977; JSPS Overseas Research Fellowships.
\end{acknowledgments}

\bigskip
\bigskip

\subsection*{Code availability}
Computer codes to calculate the key rates are available from the corresponding author upon reasonable request.

\appendix

\section{Definition of (composable) security} \label{sec:definition_security}
In this section, we give a definition of security, which can be divided into two conditions, secrecy and correctness.  The definition satisfies the so-called universal composability \cite{composability}.
Since the correctness is satisfied by the procedure in the actual protocol, it is the secrecy that is the focus of our security proof.
In this section, when the index to denote a system is dropped from a joint quantum state, the partial trace is implicitly taken for that system; e.g., for the state $\rho_{AB}$, $\rho_A=\mathrm{Tr}_B[\rho_{AB}]$.

Let us assume that a given (prepare-and-measure) QKD protocol generates the final key bits $\hat{\bm{z}}_A^{\rm fin}$ and $\hat{\bm{z}}_B^{\rm fin}$ between Alice and Bob with the length denoted as $\hat{N}^{\rm fin}$.  When the protocol generates null key or is aborted, we set $\hat{\bm{z}}_A^{\rm fin}=\hat{\bm{z}}_B^{\rm fin}=\emptyset$ and $\hat{N}^{\rm fin}=0$.  Given the final key length $\hat{N}^{\rm fin}=N\geq 1$, let $\rho_{ABE|N}^{\rm fin}$ be a quantum state between Alice's and Bob's $N$-bit final key as well as Eve's system at the end of the actual protocol.  
\begin{definition}
    Given an arbitrary positive real number $\varepsilon_{\rm sec}>0$, a QKD protocol is called $ \varepsilon_{\rm sec}$-secure if it satisfies the following:
    \begin{equation}
        \frac{1}{2}\sum_{N\geq 1}\mathrm{Pr}(\hat{N}^{\rm fin}=N) \left\|\rho_{ABE|N}^{\rm fin} - \rho_{ABE|N}^{\rm ideal}\right\|_1 \leq \varepsilon_{\rm sec},
    \end{equation}
    where $\rho_{ABE|N}^{\rm ideal}$ is defined as
    \begin{equation}
        \rho_{ABE|N}^{\rm ideal} \coloneqq \sum_{\bm{z}\in\{0,1\}^N} \frac{1}{2^N}\ket{\bm{z}}\!\bra{\bm{z}}_A\otimes \ket{\bm{z}}\!\bra{\bm{z}}_B \otimes \rho_{E|N}^{\rm fin}.
    \end{equation}
\end{definition}

\begin{prop}\label{prop:correctness_secrecy}
    For arbitrary positive real numbers $\varepsilon_{\rm cor}>0$ and $\varepsilon_{\rm sc}>0$, let us assume that the following two conditions are satisfied:
    \begin{align}
        \mathrm{Pr}(\hat{N}^{\rm fin}\geq 1, \hat{\bm{z}}_A^{\rm fin}\neq \hat{\bm{z}}_B^{\rm fin}) &\leq \varepsilon_{\rm cor}, \label{eq:correctness_cond}\\
        \frac{1}{2}\sum_{N\geq 1}\mathrm{Pr}(\hat{N}^{\rm fin}=N) \left\|\rho_{AE|N}^{\rm fin} - \rho_{AE|N}^{\rm ideal}\right\|_1 &\leq \varepsilon_{\rm sc}.\label{eq:secrecy_cond}
    \end{align}
    Then, the protocol is $(\varepsilon_{\rm cor}+\varepsilon_{\rm sc})$-secure as long as the protocol uses direct reconciliation.
\end{prop}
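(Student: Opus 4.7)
The plan is to reduce the secrecy of the three-party state $\rho_{ABE|N}^{\rm fin}$ to the secrecy of the two-party marginal $\rho_{AE|N}^{\rm fin}$ using a standard triangle-inequality argument centered on an intermediate "corrected" state that replaces Bob's key with a copy of Alice's key. Because the protocol uses direct reconciliation, Alice's key $\hat{\bm{z}}_A^{\rm fin}$ is the canonical one, so the natural intermediate state is
\begin{equation}
    \sigma_{ABE|N} \coloneqq \sum_{\bm{a}\in\{0,1\}^N} P_A(\bm{a}|N) \ket{\bm{a}}\!\bra{\bm{a}}_A \otimes \ket{\bm{a}}\!\bra{\bm{a}}_B \otimes \rho_{E|\bm{a},N}^{\rm fin},
\end{equation}
obtained by marginalizing $\rho_{ABE|N}^{\rm fin}$ over Bob's key and then coherently copying Alice's classical register onto the $B$ system.

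First, I would insert $\sigma_{ABE|N}$ in the secrecy sum via the triangle inequality,
\begin{equation}
    \|\rho^{\rm fin}_{ABE|N} - \rho^{\rm ideal}_{ABE|N}\|_1 \le \|\rho^{\rm fin}_{ABE|N} - \sigma_{ABE|N}\|_1 + \|\sigma_{ABE|N} - \rho^{\rm ideal}_{ABE|N}\|_1,
\end{equation}
and treat the two terms separately. For the first term I would exploit the fact that $\rho^{\rm fin}_{ABE|N}$ is classical in $AB$, so the difference $\rho^{\rm fin}_{ABE|N} - \sigma_{ABE|N}$ is supported on the block indexed by pairs $(\bm{a},\bm{b})$ with $\bm{a}\neq \bm{b}$ together with the diagonal correction; writing this block decomposition explicitly and using orthogonality of the classical labels gives the bound $\|\rho^{\rm fin}_{ABE|N} - \sigma_{ABE|N}\|_1 \le 2\,\mathrm{Pr}(\hat{\bm{z}}_A^{\rm fin}\neq \hat{\bm{z}}_B^{\rm fin}\mid \hat{N}^{\rm fin}=N)$. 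Summing against $\mathrm{Pr}(\hat{N}^{\rm fin}=N)$ with the $\tfrac12$ prefactor yields a contribution of at most $\varepsilon_{\rm cor}$ by assumption \eqref{eq:correctness_cond}.

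For the second term I would argue that both $\sigma_{ABE|N}$ and $\rho^{\rm ideal}_{ABE|N}$ are obtained from their $AE$-marginals by the same CPTP operation: measure $A$ in the computational basis and coherently write its value into a fresh $B$-register. Since that copying map and the partial trace $\mathrm{Tr}_B$ are both CPTP and invert each other on these particular states, the $1$-norm is preserved, i.e.\ $\|\sigma_{ABE|N} - \rho^{\rm ideal}_{ABE|N}\|_1 = \|\sigma_{AE|N} - \rho^{\rm ideal}_{AE|N}\|_1$. A direct computation gives $\sigma_{AE|N} = \rho^{\rm fin}_{AE|N}$ (the copy-and-marginalize procedure just reproduces the original reduced state), so this term is exactly the quantity bounded by $\varepsilon_{\rm sc}$ in hypothesis \eqref{eq:secrecy_cond}. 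Combining the two bounds produces $\varepsilon_{\rm cor}+\varepsilon_{\rm sc}$ and gives the claimed security parameter.

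The proof is essentially a bookkeeping exercise and I do not expect a serious obstacle; the one point that requires a small amount of care is verifying that the copy-map equality in Step 2 holds specifically because $\sigma_{ABE|N}$ and $\rho^{\rm ideal}_{ABE|N}$ both have $B$ perfectly correlated with $A$, so the invocation of data-processing in both directions (copy followed by partial trace) is tight. The restriction to direct reconciliation enters exactly here: the corrected state $\sigma_{ABE|N}$ is built by copying Alice's key onto $B$, which is the natural move when Alice's key is the one the error-correction protocol targets; the analogous statement for reverse reconciliation would instead bound the full-state secrecy by the $\rho_{BE}$-secrecy via copying $B$ onto $A$.
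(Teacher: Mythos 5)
Your proof is correct. The paper itself gives no proof of Proposition~\ref{prop:correctness_secrecy}, deferring entirely to Ref.~\cite{complementarity} (and similar arguments appear in other security-proof surveys), so your write-up is a self-contained expansion of what that citation points to, and it uses exactly the standard device: insert the ``corrected'' hybrid state in which Bob's register is replaced by a copy of Alice's, split via the triangle inequality, bound the first leg by the correctness event, and reduce the second leg to the $AE$-marginal via the copy map and data processing. Two small remarks: in the second leg you only actually need the one-sided data-processing inequality $\|\mathcal{C}(\rho^{\rm fin}_{AE|N})-\mathcal{C}(\rho^{\rm ideal}_{AE|N})\|_1\le\|\rho^{\rm fin}_{AE|N}-\rho^{\rm ideal}_{AE|N}\|_1$ rather than the full equality (the equality argument is fine but adds nothing); and it is worth stating explicitly that the factor of $2$ in your first-leg bound is exactly absorbed by the $\tfrac12$ prefactor in the security definition, since $\sum_{N\ge1}\Pr(\hat{N}^{\rm fin}=N)\Pr(\hat{\bm z}_A^{\rm fin}\neq\hat{\bm z}_B^{\rm fin}\mid\hat{N}^{\rm fin}=N)=\Pr(\hat{N}^{\rm fin}\ge1,\,\hat{\bm z}_A^{\rm fin}\neq\hat{\bm z}_B^{\rm fin})\le\varepsilon_{\rm cor}$, which is precisely hypothesis~\eqref{eq:correctness_cond}. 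Your closing observation about why direct reconciliation selects the $A\to B$ copy (and, dually, reverse reconciliation would copy $B\to A$, matching the Remark following the proposition) is also accurate.
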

\begin{proof}
    See e.g.~Ref.~\cite{complementarity}.
\end{proof}
The condition Eq.~\eqref{eq:correctness_cond} is called the correctness condition and the condition Eq.~\eqref{eq:secrecy_cond} is called the secrecy condition.
\begin{remark}
    Proposition~\ref{prop:correctness_secrecy} holds for protocols with reverse reconciliation if the system $A$ is replaced with $B$ in the secrecy condition Eq.~\eqref{eq:secrecy_cond}.
\end{remark}

\section{Lemmas and Theorems used in the main text}

\subsection{Lower bound on the fidelity to a coherent state} \label{sec:fidelity_lower_bound}
This section summarizes the technical result of Ref.~\cite{matsuura2021}.
\begin{theorem} \label{thm:laguerre}
    For an integer $m\geq 0$ and a real number $r>0$, let $\Lambda_{m,r}(\nu)$ $(\nu\geq 0)$ be a bounded function defined as
    \begin{equation}
        \label{eq:postprocess_function}
        \Lambda_{m,r}(\nu)\coloneqq e^{-r \nu}(1+r)L_m^{(1)}((1+r)\nu),
    \end{equation}
    where $L_m^{(k)}$ is the associated Laguerre polynomials given by
    \begin{align}
      \label{eq:associate_Laguerre}
        L_n^{(k)}(\nu)&\coloneqq (-1)^k\frac{d^kL_{n+k}(\nu)}{d\nu^k},\\
        L_n(\nu)&\coloneqq \frac{e^{\nu}}{n!}\frac{d^n}{d\nu^n}(e^{-\nu}\nu^n).
    \end{align}
    Then, we have
    \begin{equation}
    \label{eq:theorem1_estimation}
    \mathbb{E}_\rho[\Lambda_{m,r}(|\hat{\omega}|^2)]= \iint_{\omega\in\mathbb{C}}\frac{d^2\omega}{\pi}\Lambda_{m,r}(|\omega|^2)\bra{\omega}\rho\ket{\omega} =\bra{0}\rho\ket{0}+\sum_{n=m+1}^{\infty}\frac{\bra{n}\rho\ket{n}}{(1+r)^n}I_{n,m},
\end{equation}
where $I_{n,m}$ are constants satisfying $(-1)^m I_{n,m}>0$.
\end{theorem}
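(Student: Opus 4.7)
The plan is to compute the integral explicitly against each Fock-basis matrix element of $\rho$ and read off the constants $I_{n,m}$. The first equality is just the defining POVM of the heterodyne measurement, so the work is all in the second equality. First I would expand $\rho=\sum_{n,n'}\rho_{nn'}\ket{n}\!\bra{n'}$ and use $\braket{\omega|n}\braket{n'|\omega}=e^{-|\omega|^2}\omega^{*n}\omega^{n'}/\sqrt{n!n'!}$. Passing to polar coordinates $\omega=\sqrt{\nu}\,e^{i\theta}$, the function $\Lambda_{m,r}(|\omega|^2)$ is rotationally symmetric, so the $\theta$-integral immediately kills all off-diagonal terms and leaves
\begin{equation}
\mathbb{E}_{\rho}[\Lambda_{m,r}(|\hat{\omega}|^2)]=\sum_{n=0}^{\infty}\bra{n}\rho\ket{n}\int_{0}^{\infty}\!d\nu\,(1+r)L_m^{(1)}((1+r)\nu)\,e^{-(1+r)\nu}\frac{\nu^n}{n!}.
\end{equation}
Changing variables to $u=(1+r)\nu$ reduces each coefficient to
$J_{n,m}\coloneqq \frac{1}{n!(1+r)^n}\int_0^{\infty}L_m^{(1)}(u)\,e^{-u}u^n\,du$.

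The technical core is to evaluate $\mathcal{J}_{n,m}\coloneqq \int_0^{\infty}L_m^{(1)}(u)e^{-u}u^n\,du$. The cleanest route is to use the generating function $\sum_{m\geq 0}L_m^{(1)}(u)\,t^m=(1-t)^{-2}\exp\!\bigl(-ut/(1-t)\bigr)$. Summing against $t^m$ and doing the elementary Gaussian-like integral in $u$ gives
\begin{equation}
\sum_{m=0}^{\infty}\mathcal{J}_{n,m}\,t^m=\frac{1}{(1-t)^2}\int_0^{\infty}e^{-u/(1-t)}u^n\,du=n!\,(1-t)^{n-1}.
\end{equation}
Reading off the coefficient of $t^m$ yields three cases: (i) for $n=0$, $\mathcal{J}_{0,m}=1$ for every $m$, contributing the $\bra{0}\rho\ket{0}$ term; (ii) for $1\leq n\leq m$, $(1-t)^{n-1}$ is a polynomial of degree $n-1<m$, so $\mathcal{J}_{n,m}=0$ and those Fock terms drop out; (iii) for $n\geq m+1$, $\mathcal{J}_{n,m}=n!\,(-1)^m\binom{n-1}{m}$.

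Substituting back gives exactly the claimed form with
\begin{equation}
I_{n,m}=(-1)^m\binom{n-1}{m}\qquad(n\geq m+1),
\end{equation}
so that $(-1)^m I_{n,m}=\binom{n-1}{m}>0$, as required. The main obstacle is the Laguerre integral $\mathcal{J}_{n,m}$; a direct expansion of $L_m^{(1)}$ into monomials is possible but produces an alternating sum whose sign is not manifest, so I would favor the generating-function evaluation above, which simultaneously handles all $(n,m)$ and makes the sign pattern $(-1)^m I_{n,m}>0$ transparent from the binomial expansion of $(1-t)^{n-1}$.
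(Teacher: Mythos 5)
The paper does not actually prove Theorem~\ref{thm:laguerre} itself: immediately after stating it, the appendix says that the proofs of Theorem~\ref{thm:laguerre} and Corollary~\ref{cor:lower_fid_coherent} are given in Ref.~\cite{matsuura2021}, so there is no in-paper proof to compare against. Your argument is correct and self-contained. Passing to polar coordinates kills the off-diagonal matrix elements of $\rho$, and the substitution $u=(1+r)\nu$ reduces the problem to the Laguerre integrals $\mathcal{J}_{n,m}=\int_0^\infty L_m^{(1)}(u)\,e^{-u}u^n\,du$. The generating-function identity $\sum_{m\geq 0} L_m^{(1)}(u)\,t^m = (1-t)^{-2}\exp\!\bigl(-ut/(1-t)\bigr)$ gives $\sum_m \mathcal{J}_{n,m}\,t^m = (1-t)^{-2}\int_0^\infty e^{-u/(1-t)}u^n\,du = n!\,(1-t)^{n-1}$ for $|t|<1$, where the exchange of sum and integral is justified, e.g., by the classical bound $|L_m^{(1)}(u)|\leq (m+1)e^{u/2}$. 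Reading off the coefficient of $t^m$ yields $\mathcal{J}_{0,m}=1$, $\mathcal{J}_{n,m}=0$ for $1\leq n\leq m$, and $\mathcal{J}_{n,m}=n!\,(-1)^m\binom{n-1}{m}$ for $n\geq m+1$, so that $I_{n,m}=(-1)^m\binom{n-1}{m}$ and $(-1)^m I_{n,m}=\binom{n-1}{m}>0$, exactly as claimed. This route is notably clean: it treats all $(n,m)$ at once, explains the vanishing for $1\leq n\leq m$ via the degree of the polynomial $(1-t)^{n-1}$, and makes the sign of $I_{n,m}$ manifest --- whereas a direct monomial expansion of $L_m^{(1)}$ leaves an alternating double sum whose sign requires an additional combinatorial step, as you correctly anticipate.
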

As a Corollary, we obtain the following.
\begin{cor}\label{cor:lower_fid_coherent}
        Let $\ket{\beta}$ $(\beta\in \mathbb{C})$ be the coherent state with amplitude $\beta$.  Then, for any $\beta \in \mathbb{C}$ and for any odd positive integer $m$, we have
        \begin{equation}
          \mathbb{E}_{\rho}[\Lambda_{m,r}(|\hat{\omega}-\beta|^2)]\leq \bra{\beta}\rho\ket{\beta}.
        \end{equation}
\end{cor}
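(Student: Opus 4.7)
The plan is to deduce the corollary from Theorem~\ref{thm:laguerre} via a displacement argument, exploiting the sign information on $I_{n,m}$ that the theorem already packages for us.

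First I would specialize Theorem~\ref{thm:laguerre} to odd $m$. Since $(-1)^m I_{n,m}>0$ and $m$ is odd, we get $I_{n,m}<0$ for every $n\geq m+1$; combined with the positivity $\bra{n}\rho\ket{n}\geq 0$ of diagonal entries of any density operator, the sum in the right-hand side of \eqref{eq:theorem1_estimation} is non-positive. This yields the ``$\beta=0$'' version of the corollary:
\begin{equation}
  \mathbb{E}_{\rho}[\Lambda_{m,r}(|\hat{\omega}|^2)] \leq \bra{0}\rho\ket{0}
\end{equation}
for every density operator $\rho$ and every odd $m$.

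Next I would upgrade to arbitrary $\beta\in\mathbb{C}$ by covariance of the heterodyne POVM under phase-space displacements. Let $D(\beta)$ be the displacement operator satisfying $D(\beta)\ket{0}=\ket{\beta}$ and $D(\beta)^{\dagger}\ket{\omega}\propto\ket{\omega-\beta}$ (the proportionality being a pure phase); define the displaced state $\rho'\coloneqq D(\beta)^{\dagger}\rho\, D(\beta)$, which is again a density operator. Two immediate consequences are (i) $\bra{0}\rho'\ket{0}=\bra{\beta}\rho\ket{\beta}$, and (ii) the Husimi $Q$-function transforms as $\bra{\omega}\rho'\ket{\omega}=\bra{\omega+\beta}\rho\ket{\omega+\beta}$. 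Applying the displayed inequality above to $\rho'$ and changing variables $\omega\mapsto\omega-\beta$ in the heterodyne integral gives
\begin{equation}
  \mathbb{E}_{\rho}[\Lambda_{m,r}(|\hat{\omega}-\beta|^2)]
  = \iint_{\mathbb{C}}\frac{d^2\omega}{\pi}\Lambda_{m,r}(|\omega|^2)\bra{\omega}\rho'\ket{\omega}
  = \mathbb{E}_{\rho'}[\Lambda_{m,r}(|\hat{\omega}|^2)]
  \leq \bra{0}\rho'\ket{0} = \bra{\beta}\rho\ket{\beta},
\end{equation}
which is the claim.

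There is no real obstacle here; the only step that demands any care is verifying the covariance identity $\bra{\omega}D(\beta)^{\dagger}\rho\, D(\beta)\ket{\omega}=\bra{\omega+\beta}\rho\ket{\omega+\beta}$, which reduces to the standard relation $D(\beta)\ket{\omega}=e^{(\beta\omega^{\ast}-\beta^{\ast}\omega)/2}\ket{\omega+\beta}$ and the fact that the scalar prefactor has unit modulus. Everything else is sign-tracking in Theorem~\ref{thm:laguerre} and a linear change of variables, so the corollary is essentially immediate once the theorem and the displacement covariance are in hand.
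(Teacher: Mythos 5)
Your argument is correct and is precisely the derivation the paper intends (the paper itself defers the proof of both Theorem~\ref{thm:laguerre} and this corollary to Ref.~\cite{matsuura2021}, but the theorem's packaging of the sign constraint $(-1)^m I_{n,m}>0$ is there exactly so that for odd $m$ the tail sum is manifestly non-positive, giving the $\beta=0$ case). The upgrade to general $\beta$ via $\rho'=D(\beta)^\dagger\rho D(\beta)$, the change of variables in the Husimi integral, and the unit-modulus phase in $D(\beta)\ket{\omega}=e^{(\beta\omega^*-\beta^*\omega)/2}\ket{\omega+\beta}$ are all handled correctly, so the proposal is a complete and faithful proof.
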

The proofs of Theorem~\ref{thm:laguerre} and Corollary~\ref{cor:lower_fid_coherent} are given in Ref.~\cite{matsuura2021}.

\subsection{Bit error sampling} \label{sec:estimate_num_of_err}
In this section, we summarize how to determine an upper bound on the bit error rate from the given sample.  As explained in the main text, $N_{\rm smp}$ sampling rounds are randomly inserted in the actual protocol in which Alice and Bob announce their bit values if Bob's detection succeeds (in the same way as in the signal round).  The number of ``success'' sampling rounds is denoted by $\hat{N}_{\rm smp}^{\rm suc}$, and the observed number of discrepancies between Alice and Bob is denoted by $\hat{E}_{\rm obs}$.

Let us first introduce a Chernoff-type bound for the hypergeometric distribution.
\begin{lemma}[Tail bound for the hypergeometric distribution \cite{chvatal1979}]\label{lem:hypergeometric} 
    Let $X_1,\ldots ,X_N$ be a binary sequence, and $M$ be the number of elements with $X_i=1$, i.e, $M\coloneqq\sum_{i=1}^N X_i$.  Let $\hat{Y}_1,\ldots,\hat{Y}_n$ $(n\leq N)$ be randomly sampled from $X_1,\ldots,X_N$ without replacement.  Let $\hat{m}\coloneqq \sum_{i=1}^n \hat{Y}_i$ be the number of ones in $\hat{Y}_1,\ldots,\hat{Y}_n$.  Then, for any $\delta \in[0, M/N]$, the following inequality holds:
    \begin{equation}
        \mathrm{Pr}\left(\frac{\hat{m}}{n} \leq \frac{M}{N} - \delta\right) \leq 2^{-n D\left(\frac{M}{N}-\delta \middle\|\frac{M}{N}\right)},
    \end{equation}
    where $D(\cdot\|\cdot)$ is defined in Eq.~\eqref{eq:kl_divergence}.
\end{lemma}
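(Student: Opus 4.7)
The plan is to reduce the statement to the standard Chernoff--Cram\'er bound for the binomial distribution, via Hoeffding's convex-dominance theorem (Hoeffding, 1963). The key observation is that sampling without replacement is ``less variable'' than sampling with replacement: if $\hat m_B \sim \mathrm{Binomial}(n, p)$ with $p = M/N$, then for every convex $\phi\colon \mathbb{R} \to \mathbb{R}$,
\begin{equation*}
\mathbb{E}[\phi(\hat m)] \;\leq\; \mathbb{E}[\phi(\hat m_B)].
\end{equation*}
Applying this to the convex exponential $x \mapsto e^{-sx}$ for arbitrary real $s$ yields $\mathbb{E}[e^{-s\hat m}] \leq (1-p + p\,e^{-s})^n$.

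Given this, the rest is a standard Chernoff argument on the lower tail. Fix $s>0$ and write $q = M/N - \delta = k/n$. By Markov's inequality,
\begin{equation*}
\mathrm{Pr}\!\left(\hat m \leq k\right) \;=\; \mathrm{Pr}\!\left(e^{-s\hat m} \geq e^{-sk}\right) \;\leq\; e^{sk}\,\mathbb{E}[e^{-s\hat m}] \;\leq\; \bigl(e^{sq}(1-p+p\,e^{-s})\bigr)^{n}.
\end{equation*}
Optimizing over $s>0$ (setting the derivative of the log of the bracketed quantity to zero) gives the Cram\'er-optimal choice $e^{-s} = q(1-p)/\bigl(p(1-q)\bigr)$, which lies in $(0,1)$ precisely because $\delta \in (0, M/N]$ forces $0 \leq q < p$. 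Substituting back and simplifying collapses the exponent to $-n D(q\|p)\ln 2$ in natural units, i.e.\ $2^{-n D(q\|p)}$ in base $2$, as claimed. The boundary case $\delta = M/N$ (so $q=0$) is handled by continuity, and $\delta = 0$ is trivial.

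The main obstacle is the convex-dominance input from step one; however, its proof is classical and short once one writes the hypergeometric variable $\hat m$ as a symmetric sum over an exchangeable family and applies Jensen's inequality to the conditional expectation against an i.i.d.\ coupling (equivalently, representing the without-replacement distribution as a conditional mixture of the with-replacement distribution). Since the lemma is cited from Chv\'atal (1979), invoking this standard result directly is appropriate, and no new technical machinery beyond the Chernoff--Cram\'er calculation above is needed.
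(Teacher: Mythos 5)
The paper does not prove this lemma; it cites it as a known result to Chv\'atal (1979) and moves on. So there is no ``paper's own proof'' to compare against, and your argument has to be judged on its own merits.

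Your proof is correct. The two ingredients --- Hoeffding's convex-ordering theorem (Theorem~4 of Hoeffding 1963), which gives $\mathbb{E}[\phi(\hat m)]\le\mathbb{E}[\phi(\hat m_B)]$ for convex $\phi$ with $\hat m_B\sim\mathrm{Binomial}(n,p)$, and the Chernoff--Cram\'er lower-tail calculation for the binomial --- are both standard, and you have chained them correctly: applying the ordering to $x\mapsto e^{-sx}$ gives $\mathbb{E}[e^{-s\hat m}]\le(1-p+pe^{-s})^n$, Markov's inequality with threshold $e^{-sk}$ (valid for non-integer $k=nq$) gives $\Pr(\hat m\le k)\le\bigl(e^{sq}(1-p+pe^{-s})\bigr)^n$, and the optimizer $e^{-s}=q(1-p)/\bigl(p(1-q)\bigr)$ collapses the exponent to $-nD(q\|p)$. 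One small imprecision: the optimizer lies in $(0,1)$ for $0<q<p$, i.e.\ $\delta\in(0,M/N)$, not on the closed right endpoint; but you already dispatch $q=0$ by continuity of the right-hand side and constancy of the left-hand side near $q=0$, so the claim stands. For comparison, Chv\'atal's original 1979 note obtains the same moment-generating-function bound $\sum_k\binom{M}{k}\binom{N-M}{n-k}t^k\le\binom{N}{n}(1-p+pt)^n$ directly by manipulating binomial coefficients, then runs the identical Chernoff optimization; your route trades that elementary combinatorial estimate for the more general (and also classical) Hoeffding convex-ordering theorem. Both are legitimate, and since the paper treats the lemma as a citation, invoking either is entirely appropriate.
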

Then, the following corollary is essential for the bit error sampling.
\begin{cor}[Estimation by the simple random sampling without replacement] \label{cor:random_sampling}
    Let $X_1,\ldots ,X_N$ be a binary sequence with $M\coloneqq\sum_{i=1}^N X_i$.  Let $\hat{Y}_1,\ldots,\hat{Y}_n$ $(n\leq N)$ be randomly sampled from $X_1,\ldots,X_N$ without replacement, and define $\hat{m}\coloneqq \sum_{i=1}^n \hat{Y}_i$.  Then, for any $\epsilon\in(0,1)$, the following inequality holds:
    \begin{equation}
        \mathrm{Pr}\left(\tilde{M}_{N,n,\epsilon}(\hat{m})  < M\right) \leq \epsilon,
        \label{eq:sampling_upper_violation}
    \end{equation}
    where the function $\tilde{M}_{N,n,\epsilon}(m)$ is defined to satisfy
    \begin{equation}
        \frac{m}{n} \leq \frac{\tilde{M}_{N,n,\epsilon}(m)}{N} \leq 1 
        \label{eq:trivial_inequality}
    \end{equation}
    and for $0\leq m <n$,
    \begin{equation}
        D\left(m/n \middle\|\tilde{M}_{N,n,\epsilon}(m) / N\right) = -\frac{1}{n}\log\epsilon.
        \label{eq:upper_bound_sampling}
    \end{equation}
\end{cor}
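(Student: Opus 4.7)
The plan is to reduce the claim to a direct application of Lemma~\ref{lem:hypergeometric}, using the monotonicity of the Kullback--Leibler divergence $D(\cdot\|\cdot)$. Write $p := M/N$ and $\hat{p} := \hat{m}/n$. First I would dispose of the edge case $\hat{m} = n$: the constraint \eqref{eq:trivial_inequality} forces $\tilde{M}/N \geq 1$, hence $\tilde{M} = N \geq M$ deterministically, so the bad event does not occur. For the remainder of the proof, assume $0 \leq \hat{m} < n$, so that the defining relation \eqref{eq:upper_bound_sampling} applies.

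Next I would translate $\{\tilde{M}_{N,n,\epsilon}(\hat{m}) < M\}$ into a lower-tail event on $\hat{p}$. By \eqref{eq:trivial_inequality} we have $\tilde{M}/N \geq \hat{p}$, so whenever $\hat{p} \geq p$ the conclusion $\tilde{M} \geq M$ is automatic. Thus the bad event forces $\hat{p} < p$. On $[\hat{p}, 1]$ the map $b \mapsto D(\hat{p}\|b)$ is strictly increasing (its minimum is at $b = \hat{p}$), so \eqref{eq:upper_bound_sampling} pins $\tilde{M}/N \in [\hat{p}, 1]$ uniquely as a monotone function of $\hat{p}$. If furthermore $\tilde{M}/N < p$, then since $\tilde{M}/N \in [\hat{p}, p)$ the same monotonicity yields $D(\hat{p}\|p) > D(\hat{p}\|\tilde{M}/N) = -\tfrac{1}{n}\log\epsilon$.

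Now I would introduce the threshold $p^\ast \in [0, p]$ defined by $D(p^\ast\|p) = -\tfrac{1}{n}\log\epsilon$, which exists whenever $-\tfrac{1}{n}\log\epsilon \leq D(0\|p) = -\log_2(1-p)$; otherwise no $\hat{p} \in [0, p)$ satisfies the derived divergence bound, so the bad event has probability zero. Because $a \mapsto D(a\|p)$ is strictly decreasing on $[0, p]$, the derived condition $D(\hat{p}\|p) > -\tfrac{1}{n}\log\epsilon$ is equivalent to $\hat{p} < p^\ast$. Combining the reductions,
\[
\mathrm{Pr}\!\left[\tilde{M}_{N,n,\epsilon}(\hat{m}) < M\right] \;\leq\; \mathrm{Pr}\!\left[\hat{p} \leq p^\ast\right] \;\leq\; 2^{-n D(p^\ast \| p)} \;=\; \epsilon,
\]
where the middle step is Lemma~\ref{lem:hypergeometric} applied with $\delta = p - p^\ast$, and the final equality is the definition of $p^\ast$.

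The main obstacle is essentially bookkeeping rather than substance: one must carefully track strict versus non-strict inequalities arising from the integrality of $\hat{m}$ (so $p^\ast$ need not equal $k/n$), and verify the boundary cases $\hat{m} = n$, $p \in \{0,1\}$, and values of $\epsilon$ so small that $p^\ast$ fails to exist in $[0,p)$. None of these introduces new ideas beyond the monotonicity argument above, which is the only conceptual input.
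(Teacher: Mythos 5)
Your proof is correct and follows essentially the same route as the paper's: both apply the hypergeometric tail bound (Lemma~\ref{lem:hypergeometric}) at the threshold where the Kullback--Leibler divergence equals $-\tfrac{1}{n}\log\epsilon$, and both rely on monotonicity of $D(\cdot\|\cdot)$ to translate the bad event $\{\tilde{M}_{N,n,\epsilon}(\hat{m})<M\}$ into a lower-tail event on $\hat{m}/n$. The only difference is presentational --- the paper explicitly constructs $\tilde{M}_{N,n,\epsilon}$ as the inverse of a monotone function $f_{N,n,\epsilon}$ and then inverts the event, whereas you reason directly from the implicit characterization in \eqref{eq:trivial_inequality}--\eqref{eq:upper_bound_sampling} via the threshold $p^\ast$ --- but the core argument is the same.
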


\begin{proof}
    Let $f(M)$ be a function of $M$ satisfying $0\leq f(M)/n\leq M/N$.
    Then, from Lemma~\ref{lem:hypergeometric}, we have
    \begin{equation}
        \mathrm{Pr}\left(\frac{\hat{m}}{n} \leq \frac{M}{N} - \left(\frac{M}{N} - \frac{f(M)}{n}\right)\right) \leq 2^{-n D\left(\frac{f(M)}{n} \middle\| \frac{M}{N} \right)}.
        \label{eq:apply_hypergeometric}
    \end{equation}
    We set the function $f(M)$ to the restriction of the function $f_{N,n,\epsilon}(\bar{M})$ of the real number $\bar{M}$ that satisfies
    \begin{equation}
        D\left(f_{N,n,\epsilon}(\bar{M})/n \| \bar{M}/N \right)=-\frac{1}{n}\log \epsilon,
        \label{eq:condition_of_f}
    \end{equation}
    for $\bar{M} \in [(1-\sqrt[n]{\epsilon})N, N)$.
    The function $f_{N,n,\epsilon}(\bar{M})$ increases monotonically with increasing $\bar{M}$ in $[(1-\sqrt[n]{\epsilon})N, N)$, and its image lies in $[0,n)$. 
    Thus, from Eq.~\eqref{eq:apply_hypergeometric}, we have
    \begin{equation}
        \mathrm{Pr}\left(f^{-1}_{N,n,\epsilon}(\hat{m})\leq M\right) \leq \epsilon
        \label{eq:stronger_condition}
    \end{equation}
    for any $\hat{m}\in [0,n)$.  We define the function $\tilde{M}_{N,n,\epsilon}(m)\coloneqq f^{-1}_{N,n,\epsilon}(m)$ for $m\in [0,n)$.
    To incorporate the case $\hat{m}=n$, we use the following weaker condition that trivially follows from Eq.~\eqref{eq:stronger_condition}:
    \begin{equation}
        \mathrm{Pr}\left(\tilde{M}_{N,n,\epsilon}(\hat{m}) < M\right) \leq \epsilon,
    \end{equation}
    and define $\tilde{M}_{N,n,\epsilon}(n)=N$ so that the above holds also for $\hat{m}=n$.
    These show that Eq.~\eqref{eq:sampling_upper_violation} holds while $\tilde{M}_{N,n,\epsilon}(m)$ satisfies Eqs.~\eqref{eq:trivial_inequality} and \eqref{eq:upper_bound_sampling} by construction in Eq.~\eqref{eq:condition_of_f}.
\end{proof}

With Corollary~\ref{cor:random_sampling}, we can bound the number of total bit-error events from the sample under the given failure probability $\varepsilon_{\rm cor}/2$ by setting $N=\hat{N}^{\rm suc}+\hat{N}^{\rm suc}_{\rm smp}$, $n= \hat{N}^{\rm suc}_{\rm smp}$, and $\epsilon=\varepsilon_{\rm cor}/2$ for $\tilde{M}_{N,n,\epsilon}$.  As a result, we have the following statement; the number $E$ of bit errors in $\hat{N}^{\rm suc}$-bit sifted key is bounded from above by
\begin{equation}
    \mathrm{Pr}\left(E \leq \tilde{M}_{\hat{N}^{\rm suc}+\hat{N}^{\rm suc}_{\rm smp},\hat{N}^{\rm suc}_{\rm smp},\varepsilon_{\rm cor}/2}(\hat{E}_{\rm obs}) - \hat{E}_{\rm obs} \right) \geq 1 - \varepsilon_{\rm cor}/2.
\end{equation}
Thus, we can define an upper bound $e_{\rm qber}$ of the bit error rate as in Eq.~\eqref{eq:upper_bit_error_rate}, which holds with probability no smaller than $1-\varepsilon_{\rm cor}/2$.

\subsection{Azuma's inequality combined with the Doob decomposition} \label{sec:azuma_doob}
This section summarizes the Azuma's inequality \cite{Azuma_ineq,refined_Azuma_ineq1,refined_Azuma_ineq2} adapted to our security analysis.  The main claim is the following.
\begin{prop}\label{prop:azuma_doob}
    Let $\{\hat{X}^{(i)}\}_{i=1,\ldots,N}$ be a sequence of random variables and $\hat{X}^{(<i)}$ be $\{\hat{X}^{(k)}\}_{k=0,\ldots,i-1}$, where $\hat{X}^{(0)}$ is defined to be zero.
    Let $\{a_i\}_{i=1,\ldots,N}$ and $\{b_i\}_{i=1,\ldots,N}$ be predictable processes with respect to $\{\hat{X}^{(k)}\}_{k=0,\ldots,N}$, i.e., $a_i$ and $b_i$ are constants conditioned on $\hat{X}^{(<i)}$.  Assume the following condition is satisfied for all $i\in\{1,\ldots,N\}$:
    \begin{equation}
        a_i \leq \hat{X}^{(i)}\leq b_i.
    \end{equation}
    Then, for any $t\geq 0$, we have
    \begin{equation}
        \mathrm{Pr}\left(\sum_{i=1}^N \bigl(\hat{X}^{(i)} - \mathbb{E}[\hat{X}^{(i)}\mid \hat{X}^{(<i)}]\bigr) \geq t\right) \leq \exp\biggl(-\frac{2t^2}{\sum_{i=1}^N (b_i - a_i)^2}\biggr). 
    \end{equation}
\end{prop}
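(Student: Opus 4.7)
The plan is to carry out the standard exponential-moment (Chernoff) argument for martingale difference sequences, where the martingale arises from the Doob decomposition of $\sum_i \hat{X}^{(i)}$. First I would set $\hat{Y}^{(i)} \coloneqq \hat{X}^{(i)} - \mathbb{E}[\hat{X}^{(i)}\mid \hat{X}^{(<i)}]$, so that $\{\hat{Y}^{(i)}\}$ is a martingale difference sequence with respect to the filtration generated by $\{\hat{X}^{(k)}\}_{k<i}$, and the sum in the statement is $\sum_{i=1}^N \hat{Y}^{(i)}$. The key observation at this stage is that because $a_i$ and $b_i$ are predictable, they are constants when one conditions on $\hat{X}^{(<i)}$; hence $\mathbb{E}[\hat{X}^{(i)}\mid \hat{X}^{(<i)}] \in [a_i,b_i]$ and
\begin{equation}
    a_i - \mathbb{E}[\hat{X}^{(i)}\mid \hat{X}^{(<i)}] \,\leq\, \hat{Y}^{(i)} \,\leq\, b_i - \mathbb{E}[\hat{X}^{(i)}\mid \hat{X}^{(<i)}],
\end{equation}
so $\hat{Y}^{(i)}$ is conditionally confined to an interval of length $b_i - a_i$.

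Next I would apply the Chernoff trick: for any $s>0$,
\begin{equation}
    \mathrm{Pr}\!\left(\sum_{i=1}^N \hat{Y}^{(i)} \geq t\right) \leq e^{-st}\, \mathbb{E}\!\left[\exp\!\left(s\sum_{i=1}^N \hat{Y}^{(i)}\right)\right].
\end{equation}
To control the moment generating function I would invoke Hoeffding's lemma in its conditional form: for a random variable $Z$ with conditional mean zero and taking values in a (possibly random but predictable) interval $[\alpha,\beta]$, one has $\mathbb{E}[e^{sZ}\mid \mathcal{F}] \leq \exp(s^2(\beta-\alpha)^2/8)$. Applied here, this yields $\mathbb{E}[e^{s\hat{Y}^{(i)}}\mid \hat{X}^{(<i)}] \leq \exp(s^2(b_i-a_i)^2/8)$.

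Then I would peel off the terms one by one via the tower property: conditioning on $\hat{X}^{(<N)}$, the factor $e^{s\hat{Y}^{(N)}}$ contributes at most $\exp(s^2(b_N-a_N)^2/8)$, leaving the same structure with $N-1$ terms, and iterating gives
\begin{equation}
    \mathbb{E}\!\left[\exp\!\left(s\sum_{i=1}^N \hat{Y}^{(i)}\right)\right] \,\leq\, \exp\!\left(\frac{s^2}{8}\sum_{i=1}^N (b_i-a_i)^2\right).
\end{equation}
Finally I would optimize over $s>0$: choosing $s = 4t/\sum_{i=1}^N (b_i-a_i)^2$ minimizes the exponent $-st + s^2\sum_i(b_i-a_i)^2/8$ and delivers the bound $\exp\bigl(-2t^2/\sum_{i=1}^N(b_i-a_i)^2\bigr)$ claimed in the proposition.

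The proof is essentially classical, so there is no single hard step; the only place that requires care is the use of Hoeffding's lemma conditionally. The point to verify is that predictability of $a_i,b_i$ makes them behave as deterministic constants inside the conditional expectation at stage $i$, so that the standard one-dimensional Hoeffding bound on a bounded, mean-zero variable applies verbatim to $\hat{Y}^{(i)}$ given $\hat{X}^{(<i)}$. Once that is pinned down, the tower-property induction and the optimization over $s$ are routine.
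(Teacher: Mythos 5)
Your proof is correct and reaches the same conclusion, but it takes a more self-contained route than the paper. The paper's proof stops one level higher: it verifies that the partial sums $\hat{Y}^{(i)} \coloneqq \sum_{k=1}^{i}(\hat{X}^{(k)} - \mathbb{E}[\hat{X}^{(k)}\mid \hat{X}^{(<k)}])$ form a martingale (this is the Doob decomposition alluded to in the section title), checks that the increments $\hat{Y}^{(i)}-\hat{Y}^{(i-1)}$ lie in the predictable interval $[a_i - \mathbb{E}[\hat{X}^{(i)}\mid \hat{X}^{(<i)}],\; b_i - \mathbb{E}[\hat{X}^{(i)}\mid \hat{X}^{(<i)}]]$ of width $b_i-a_i$, and then cites Azuma's inequality as a black box. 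You instead unpack Azuma's inequality itself: Chernoff bound on the moment generating function, conditional Hoeffding lemma for each increment, tower-property peeling, and optimization over the Chernoff parameter $s$. Both arguments hinge on the identical observation that predictability of $a_i, b_i$ makes the increments conditionally bounded on an interval of predictable width. What the paper's approach buys is brevity and a clean modular citation; what yours buys is a fully self-contained derivation and a transparent account of where the constant $2$ in the exponent comes from. One small point worth noting for both proofs: the tower-property peeling (and equally the black-box Azuma bound) requires that $(b_i-a_i)^2$ can be pulled out of the conditional expectation as a constant, which works because each factor $\exp\bigl(s^2(b_i-a_i)^2/8\bigr)$ is $\hat{X}^{(<i)}$-measurable; when the $a_i, b_i$ are genuinely random the final bound involves the random sum $\sum_i(b_i-a_i)^2$, and one should interpret the statement with a deterministic upper bound on that sum — as the paper indeed does in its application at Eq.~\eqref{eq:def_delta_1}, where the bounds on $\hat T^{(i)}[\kappa,\gamma]$ are nonrandom.
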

\begin{proof}
        For $1\leq i\leq N$, let $\hat{Y}^{(i)}$ be defined as 
        \begin{equation}
            \hat{Y}^{(i)}\coloneqq \sum_{k=1}^{i}(\hat{X}^{(k)}- \mathbb{E}[\hat{X}^{(k)}\mid \hat{X}^{(<k)}]).
        \end{equation}
        Then, the sequence $\{\hat{Y}^{(i)}\}_{i=1,\ldots,N}$ is a martingale with respect to $\{\hat{X}^{(k)}\}_{k=0,\ldots,N}$, since $\hat{Y}^{(i)}$ satisfies 
        \begin{align}
            \mathbb{E}[\hat{Y}^{(i)}\mid \hat{X}^{(<i)}] &= \sum_{k=1}^{i}\left(\mathbb{E}[\hat{X}^{(k)}\mid \hat{X}^{(<i)}] - \mathbb{E}\!\left[\mathbb{E}[\hat{X}^{(k)}\mid \hat{X}^{(<k)}]\relmiddle|\hat{X}^{(<i)}\right]\right) \\
            &= \sum_{k=1}^{i-1}(\hat{X}^{(k)} - \mathbb{E}[\hat{X}^{(k)}\mid \hat{X}^{(<k)}]) \\
            &= \hat{Y}^{(i-1)}
        \end{align}
        for $1\leq i \leq N$.  (Note that $\hat{Y}^{(0)}\coloneqq 0$.)
        The second equality follows from the fact that $\mathbb{E}[\hat{X}^{(k)}\mid \hat{X}^{(<i)}]=\hat{X}^{(k)}$ holds when $k<i$.  Furthermore, the sequence $\{\hat{Y}^{(i)}\}_{i=1,\ldots,N}$ satisfies
        \begin{equation}
            a_i - \mathbb{E}[\hat{X}^{(i)}\mid \hat{X}^{(<i)}] \leq \hat{Y}^{(i)} - \hat{Y}^{(i-1)} \leq b_i - \mathbb{E}[\hat{X}^{(i)}\mid \hat{X}^{(<i)}], 
        \end{equation}
        for $1\leq i\leq n$.  Since $a_i- \mathbb{E}[\hat{X}^{(i)}\mid \hat{X}^{(<i)}]$ and $b_i- \mathbb{E}[\hat{X}^{(i)}\mid \hat{X}^{(<i)}]$ are predictable process with respect to $\{\hat{X}^{(k)}\}_{k=0,\ldots,N}$, we can apply Azuma's inequality~\cite{Azuma_ineq} to the sequence $\{\hat{Y}^{(i)}\}_{i=1,\ldots,N}$ and prove the statement.
\end{proof}

\section{Proof of the operator inequality} \label{sec:operator_ineq}
In this section, we prove the inequality \eqref{eq:ope_ineq_first} used in the security proof in the main text.  We first prove the following lemma.

\begin{lemma} \label{lemma:operator_ineq}
    Let $\Pi_{\pm}$ be orthogonal projections that have the rank no smaller than three or infinite.  Let $M$ be a self-adjoint operator satisfying $M = (\Pi_{+} + \Pi_{-}) M (\Pi_{+} + \Pi_{-}) \leq \alpha (\Pi_{+} + \Pi_{-})$, where $\alpha$ is a real constant.  Let $\ket{\psi}$ be a vector satisfying $(\Pi_{+} + \Pi_{-})\ket{\psi} = \ket{\psi}$ and $\Pi_{\pm}\ket{\psi} \neq 0$.  Assume $\Pi_{\pm}\ket{\psi}$ are not proportional to eigenvectors of $\Pi_{\pm}M\Pi_{+}$ (if they have).  Define the following quantities with respect to $\ket{\psi}$:
    \begin{align}
        C_{\pm} &\coloneqq \bra{\psi}\Pi_{\pm}\ket{\psi} \, (>0), \label{eq:def_c_pm}\\
        \lambda_{++} &\coloneqq C_{+}^{-1}\bra{\psi} M_{++}\ket{\psi}, \quad \lambda_{--} \coloneqq C_{-}^{-1}\bra{\psi} M_{--}\ket{\psi}, \label{eq:lam_pm}\\
        \lambda_{+-} &\coloneqq (C_{+}C_{-})^{-\frac{1}{2}}\bra{\psi} M_{+-} \ket{\psi}, \quad \lambda_{-+} \coloneqq \lambda_{+-}^{*}, \\
        \sigma_{\pm +} &\coloneqq \(C_{+}^{-1}\|M_{\pm +}\ket{\psi}\|^2 - |\lambda_{\pm +}|^{2}\)^{\frac{1}{2}}, \label{eq:def_sig_pm}\\
        \sigma_{\pm -} &\coloneqq \sigma_{\pm +}^{-1} \((C_+ C_-)^{-\frac{1}{2}} \bra{\psi}M_{+ \pm} M_{\pm -} \ket{\psi} - \lambda_{+-}\lambda_{\pm\pm}\), \\
        \Delta_{\pm -} & \coloneqq \(C_{-}^{-1}\|M_{\pm -}\ket{\psi}\|^{2} - |\lambda_{\pm -}|^2 - |\sigma_{\pm -}|^2\)^{\frac{1}{2}} \label{eq:def_delta_pm},
    \end{align}
    where $M_{++}, M_{--}, M_{+-},$ and $M_{-+}$ are given respectively by
    \begin{equation}
        M_{\pm \pm} \coloneqq \Pi_{\pm} M \Pi_{\pm}, \quad M_{+ -} \coloneqq \Pi_{+} M \Pi_{-}, \quad M_{-+}\coloneqq M_{+-}^{\dagger}. \label{eq:def_M_pm}
    \end{equation}
    Then, for any real numbers $\gamma_{\pm}$, we have
    \begin{equation}
        \sigma_{\rm sup}(M + \ket{\psi}\!\bra{\psi} - \gamma_{+}\Pi_{+} - \gamma_{-}\Pi_{-}) \leq \sigma_{\rm sup}(M_{\rm 6d}), \label{eq:operator_ineq}
    \end{equation}
    where $\sigma_{\rm sup}(X)$ denotes the supremum of the spectrum of the operator $X$, and $M_{\rm 6d}$ is given by
    \begin{equation}
        M_{\rm 6d} \coloneqq \begin{pmatrix} \alpha - \gamma_+ & 0 & 0 & \Delta_{+-} & 0 & 0 \\
            0 & \alpha - \gamma_+ & \sigma_{++} & \sigma_{+-} & 0 & 0 \\
            0 & \sigma_{++} & C_+ + \lambda_{++} - \gamma_+ & \sqrt{C_+ C_-} + \lambda_{+-} & \sigma_{-+} & 0 \\
            \Delta_{+-} & \sigma_{+-}^{*} & \sqrt{C_+ C_-} + \lambda_{-+} & C_- + \lambda_{--} - \gamma_- & \sigma_{--}^{*} & \Delta_{--} \\
            0 & 0 & \sigma_{-+} & \sigma_{--} & \alpha - \gamma_{-} & 0 \\
            0 & 0 & 0 & \Delta_{--} & 0 & \alpha - \gamma_-
        \end{pmatrix}. \label{eq:def_of_M_6d}
    \end{equation}
\end{lemma}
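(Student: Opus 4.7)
The plan is to bound $\sup\sigma(X)$ for $X\coloneqq M+\ket{\psi}\bra{\psi}-\gamma_{+}\Pi_{+}-\gamma_{-}\Pi_{-}$ by exhibiting, for every unit vector $\ket{v}\in\mathrm{ran}(\Pi_{+}+\Pi_{-})$ (the orthogonal complement is annihilated by $X$), a unit six-vector $\tilde{\mathbf c}$ with $\bra{v}X\ket{v}\le\bra{\tilde{\mathbf c}}M_{\rm 6d}\ket{\tilde{\mathbf c}}$. The construction rests on a canonical orthonormal six-tuple inside $\mathrm{ran}(\Pi_{+}+\Pi_{-})$ that captures every matrix element of $M$ exactly determined by the hypotheses, while all other matrix elements will be absorbed via the positivity $N\coloneqq\alpha(\Pi_{+}+\Pi_{-})-M\ge 0$.

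I first set $\ket{\psi_{\pm}}\coloneqq C_{\pm}^{-1/2}\Pi_{\pm}\ket{\psi}$ and Gram--Schmidt-orthogonalise $\Pi_{\pm}M\ket{\psi_{+}}$ and $\Pi_{\pm}M\ket{\psi_{-}}$ against $\ket{\psi_{\pm}}$ inside $\mathrm{ran}(\Pi_{\pm})$, using the rank-$\ge 3$ hypothesis for the room and the non-eigenvector hypothesis to guarantee $\sigma_{\pm+}>0$. This produces unit vectors $\ket{\psi_{\pm}'},\ket{\psi_{\pm}''}\in\mathrm{ran}(\Pi_{\pm})$ obeying
\begin{align*}
\Pi_{+}M\ket{\psi_{+}}&=\lambda_{++}\ket{\psi_{+}}+\sigma_{++}\ket{\psi_{+}'}, &
\Pi_{-}M\ket{\psi_{+}}&=\lambda_{-+}\ket{\psi_{-}}+\sigma_{-+}\ket{\psi_{-}'},\\
\Pi_{+}M\ket{\psi_{-}}&=\lambda_{+-}\ket{\psi_{+}}+\sigma_{+-}\ket{\psi_{+}'}+\Delta_{+-}\ket{\psi_{+}''}, &
\Pi_{-}M\ket{\psi_{-}}&=\lambda_{--}\ket{\psi_{-}}+\sigma_{--}\ket{\psi_{-}'}+\Delta_{--}\ket{\psi_{-}''},
\end{align*}
with coefficients matching \eqref{eq:lam_pm}--\eqref{eq:def_delta_pm}. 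I take $(\ket{\psi_{+}''},\ket{\psi_{+}'},\ket{\psi_{+}},\ket{\psi_{-}},\ket{\psi_{-}'},\ket{\psi_{-}''})$ as rows/columns $1,\dots,6$ of $M_{\rm 6d}$ and decompose $\ket{v}=\sum_{i=1}^{6}c_{i}\ket{e_{i}}+\ket{r_{+}}+\ket{r_{-}}$ with $\ket{r_{\pm}}\in\mathrm{ran}(\Pi_{\pm})$ orthogonal to the six basis vectors.

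Expanding $\bra{v}X\ket{v}$, every product involving $\ket{\psi_{\pm}}$ collapses exactly by the identities above---in particular $\bra{r_{\pm}}M\ket{\psi_{\mp}}=\bra{r_{\pm}}M\ket{\psi_{\pm}}=0$ because $M\ket{\psi_{\pm}}$ lies in the six-dimensional span---reproducing the entries in rows/columns $3,4$ of $M_{\rm 6d}$, including all the $\sigma_{\pm+}$, $\sigma_{\pm-}$ and $\Delta_{\pm-}$ off-diagonals. The residual is a Hermitian quadratic form on the ``outer'' directions spanned by $\ket{\psi_{+}''},\ket{\psi_{+}'},\ket{r_{+}},\ket{r_{-}},\ket{\psi_{-}'},\ket{\psi_{-}''}$, each of which lies in $\mathrm{ran}(\Pi_{\pm})$. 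On these, $M\le\alpha\Pi_{\pm}$ gives the diagonal bound $\bra{\phi}M\ket{\phi}\le\alpha\|\phi\|^{2}$, while $N\ge 0$ yields, via Cauchy--Schwarz $|\bra{\phi_{i}}N\ket{\phi_{j}}|^{2}\le\bra{\phi_{i}}N\ket{\phi_{i}}\bra{\phi_{j}}N\ket{\phi_{j}}$, the off-diagonal bound $|\bra{\phi_{i}}M\ket{\phi_{j}}|\le\sqrt{(\alpha\|\phi_{i}\|^{2}-\bra{\phi_{i}}M\ket{\phi_{i}})(\alpha\|\phi_{j}\|^{2}-\bra{\phi_{j}}M\ket{\phi_{j}})}$ whenever $\braket{\phi_{i}|\phi_{j}}=0$. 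Absorbing $\|r_{+}\|^{2}$ into $|c_{1}|^{2}$ and $\|r_{-}\|^{2}$ into $|c_{6}|^{2}$ (legitimate because both pairs share the same diagonal entry $\alpha-\gamma_{\pm}$ in $M_{\rm 6d}$) and invoking the elementary inequality that $|c|^{2}a+2\mathrm{Re}(c^{*}c'm)+|c'|^{2}a'\le\alpha(|c|^{2}+|c'|^{2})$ whenever $a,a'\le\alpha$ and $|m|\le\sqrt{(\alpha-a)(\alpha-a')}$ (which follows from AM--GM applied to $|c|^{2}(\alpha-a)+|c'|^{2}(\alpha-a')\ge 2|c||c'|\sqrt{(\alpha-a)(\alpha-a')}\ge 2|\mathrm{Re}(c^{*}c'm)|$) then shows that the residual form is majorised by the part of $M_{\rm 6d}$ supported on rows $1,2,5,6$. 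Combined with the exact central contribution, this yields $\bra{v}X\ket{v}\le\bra{\tilde{\mathbf c}}M_{\rm 6d}\ket{\tilde{\mathbf c}}\le\sigma_{\rm sup}(M_{\rm 6d})$ for a unit six-vector $\tilde{\mathbf c}$, proving \eqref{eq:operator_ineq} upon taking the supremum over $\ket{v}$.

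The main technical hurdle is verifying the Cauchy--Schwarz domination \emph{simultaneously} across all six outer directions---not merely pairwise---so that the specific zero off-diagonals of $M_{\rm 6d}$ at positions $(1,2),(1,5),(1,6),(2,5),(2,6),(5,6)$ together with the diagonals $\alpha-\gamma_{\pm}$ serve as consistent joint upper bounds. I expect to settle this by organising the residual as a single $6\times 6$ Gram-matrix inequality coming from $N\ge 0$, and by choosing relative phases in the absorption of $\ket{r_{\pm}}$ so that the couplings to the central rows (namely $\Delta_{+-}$ at $(1,4)$ and $\Delta_{--}$ at $(4,6)$) only grow under the replacement $|c_{1}|^{2}\to|c_{1}|^{2}+\|r_{+}\|^{2}$; this is precisely why the particular form of $M_{\rm 6d}$ in \eqref{eq:def_of_M_6d} is the right dominating object.
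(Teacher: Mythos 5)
Your geometric construction is the same as the paper's: both proofs build the orthonormal sextuple by Gram--Schmidt-orthogonalising $\Pi_{\pm}M\Pi_{\pm}\ket{\psi}$ and $\Pi_{\pm}M\Pi_{\mp}\ket{\psi}$ against $C_{\pm}^{-1/2}\Pi_{\pm}\ket{\psi}$, both exploit that $M\ket{\psi_{\pm}}$ lies in the six-dimensional span so that the rows and columns indexed by the central vectors are captured exactly, and both then need to control the operator on the orthogonal complement. The difference is purely in how that last step is handled. The paper stays at the operator level: writing $\Pi^{(\geq 2)}\coloneqq \Pi_{+}^{(\geq 2)}+\Pi_{-}^{(\geq 2)}$, it simply restricts the hypothesis $M\leq\alpha(\Pi_{+}+\Pi_{-})$ to obtain $\Pi^{(\geq 2)}M\Pi^{(\geq 2)}\leq\alpha\Pi^{(\geq 2)}$ in one stroke, which (together with the vanishing of the block $\Pi^{(\geq 4)}M\Pi^{(1)}$) gives the block-operator bound $X\leq M_{\rm 6d}\oplus(\alpha-\gamma_+)\Pi_+^{(\geq 4)}\oplus(\alpha-\gamma_-)\Pi_-^{(\geq 4)}$ with no auxiliary lemma.

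There is a genuine gap in the step you flag as the ``main technical hurdle.'' The pairwise bound $|c|^{2}a+2\mathrm{Re}(c^{*}c'm)+|c'|^{2}a'\le\alpha(|c|^{2}+|c'|^{2})$ does not extend to six mutually coupled directions: entry-wise Cauchy--Schwarz control of off-diagonals does not imply the matrix is bounded by $\alpha I$ (take the $3\times3$ matrix with zero diagonal and all off-diagonals equal to $\alpha$; every pair passes the test but the top eigenvalue is $2\alpha$). So as written your residual estimate does not close. The ``Gram-matrix inequality coming from $N\geq0$'' that you say you \emph{expect} to prove is exactly the repair the paper uses: the principal compression of $N=\alpha(\Pi_{+}+\Pi_{-})-M$ to the outer subspace is automatically positive semidefinite, i.e.\ $\Pi^{(\geq 2)}M\Pi^{(\geq 2)}\leq\alpha\Pi^{(\geq 2)}$, and that is the full content of the bound---no per-pair Cauchy--Schwarz is needed, and the simultaneous domination follows immediately. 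Swapping out the AM--GM lemma for this one operator restriction turns your sketch into essentially the paper's proof, modulo the cosmetic difference that you phrase it through quadratic forms $\bra{v}X\ket{v}$ and a phase-adjusted vector $\tilde{\mathbf{c}}$ (with $\Delta_{+-},\Delta_{--}\geq 0$ making the phase adjustment legitimate) while the paper keeps everything at the level of operator inequalities until the final direct sum.
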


\begin{proof}
    We are able to choose orthonormal vectors $\{\ket{e^{(1)}_{+}}, \ket{e^{(2)}_{+}}, \ket{e^{(3)}_{+}}\}$ and $\{\ket{e^{(1)}_{-}}, \ket{e^{(2)}_{-}}, \ket{e^{(3)}_{-}}\}$ in the domains of $\Pi_{+}$ and $\Pi_{-}$, respectively, to satisfy
    \begin{align}
        \sqrt{C_{\pm}}\Ket{e^{(1)}_{\pm}} &= \Pi_{\pm}\ket{\psi}, \label{eq:equality_c_pm}\\
        M\Ket{e^{(1)}_{+}} = (M_{++} + M_{-+}) \Ket{e^{(1)}_{+}} &= \lambda_{++}\Ket{e^{(1)}_{+}} + \sigma_{++}\Ket{e^{(2)}_{+}} + \lambda_{-+}\Ket{e^{(1)}_{-}} + \sigma_{-+}\Ket{e^{(2)}_{-}}, \label{eq:m_+_expansion} \\
        \begin{split}
        M\Ket{e^{(1)}_{-}} = (M_{+-} + M_{--}) \Ket{e^{(1)}_{-}} &= \lambda_{+-} \Ket{e^{(1)}_{+}} + \sigma_{+-} \Ket{e^{(2)}_{+}} + \Delta_{+-}\Ket{e^{(3)}_{+}} \\
        & \qquad + \lambda_{--}\Ket{e^{(1)}_{-}} + \sigma_{--}\Ket{e^{(2)}_{-}} + \Delta_{--} \Ket{e^{(3)}_{-}}. \end{split}\label{eq:m_-_expansion}
    \end{align}
    The coefficients satisfy Eqs.~\eqref{eq:def_c_pm}--\eqref{eq:def_M_pm}, and $M = (\Pi_{+} + \Pi_{-}) M (\Pi_{+} + \Pi_{-})$.  This can be seen by applying the following general property.  For an operator $A$ and a unit vector $\ket{v}$, $A\ket{v}$ can always be written as $A\ket{v}=a\ket{v}+b\ket{v^{\perp}}$, where $\ket{v^{\perp}}$ is a unit vector orthogonal to $\ket{v}$, $b$ is a nonnegative coefficient satisfying $b^2=\|A\ket{v}\|^2 - |a|^2$, and $a=\bra{v}A\ket{v}$.
    From $(\Pi_{+} + \Pi_{-})\ket{\psi} = \ket{\psi}$, we have
    \begin{equation}
        \ket{\psi} = \sqrt{C_+} \Ket{e^{(1)}_{+}} + \sqrt{C_-} \Ket{e^{(1)}_{-}}.
    \end{equation}
    Let us now define the following projection operators:
    \begin{align}
        \Pi_{\pm}^{(j)} &\coloneqq \Ket{e^{(j)}_{\pm}}\! \Bra{e^{(j)}_{\pm}} \quad (j=1,2,3), \\
        \Pi_{\pm}^{(\geq 2)} &\coloneqq \Pi_{\pm} - \Pi_{\pm}^{(1)}, \\
        \Pi_{\pm}^{(\geq 4)} &\coloneqq \Pi_{\pm}^{(\geq 2)} - \Pi_{\pm}^{(2)} - \Pi_{\pm}^{(3)}.
    \end{align}
    Since Eqs.~\eqref{eq:m_+_expansion} and \eqref{eq:m_-_expansion} imply $(\Pi_{+}^{(\geq 4)} + \Pi_{-}^{(\geq 4)}) M (\Pi_{+}^{(1)} + \Pi_{-}^{(1)}) = 0$, we have
    \begin{align}
        M &= (\Pi_{+} + \Pi_{-}) M (\Pi_{+} + \Pi_{-}) \\
        \begin{split}
        &=  (\Pi_{+}^{(1)} + \Pi_{-}^{(1)}) M (\Pi_{+}^{(1)} + \Pi_{-}^{(1)}) + (\Pi_{+}^{(2)} + \Pi_{+}^{(3)} + \Pi_{-}^{(2)} + \Pi_{-}^{(3)}) M (\Pi_{+}^{(1)} + \Pi_{-}^{(1)}) \\
        & \qquad + (\Pi_{+}^{(1)} + \Pi_{-}^{(1)}) M (\Pi_{+}^{(2)} + \Pi_{+}^{(3)} + \Pi_{-}^{(2)} + \Pi_{-}^{(3)}) + (\Pi_{+}^{(\geq 2)} + \Pi_{-}^{(\geq 2)}) M (\Pi_{+}^{(\geq 2)} + \Pi_{-}^{(\geq 2)}) \end{split} \\
        \begin{split}
        &\leq \lambda_{++}\Pi_{+}^{(1)} + \lambda_{--}\Pi_{-}^{(1)} + \lambda_{+-}\Ket{e_{+}^{(1)}}\!\Bra{e_{-}^{(1)}} + \lambda_{-+}\Ket{e_{-}^{(1)}}\!\Bra{e_{+}^{(1)}} \\
        &\qquad +\(\sigma_{++}\Ket{e_{+}^{(2)}}\!\Bra{e_{+}^{(1)}} + \sigma_{-+}\Ket{e_{-}^{(2)}}\!\Bra{e_{+}^{(1)}} + \sigma_{+-}\Ket{e_{+}^{(2)}}\!\Bra{e_{-}^{(1)}} \right.\\ 
        & \qquad \qquad \left. + \Delta_{+-}\Ket{e_{+}^{(3)}}\!\Bra{e_{-}^{(1)}} + \sigma_{--}\Ket{e_{-}^{(2)}}\!\Bra{e_{-}^{(1)}} + \Delta_{--}\Ket{e_{-}^{(3)}}\!\Bra{e_{-}^{(1)}} \) + (\quad \text{h.c.}\quad ) \\
        & \qquad \ \  + \alpha (\Pi_{+}^{(\geq 2)} + \Pi_{-}^{(\geq 2)}),
        \end{split} \label{eq:ineq_of_M}
    \end{align}
    where h.c.\ denotes the hermitian conjugate of the terms in the preceding parenthesis.
    The last inequality comes from $M\leq \alpha(\Pi_{+} + \Pi_{-})$.  Using Eq.~\eqref{eq:ineq_of_M}, we have
    \begin{equation}
        M + \ket{\psi}\!\bra{\psi} - \gamma_{+}\Pi_{+} - \gamma_{-}\Pi_{-} \leq M_{\rm 6d} \oplus (\alpha - \gamma_+)\Pi_{+}^{(\geq 4)} \oplus (\alpha - \gamma_{-})\Pi_{-}^{(\geq 4)}, \label{eq:bound_by_6d}
    \end{equation}
    where $M_{\rm 6d}$ is given in Eq.~\eqref{eq:def_of_M_6d} with the basis $\{\ket{e_{+}^{(3)}}, \ket{e_{+}^{(2)}}, \ket{e_{+}^{(1)}}, \ket{e_{-}^{(1)}}, \ket{e_{-}^{(2)}}, \ket{e_{-}^{(3)}}\}$.  Since $\alpha-\gamma_{\pm} = \bra{e_{\pm}^{(3)}}M_{\rm 6d}\ket{e_{\pm}^{(3)}} \leq \sigma_{\rm sup}(M_{\rm 6d})$, the supremum of the spectrum of the right-hand side of Eq.~\eqref{eq:bound_by_6d} is equal to the maximum eigenvalue of the six-dimensional matrix $M_{\rm 6d}$.  We then obtain Eq.~\eqref{eq:operator_ineq}.
\end{proof}

As a corollary of this lemma, we obtain the followings.  
First, we consider Homodyne protocol.  
\begin{cor}\label{cor:ope_ineq_M}
    Let $\ket{\beta}$ be a coherent state and $\theta_{\mu,\beta}^{\rm hom}(x)$ be defined to satisfy
    \begin{equation}
        |\theta_{\mu,\beta}^{\rm hom}(x)| \leq \frac{\pi}{2}, \qquad \tan\theta_{\mu,\beta}^{\rm hom}(x) = e^{-2(\mu - \beta^2)}\sinh(4\beta x).\label{eq:def_of_theta}
    \end{equation}
    Let $\Pi_{\rm ev(od)}$ and $M^{\rm hom}[\kappa,\gamma]$ be as defined in the main text, and let $M^{\rm hom}_{\rm oo}$, $M^{\rm hom}_{\rm ee}$, $M^{\rm hom}_{\rm (\pm,o)(\mp,e)}$, and $M^{\rm hom}_{\rm (\mp,e)(\pm,o)}$ be defined as follows:
    \begin{align}
        M^{\rm hom}_{\rm oo} &\coloneqq \int_{-\infty}^{\infty}f_{\rm suc}(x) [1 + \cos\theta_{\mu,\beta}^{\rm hom}(x)] dx\; \Pi_{\rm od} \ket{x}\!\bra{x} \Pi_{\rm od}, \\
        M^{\rm hom}_{\rm ee} &\coloneqq \int_{-\infty}^{\infty}f_{\rm suc}(x) [1 - \cos\theta_{\mu,\beta}^{\rm hom}(x)]dx\; \Pi_{\rm ev} \ket{x}\!\bra{x} \Pi_{\rm ev}, \\  
        M^{\rm hom}_{\rm (+,o)(-,e)} &\coloneqq \int_{-\infty}^{\infty}f_{\rm suc}(x)\sin\theta_{\mu,\beta}^{\rm hom}(x)\,dx\; \Pi_{\rm od} \ket{x}\!\bra{x} \Pi_{\rm ev}, \\
        M^{\rm hom}_{\rm (-,e)(+,o)} &\coloneqq \left(M^{\rm hom}_{\rm (+,o)(-,e)}\right)^{\dagger},\\
        M^{\rm hom}_{\rm (-,o)(+,e)} &\coloneqq \int_{-\infty}^{\infty}-f_{\rm suc}(x)\,\sin\theta_{\mu,\beta}^{\rm hom}(x)\,dx\; \Pi_{\rm od} \ket{x}\!\bra{x} \Pi_{\rm ev}=-M^{\rm hom}_{\rm (+,o)(-,e)}, \\
        M^{\rm hom}_{\rm (+,e)(-,o)} &\coloneqq \left(M^{\rm hom}_{\rm (-,o)(+,e)}\right)^{\dagger}=-\left(M^{\rm hom}_{\rm (+,o)(-,e)}\right)^{\dagger}.
    \end{align}  
    Define the following (real) parameters:
    \begin{gather}
        C_{\rm o} \coloneqq \bra{\beta}\Pi_{\rm od}\ket{\beta} = e^{-|\beta|^2} \sinh|\beta|^2 , \quad 
        C_{\rm e} \coloneqq \bra{\beta}\Pi_{\rm ev}\ket{\beta} = e^{-|\beta|^2}\cosh |\beta|^2, 
         \\
        \lambda_{\rm oo}^{\rm hom} \coloneqq C_{\rm o}^{-1}\bra{\beta} M^{\rm hom}_{\rm oo} \ket{\beta} , \quad
        \lambda_{\rm ee}^{\rm hom} \coloneqq C_{\rm e}^{-1} \bra{\beta} M^{\rm hom}_{\rm ee} \ket{\beta},  \\
        \lambda_{\rm (+,o)(-,e)}^{\rm hom} \coloneqq \left(C_{\rm o}C_{\rm e}\right)^{-\frac{1}{2}} \bra{\beta} M^{\rm hom}_{\rm (+,o)(-,e)} \ket{\beta} = (\lambda_{\rm (+,o)(-,e)}^{\rm hom})^*, \\
        \sigma_{\rm oo}^{\rm hom} \coloneqq \( C_{\rm o}^{-1} \| M^{\rm hom}_{\rm oo}\ket{\beta} \|^2 - (\lambda_{\rm oo}^{\rm hom})^2 \)^{\frac{1}{2}}, \\
        \sigma_{\rm eo}^{\rm hom} \coloneqq \( C_{\rm o}^{-1} \| M^{\rm hom}_{\rm (-,e)(+,o)}\ket{\beta} \|^2 - |\lambda_{\rm (+,o)(-,e)}^{\rm hom}|^2 \)^{\frac{1}{2}}, \\
        \sigma_{\rm (+,o)(-,e)}^{\rm hom} \coloneqq (\sigma_{\rm oo}^{\rm hom})^{-1} \(\left(C_{\rm o} C_{\rm e}\right)^{-\frac{1}{2}} \bra{\beta} M^{\rm hom}_{\rm oo} M^{\rm hom}_{\rm (+,o)(-,e)}\ket{\beta} - \lambda_{\rm oo}^{\rm hom}\lambda_{\rm (+,o)(-,e)}^{\rm hom}\) = (\sigma_{\rm (+,o)(-,e)}^{\rm hom})^* , \\
        \sigma_{\rm (-,e)(-,e)}^{\rm hom} \coloneqq (\sigma_{\rm (-,e)(+,o)}^{\rm hom})^{-1} \(\left(C_{\rm o} C_{\rm e}\right)^{-\frac{1}{2}} \bra{\beta} M^{\rm hom}_{\rm (+,o)(-,e)} M^{\rm hom}_{\rm ee}\ket{\beta} - \lambda_{\rm (+,o)(-,e)}^{\rm hom}\lambda_{\rm ee}^{\rm hom}\) = (\sigma_{\rm (-,e)(-,e)}^{\rm hom})^*,\\
        \Delta_{\rm oe}^{\rm hom} \coloneqq \(C_{\rm e}^{-1}\|M^{\rm hom}_{\rm (+,o)(-,e)}\ket{\beta}\|^2 - |\lambda_{\rm (+,o)(-,e)}^{\rm hom}|^2 - |\sigma_{\rm (+,o)(-,e)}^{\rm hom}|^2 \)^{\frac{1}{2}}, \\
        \Delta_{\rm ee}^{\rm hom} \coloneqq \(C_{\rm e}^{-1} \|M^{\rm hom}_{\rm ee}\ket{\beta}\|^2 - (\lambda_{\rm ee}^{\rm hom})^2 - |\sigma_{\rm (-,e)(-,e)}^{\rm hom}|^2 \)^{\frac{1}{2}}.
    \end{gather}
    Define the following two matrices $M_{\rm 6d}^{(0)}$ and $M_{\rm 6d}^{(1)}$.
    \begin{align}
        M_{\rm 6d}^{(0)} &\coloneqq 
        \begin{pmatrix}
            1 & 0 & 0 & \Delta_{\rm oe}^{\rm hom} & 0 & 0 \\
            0 & 1 & \sigma_{\rm oo}^{\rm hom} & \sigma_{\rm (+,o)(-,e)}^{\rm hom} & 0 & 0 \\
            0 & \sigma_{\rm oo}^{\rm hom} & \kappa C_{\rm o} + \lambda_{\rm oo}^{\rm hom} &  \kappa \sqrt{C_{\rm o}C_{\rm e}} + \lambda_{\rm (+,o)(-,e)}^{\rm hom} & \sigma_{\rm eo}^{\rm hom} & 0 \\
            \Delta_{\rm oe}^{\rm hom} & \sigma_{\rm (+,o)(-,e)}^{\rm hom} & \kappa \sqrt{C_{\rm o}C_{\rm e}} + \lambda_{\rm (+,o)(-,e)}^{\rm hom} & \kappa C_{\rm e} + \lambda_{\rm ee}^{\rm hom} - \gamma & \sigma_{\rm (-,e)(-,e)}^{\rm hom} & \Delta_{\rm ee}^{\rm hom} \\
            0 & 0 & \sigma_{\rm eo}^{\rm hom} & \sigma_{\rm (-,e)(-,e)}^{\rm hom} & 1 - \gamma & 0 \\
            0 & 0 & 0 & \Delta_{\rm ee}^{\rm hom} & 0 & 1 - \gamma
        \end{pmatrix}, \label{eq:M_6d_0}\\[1ex]
        M_{\rm 6d}^{(1)} &\coloneqq 
        \begin{pmatrix}
            1 - \gamma & 0 & 0 & \Delta_{\rm oe}^{\rm hom} & 0 & 0 \\
            0 & 1 - \gamma & \sigma_{\rm oo}^{\rm hom} & -\sigma_{\rm (+,o)(-,e)}^{\rm hom} & 0 & 0 \\
            0 & \sigma_{\rm oo}^{\rm hom} & \kappa C_{\rm o} + \lambda_{\rm oo}^{\rm hom} - \gamma & \kappa \sqrt{C_{\rm o}C_{\rm e}} - \lambda_{\rm (+,o)(-,e)}^{\rm hom} & \sigma_{\rm eo}^{\rm hom} & 0 \\
            \Delta_{\rm oe}^{\rm hom} & -\sigma_{\rm (+,o)(-,e)}^{\rm hom} & \kappa \sqrt{C_{\rm o}C_{\rm e}} - \lambda_{\rm (+,o)(-,e)}^{\rm hom} & \kappa C_{\rm e} + \lambda_{\rm ee}^{\rm hom} & -\sigma_{\rm (-,e)(-,e)}^{\rm hom} & \Delta_{\rm ee}^{\rm hom} \\
            0 & 0 & \sigma_{\rm eo}^{\rm hom} & -\sigma_{\rm (-,e)(-,e)}^{\rm hom} & 1 & 0 \\
            0 & 0 & 0 & \Delta_{\rm ee}^{\rm hom} & 0 & 1 
        \end{pmatrix}.\label{eq:M_6d_1}
    \end{align}
    Define a convex function
    \begin{equation}
        B^{\rm hom}(\kappa,\gamma) \coloneqq \max\{\sigma_{\rm sup}(M_{\rm 6d}^{(0)}), \sigma_{\rm sup}(M_{\rm 6d}^{(1)})\}. \label{eq:B_kappa_gamma_hom}
    \end{equation}
    Then, for $\kappa,\gamma\geq 0$, we have
    \begin{equation}
        M^{\rm hom}[\kappa,\gamma] \leq B^{\rm hom}(\kappa,\gamma) I_{AC}.
        \label{eq:bounded_by_B}
    \end{equation}
\end{cor}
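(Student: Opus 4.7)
The plan is to reduce the inequality~\eqref{eq:bounded_by_B} to two explicit six-dimensional eigenvalue problems via Lemma~\ref{lemma:operator_ineq} applied blockwise. First I would observe that the complementary projections $\Pi^{(+,{\rm od}),(-,{\rm ev})}_{AC}$ and $\Pi^{(-,{\rm od}),(+,{\rm ev})}_{AC}$ of Eqs.~\eqref{eq:plus_od_minus_ev}--\eqref{eq:minus_od_plus_ev} partition $I_{AC}$, and every term in $M^{\rm hom}[\kappa,\gamma]$ is block-diagonal with respect to them: $M^{\rm hom}_{\rm ph}$ is a direct sum by Eq.~\eqref{eq:hom_direct_sum}, the rank-one summands $\ket{\phi_\pm}\!\bra{\phi_\pm}$ of $\Pi^{\rm fid}$ each lie in a single block by Eqs.~\eqref{eq:phi_plus}--\eqref{eq:phi_minus}, and $\Pi^{\rm trash}_-=\ket{-}\!\bra{-}_A\otimes\Pi_{\rm ev}+\ket{-}\!\bra{-}_A\otimes\Pi_{\rm od}$ splits with one summand per block. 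Hence $M^{\rm hom}[\kappa,\gamma]=M^{(0)}\oplus M^{(1)}$ and it suffices to bound each summand separately.

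In each block the lemma is applied with $\Pi_\pm$ the two infinite-rank sectors that partition it. For the $\Pi^{(+,{\rm od}),(-,{\rm ev})}$ block I take $\Pi_+=\ket{+}\!\bra{+}_A\otimes\Pi_{\rm od}$ and $\Pi_-=\ket{-}\!\bra{-}_A\otimes\Pi_{\rm ev}$; the vector $\ket{\psi}=\sqrt{\kappa}\ket{\phi_-}_{AC}$, so that $\ket{\psi}\!\bra{\psi}$ reproduces the in-block piece $\kappa\ket{\phi_-}\!\bra{\phi_-}$ of $\kappa\Pi^{\rm fid}$; and $\gamma_+=0$, $\gamma_-=\gamma$ to absorb the in-block piece of $\gamma\Pi^{\rm trash}_-$. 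The role of $M$ is played by the in-block restriction of $M^{\rm hom}_{\rm ph}$, which satisfies $M\le\Pi_++\Pi_-$ with $\alpha=1$ because $M^{\rm hom}_{\rm ph}={\cal F}^{{\rm hom}\,\ddagger}_{AC\to B}(\ket{-}\!\bra{-}_B)\le I_{AC}$ is the image of a projection under the dual of the trace-non-increasing CP map ${\cal F}^{\rm hom}_{AC\to B}$. The other block is treated symmetrically with $\ket{\psi}=\sqrt{\kappa}\ket{\phi_+}_{AC}$ and $\gamma_+=\gamma$, $\gamma_-=0$, keeping the convention ``$+$ = odd, $-$ = even'' for the sector labels.

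It then remains to identify the abstract constants $C_\pm,\lambda_{\pm\pm},\lambda_{+-},\sigma_{\pm+},\sigma_{\pm-},\Delta_{\pm-}$ of Eq.~\eqref{eq:def_of_M_6d} with those of the corollary. The norms $C_+=\kappa C_{\rm o}$, $C_-=\kappa C_{\rm e}$ follow from $\|\Pi_{\rm od/ev}\ket{\beta}\|^2=C_{\rm o/e}$. For the remainder I would diagonalize the hermitian $2\times 2$ operator $\bra{0}_B\tau^{\rm hom}_{AB}(x)\ket{1}_B$ of Eq.~\eqref{eq:tau_hom_0_1} and verify that its eigenvectors $\ket{u_\pm^{\rm hom}(x)}_A$ are parameterized precisely by the angle $\theta^{\rm hom}_{\mu,\beta}(x)$ of Eq.~\eqref{eq:def_of_theta}, so that $|\braket{\pm|u_+^{\rm hom}(x)}|^2=(1\pm\cos\theta^{\rm hom}_{\mu,\beta}(x))/2$, with the cross products carrying $\sin\theta^{\rm hom}_{\mu,\beta}(x)$. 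Substituting into the rank-one form of the in-block $M^{\rm hom}_{\rm ph}$ shows that its $({\rm od},{\rm od})$, $({\rm ev},{\rm ev})$, and $({\rm od},{\rm ev})$ components act on the optical mode as $M^{\rm hom}_{\rm oo}$, $M^{\rm hom}_{\rm ee}$, $M^{\rm hom}_{(\pm,o)(\mp,e)}$ of the statement, so that the $\bra{\beta}\cdot\ket{\beta}$-expectations and associated bilinear expectations and norms reproduce the constants $\lambda^{\rm hom}$, $\sigma^{\rm hom}$, and $\Delta^{\rm hom}$ in accordance with Eqs.~\eqref{eq:def_c_pm}--\eqref{eq:def_delta_pm}. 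A term-by-term matching then shows that the lemma's bounding matrix for block $j$ coincides with $M_{\rm 6d}^{(j)}$, and the maximum of the two spectral suprema yields~\eqref{eq:bounded_by_B}.

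The main obstacle is this final identification: correctly assigning the odd/even sectors to the $\pm$ labels of the lemma in each block, tracking how $\kappa$ enters $C_\pm$ but cancels in the $\lambda$'s and $\sigma$'s via the $C_\pm^{-1}$ in their definitions, and verifying that the sign flip between the off-diagonal entries of $M_{\rm 6d}^{(0)}$ and $M_{\rm 6d}^{(1)}$ (e.g., $\kappa\sqrt{C_{\rm o}C_{\rm e}}\pm\lambda^{\rm hom}_{(+,o)(-,e)}$) is precisely that forced by the orthogonality of $\ket{u_+^{\rm hom}(x)}$ and $\ket{u_-^{\rm hom}(x)}$. Once this bookkeeping is handled, the conclusion follows by a single invocation of Lemma~\ref{lemma:operator_ineq} per block.
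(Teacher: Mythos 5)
Your plan reproduces the paper's proof: the same block decomposition of $M^{\rm hom}[\kappa,\gamma]$ with respect to $\Pi_{AC}^{(+,\mathrm{od}),(-,\mathrm{ev})}$ and $\Pi_{AC}^{(-,\mathrm{od}),(+,\mathrm{ev})}$, the same choices $\Pi_\pm$, $\ket{\psi}=\sqrt{\kappa}\ket{\phi_\mp}$, $\alpha=1$, $(\gamma_+,\gamma_-)=(0,\gamma)$ resp.\ $(\gamma,0)$ per block in Lemma~\ref{lemma:operator_ineq}, and the same identification of the constants via $\theta_{\mu,\beta}^{\rm hom}(x)$. You also supply the justification $M^{\rm hom}_{\rm ph}={\cal F}^{{\rm hom}\,\ddagger}_{AC\to B}(\ket{-}\!\bra{-}_B)\le I_{AC}$ for $\alpha=1$, which the paper leaves implicit.
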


\begin{proof}
    We first derive the explicit form of $\ket{u_\pm^{\rm hom}(x)}_A$ introduced in Eq.~\eqref{eq:W_A_hom}. Notice that
    \begin{align}
        &1 - 2q_{\mu,\beta} = e^{-2(\mu -\beta^2)}, \label{eq:one_minus_2q}\\
        &\sqrt{\frac{g_{\beta,1/4}(x)}{g_{-\beta,1/4}(x)}} = e^{4\beta x}. \label{eq:gaussian_ratio}
    \end{align}
    Let $\theta(x)$ be defined to satisfy
    \begin{equation}
        |\theta(x)| < \frac{\pi}{2}, \qquad \tan\theta(x) = \mathrm{Tr}\left(Z_A \bra{0}_B\tau_{AB}^{\rm hom}(x)\ket{1}_B \right) \Bigl/ \mathrm{Tr}\left(X_A \bra{0}_B\tau_{AB}^{\rm hom}(x)\ket{1}_B \right). \label{eq:theta_def_alt}
    \end{equation}
    Noticing that $\mathrm{Tr}\left(Y_A \bra{0}_B\tau_{AB}^{\rm hom}(x)\ket{1}_B \right)=0$, we have 
    \begin{equation}
        \ket{u_\pm^{\rm hom}(x)}_A = \cos\frac{\theta(x)}{2}\ket{\pm}_A \pm \sin\frac{\theta(x)}{2} \ket{\mp}_A. \label{eq:form_of_u_pm}
    \end{equation}
    From Eqs.~\eqref{eq:tau_hom_0_1}, \eqref{eq:one_minus_2q}, \eqref{eq:gaussian_ratio}, and \eqref{eq:theta_def_alt}, we can see that $\theta(x)$ coincides with $\theta_{\mu,\beta}^{\rm hom}(x)$ defined in Eq.~\eqref{eq:def_of_theta}.
    We now observe that
    \begin{align}
        &|\braket{+|u_+^{\rm hom}(x)}|^2 = |\braket{-|u_-^{\rm hom}(x)}|^2 = \cos^2\biggl(\frac{\theta_{\mu,\beta}^{\rm hom}(x)}{2}\biggr) = \frac{1+\cos\theta_{\mu,\beta}^{\rm hom}(x)}{2}, \label{eq:sandwiched_hom_od_od}\\
        &|\braket{-|u_+^{\rm hom}(x)}|^2 = |\braket{+|u_-^{\rm hom}(x)}|^2 = \sin^2\biggl(\frac{\theta_{\mu,\beta}^{\rm hom}(x)}{2}\biggr) = \frac{1-\cos\theta_{\mu,\beta}^{\rm hom}(x)}{2} , \label{eq:sandwiched_hom_ev_ev}\\
        \begin{split}
            &\braket{+|u_+^{\rm hom}(x)}\!\braket{u_+^{\rm hom}(x)|-} = \braket{-|u_+^{\rm hom}(x)}\!\braket{u_+^{\rm hom}(x)|+} = \sin\biggl(\frac{\theta_{\mu,\beta}^{\rm hom}(x)}{2}\biggr)\cos\biggl(\frac{\theta_{\mu,\beta}^{\rm hom}(x)}{2}\biggr) \\
            &= -\braket{+|u_-^{\rm hom}(x)}\!\braket{u_-^{\rm hom}(x)|-} = -\braket{-|u_-^{\rm hom}(x)}\!\braket{u_-^{\rm hom}(x)|+} = \frac{\sin\theta_{\mu,\beta}^{\rm hom}(x)}{2}.
        \end{split} \label{eq:sandwiched_hom_plus_od_minus_ev}
    \end{align}
    From Eq.~\eqref{eq:hom_direct_sum} as well as Eqs.~\eqref{eq:def_of_pi_fid}--\eqref{eq:def_of_M_kappa_gamma}, it is obvious that
    \begin{equation}
        M^{\rm hom}[\kappa,\gamma] = \Pi_{AC}^{(+,\mathrm{od}), (-,\mathrm{ev})} M^{\rm hom}[\kappa,\gamma]\, \Pi_{AC}^{(+,\mathrm{od}), (-,\mathrm{ev})} + \Pi_{AC}^{(-,\mathrm{od}),(+,\mathrm{ev})} M^{\rm hom}[\kappa,\gamma]\, \Pi_{AC}^{(-,\mathrm{od}), (+,\mathrm{ev})},
    \end{equation}
    where the two orthogonal projections $\Pi_{AC}^{(+,\mathrm{od}),(-,\mathrm{ev})}$ and $\Pi_{AC}^{(-,\mathrm{od}), (+,\mathrm{ev})}$ are defined in Eqs.~\eqref{eq:plus_od_minus_ev} and \eqref{eq:minus_od_plus_ev}.
    Then we apply Lemma \ref{lemma:operator_ineq} respectively to the operators $\Pi_{AC}^{(+,\mathrm{od}), (-,\mathrm{ev})} M^{\rm hom}[\kappa,\gamma]\, \Pi_{AC}^{(+,\mathrm{od}), (-,\mathrm{ev})}$ and $\Pi_{AC}^{(-,\mathrm{od}),(+,\mathrm{ev})} M^{\rm hom}[\kappa,\gamma]\, \Pi_{AC}^{(-,\mathrm{od}), (+,\mathrm{ev})}$.  For $\Pi_{AC}^{(+,\mathrm{od}), (-,\mathrm{ev})} M^{\rm hom}[\kappa,\gamma]\, \Pi_{AC}^{(+,\mathrm{od}), (-,\mathrm{ev})}$, we set, by using Eqs.~\eqref{eq:sandwiched_hom_od_od}--\eqref{eq:sandwiched_hom_plus_od_minus_ev}, that
    \begin{align}
        \Pi_{\pm} &= \ket{\pm}\!\bra{\pm}_A \otimes \Pi_{\rm od(ev)}, \\
        M &= \Pi_{AC}^{(+,\mathrm{od}), (-,\mathrm{ev})} M^{\rm hom}_{\rm ph} \Pi_{AC}^{(+,\mathrm{od}), (-,\mathrm{ev})} \\
        &= \ket{+}\!\bra{+}_A\otimes M^{\rm hom}_{\rm oo} + \ket{-}\!\bra{-}_A\otimes M^{\rm hom}_{\rm ee} + \left( \ket{+}\!\bra{-}_A\otimes M^{\rm hom}_{\rm (+,o)(-,e)} + \ket{-}\!\bra{+}_A\otimes M^{\rm hom}_{\rm (-,e)(+,o)} \right), \\
        \ket{\psi} &= \sqrt{\kappa} \ket{\phi_-}_{AC}, \\
        \alpha &= 1, \quad \gamma_+ = 0, \quad \gamma_{-} = \gamma,
    \end{align}
    where $\ket{\phi_-}_{AC}$ is defined in Eq.~\eqref{eq:phi_minus}.  Since so-defined $M$ only has continuous spectrum, we can apply Lemma~\ref{lemma:operator_ineq} and obtain
    \begin{equation}
        \sigma_{\rm sup}\!\(\Pi_{AC}^{(+,\mathrm{od}), (-,\mathrm{ev})} M^{\rm hom}[\kappa,\gamma]\, \Pi_{AC}^{(+,\mathrm{od}), (-,\mathrm{ev})}\) \leq \sigma_{\rm sup}(M_{\rm 6d}^{(0)}). \label{eq:bounded_by_6d_0}
    \end{equation}
    In the same way, we apply Lemma \ref{lemma:operator_ineq} to $\Pi_{AC}^{(-,\mathrm{od}),(+,\mathrm{ev})} M^{\rm hom}[\kappa,\gamma]\, \Pi_{AC}^{(-,\mathrm{od}), (+,\mathrm{ev})}$.  Using Eqs.~\eqref{eq:sandwiched_hom_od_od}--\eqref{eq:sandwiched_hom_plus_od_minus_ev}, we set
    \begin{align}
        \Pi_{\pm} &= \ket{\mp}\!\bra{\mp}_A \otimes \Pi_{\rm od(ev)}, \\
        M &= \Pi_{AC}^{(-,\mathrm{od}), (+,\mathrm{ev})} M^{\rm hom}_{\rm ph} \Pi_{AC}^{(-,\mathrm{od}), (+,\mathrm{ev})} \\
        &= \ket{-}\!\bra{-}_A\otimes M^{\rm hom}_{\rm oo} + \ket{+}\!\bra{+}_A\otimes M^{\rm hom}_{\rm ee} + \left( \ket{-}\!\bra{+}_A\otimes M^{\rm hom}_{\rm (-,o)(+,e)} + \ket{+}\!\bra{-}_A\otimes M^{\rm hom}_{\rm (+,e)(-,o)} \right), \\
        &=\ket{-}\!\bra{-}_A\otimes M^{\rm hom}_{\rm oo} + \ket{+}\!\bra{+}_A\otimes M^{\rm hom}_{\rm ee} - \left( \ket{-}\!\bra{+}_A\otimes M^{\rm hom}_{\rm (+,o)(-,e)} + \ket{+}\!\bra{-}_A\otimes M^{\rm hom}_{\rm (-,e)(+,o)} \right), \\
        \ket{\psi} &= \sqrt{\kappa} \ket{\phi_+}_{AC}, \\
        \alpha &= 1, \quad \gamma_+ = \gamma, \quad \gamma_{-} = 0,
    \end{align}
    where $\ket{\phi_+}_{AC}$ is defined in Eq.~\eqref{eq:phi_plus}.
    Then, we observe 
    \begin{equation}
        \sigma_{\rm sup}\!\(\Pi_{AC}^{(-,\mathrm{od}),(+,\mathrm{ev})} M^{\rm hom}[\kappa,\gamma]\, \Pi_{AC}^{(-,\mathrm{od}), (+,\mathrm{ev})}\) \leq \sigma_{\rm sup}(M_{\rm 6d}^{(1)}).
        \label{eq:bounded_by_6d_1}
    \end{equation}
    Combining inequalities \eqref{eq:bounded_by_6d_0} and \eqref{eq:bounded_by_6d_1} completes the proof.
\end{proof}

Next, we consider Heterodyne protocol.
\begin{cor}\label{cor:ope_ineq_M_het}
    Let $\ket{\beta}$ be a coherent state and $\theta_{\mu,\beta}^{\rm hom}(x)$ be defined to satisfy
    \begin{equation}
        |\theta_{\mu,\beta}^{\rm het}(\omega_r)| \leq \frac{\pi}{2}, \qquad \tan\theta_{\mu,\beta}^{\rm het}(\omega_r) = e^{-2(\mu - \beta^2)}\sinh(2\beta \omega_r).\label{eq:def_of_theta_het}
    \end{equation}  
    Let $\Pi_{\rm ev(od)}$ and $M^{\rm het}[\kappa,\gamma]$ be as defined in the main text, and let $M^{\rm het}_{\rm oo}$, $M^{\rm het}_{\rm ee}$, $M^{\rm het}_{\rm (\pm,o)(\mp,e)}$, and $M^{\rm het}_{\rm (\mp,e)(\pm,o)}$ be defined as follows:
    \begin{align}
        \begin{split}
        M^{\rm het}_{\rm oo} &\coloneqq \iint_{-\infty}^{\infty} f_{\rm suc}(\omega_r)d\omega_r dx \;\Pi_{\rm od} \Bigl[\ket{x}\!\braket{x|\omega_r}\!\braket{\omega_r|x}\!\bra{x} \\
        & \hspace{1.5cm}  +\frac{\cos\theta_{\mu,\beta}^{\rm het}(\omega_r)}{2}\bigl(\ket{x}\!\braket{x|\omega_r}\!\braket{\omega_r|x-\beta}\!\bra{x-\beta} + \ket{x-\beta}\!\braket{x-\beta|\omega_r}\!\braket{\omega_r|x}\!\bra{x}\bigr)\Bigr] \Pi_{\rm od}, \end{split} \\
        \begin{split}
        M^{\rm het}_{\rm ee} &\coloneqq \iint_{-\infty}^{\infty} f_{\rm suc}(\omega_r)d\omega_r dx \;\Pi_{\rm ev} \Bigl[\ket{x}\!\braket{x|\omega_r}\!\braket{\omega_r|x}\!\bra{x} \\
        & \hspace{1.5cm}  -\frac{\cos\theta_{\mu,\beta}^{\rm het}(\omega_r)}{2}\bigl(\ket{x}\!\braket{x|\omega_r}\!\braket{\omega_r|x-\beta}\!\bra{x-\beta} + \ket{x-\beta}\!\braket{x-\beta|\omega_r}\!\braket{\omega_r|x}\!\bra{x}\bigr)\Bigr] \Pi_{\rm ev}, \end{split}\\
        \begin{split}
        M^{\rm het }_{\rm (+,o)(-,e)} &\coloneqq \iint_{-\infty}^{\infty}f_{\rm suc}(\omega_r) d\omega_r dx\; \Pi_{\rm od} \Bigl[\sin\theta_{\mu,\beta}^{\rm het}(\omega_r)\ket{x}\!\braket{x|\omega_r}\!\braket{\omega_r|x}\!\bra{x} \\
        &\hspace{1.5cm} -\frac{\cos\theta_{\mu,\beta}^{\rm het}(\omega_r)}{2}\bigl(\ket{x}\!\braket{x|\omega_r}\!\braket{\omega_r|x-\beta}\!\bra{x-\beta} - \ket{x-\beta}\!\braket{x-\beta|\omega_r}\!\braket{\omega_r|x}\!\bra{x}\bigr)\Bigr] \Pi_{\rm ev}, \end{split}\\
        M^{\rm het}_{\rm (-,e)(+,o)} &\coloneqq \left(M^{\rm het }_{\rm (+,o)(-,e)}\right)^{\dagger},\\
        \begin{split}
        M^{\rm het }_{\rm (-,o)(+,e)} &\coloneqq \iint_{-\infty}^{\infty}f_{\rm suc}(\omega_r) d\omega_r dx\; \Pi_{\rm od} \Bigl[-\sin\theta_{\mu,\beta}^{\rm het}(\omega_r)\ket{x}\!\braket{x|\omega_r}\!\braket{\omega_r|x}\!\bra{x} \\
        &\hspace{1.5cm} -\frac{\cos\theta_{\mu,\beta}^{\rm het}(\omega_r)}{2}\bigl(\ket{x}\!\braket{x|\omega_r}\!\braket{\omega_r|x-\beta}\!\bra{x-\beta} - \ket{x-\beta}\!\braket{x-\beta|\omega_r}\!\braket{\omega_r|x}\!\bra{x}\bigr)\Bigr] \Pi_{\rm ev}, \end{split} \\
        M^{\rm het }_{\rm (+,e)(-,o)} &\coloneqq \left(M^{\rm het }_{\rm (-,o)(+,e)}\right)^{\dagger},
    \end{align}  
    Define the following parameters:
    \begin{gather}
        C_{\rm o} \coloneqq \bra{\beta}\Pi_{\rm od}\ket{\beta} = e^{-|\beta|^2} \sinh|\beta|^2 , \quad 
        C_{\rm e} \coloneqq \bra{\beta}\Pi_{\rm ev}\ket{\beta} = e^{-|\beta|^2}\cosh |\beta|^2, 
         \\
        \lambda_{\rm oo}^{\rm het} \coloneqq C_{\rm o}^{-1}\bra{\beta} M^{\rm het}_{\rm oo} \ket{\beta}, \quad
        \lambda_{\rm ee}^{\rm het} \coloneqq C_{\rm e}^{-1}\bra{\beta} M^{\rm het}_{\rm ee} \ket{\beta},  \\
        \lambda_{\rm (+,o)(-,e)}^{\rm het} \coloneqq (C_{\rm o}C_{\rm e})^{-\frac{1}{2}}\bra{\beta} M^{\rm het}_{\rm (+,o)(-,e)} \ket{\beta} = (\lambda_{\rm (+,o)(-,e)}^{\rm het})^*, \\
        \lambda_{\rm (-,o)(+,e)}^{\rm het} \coloneqq (C_{\rm o}C_{\rm e})^{-\frac{1}{2}}\bra{\beta} M^{\rm het}_{\rm (-,o)(+,e)} \ket{\beta} = (\lambda_{\rm (-,o)(+,e)}^{\rm het})^*, \\
        \sigma_{\rm oo}^{\rm het} \coloneqq \( C_{\rm o}^{-1} \left\| M^{\rm het}_{\rm oo} \ket{\beta} \right\|^2 - (\lambda_{\rm oo}^{\rm het})^2 \)^{\frac{1}{2}}, \\
        \sigma_{\rm (-,e)(+,o)}^{\rm het} \coloneqq \(C_{\rm o}^{-1} \left\| M^{\rm het}_{\rm (-,e)(+,o)} \ket{\beta} \right\|^2 - |\lambda_{\rm (+,o)(-,e)}^{\rm het}|^2 \)^{\frac{1}{2}}, \\
        \sigma_{\rm (+,e)(-,o)}^{\rm het} \coloneqq \(C_{\rm o}^{-1} \left\| M^{\rm het}_{\rm (+,e)(-,o)} \ket{\beta} \right\|^2 - |\lambda_{\rm (-,o)(+,e)}^{\rm het}|^2 \)^{\frac{1}{2}}, \\
        \sigma_{\rm (+,o)(-,e)}^{\rm het} \coloneqq \sigma^{-1}_{\rm oo} \((C_{\rm o} C_{\rm e})^{-\frac{1}{2}}\bra{\beta} M^{\rm het}_{\rm oo} M^{\rm het}_{\rm (+,o)(-,e)} \ket{\beta} - \lambda^{\rm het}_{\rm oo}\lambda^{\rm het}_{\rm (+,o)(-,e)}\) = (\sigma_{\rm (+,o)(-,e)}^{\rm het})^*, \\
        \sigma_{\rm (-,o)(+,e)}^{\rm het} \coloneqq \sigma^{-1}_{\rm oo} \((C_{\rm o} C_{\rm e})^{-\frac{1}{2}}\bra{\beta} M^{\rm het}_{\rm oo} M^{\rm het}_{\rm (-,o)(+,e)} \ket{\beta} - \lambda^{\rm het}_{\rm oo}\lambda^{\rm het}_{\rm (-,o)(+,e)}\) = (\sigma_{\rm (-,o)(+,e)}^{\rm het})^*, \\
        \sigma^{\rm het}_{\rm (-,e)(-,e)} \coloneqq [\sigma_{\rm (-,e)(+,o)}^{\rm het}]^{-1} \((C_{\rm o} C_{\rm e})^{-\frac{1}{2}}\bra{\beta} M^{\rm het}_{\rm (+,o)(-,e)} M^{\rm het}_{\rm ee} \ket{\beta} - \lambda^{\rm het}_{\rm (+,o)(-,e)}\lambda^{\rm het}_{\rm ee}\) = (\sigma^{\rm het}_{\rm (-,e)(-,e)})^*,\\
        \sigma^{\rm het}_{\rm (+,e)(+,e)} \coloneqq [\sigma_{\rm (+,e)(-,o)}^{\rm het}]^{-1} \((C_{\rm o} C_{\rm e})^{-\frac{1}{2}}\bra{\beta} M^{\rm het}_{\rm (-,o)(+,e)} M^{\rm het}_{\rm ee} \ket{\beta} - \lambda^{\rm het}_{\rm (-,o)(+,e)}\lambda^{\rm het}_{\rm ee}\) = (\sigma^{\rm het}_{\rm (+,e)(+,e)} )^*,\\
        \Delta^{\rm het}_{\rm (+,o)(-,e)} \coloneqq \(C_{\rm e}^{-1}\left\|M^{\rm het}_{\rm (+,o)(-,e)}\ket{\beta}\right\|^2 - |\lambda^{\rm het}_{\rm (+,o)(-,e)}|^2 - |\sigma^{\rm het}_{\rm (+,o)(-,e)}|^2 \)^{\frac{1}{2}}, \\
        \Delta^{\rm het}_{\rm (-,o)(+,e)} \coloneqq \(C_{\rm e}^{-1}\left\|M^{\rm het}_{\rm (-,o)(+,e)}\ket{\beta}\right\|^2 - |\lambda^{\rm het}_{\rm (-,o)(+,e)}|^2 - |\sigma^{\rm het}_{\rm (-,o)(+,e)}|^2 \)^{\frac{1}{2}}, \\
        \Delta_{\rm (-,e)(-,e)}^{\rm het} \coloneqq \(C_{\rm e}^{-1}\left\|M^{\rm het}_{\rm ee} \ket{\beta}\right\|^2 - (\lambda^{\rm het}_{\rm ee})^2 - |\sigma^{\rm het}_{\rm (-,e)(-,e)}|^2 \)^{\frac{1}{2}},\\
        \Delta_{\rm (+,e)(+,e)}^{\rm het} \coloneqq \(C_{\rm e}^{-1}\left\|M^{\rm het}_{\rm ee} \ket{\beta}\right\|^2 - (\lambda^{\rm het}_{\rm ee})^2 - |\sigma^{\rm het}_{\rm (+,e)(+,e)}|^2 \)^{\frac{1}{2}}.
    \end{gather}
    Define the following two matrices $M'^{(0)}_{\rm 6d}$ and $M'^{(1)}_{\rm 6d}$.
    \begin{align}
        M'^{(0)}_{\rm 6d} &\coloneqq 
        \begin{pmatrix}
            1 & 0 & 0 & \Delta^{\rm het}_{\rm (+,o)(-,e)} & 0 & 0 \\
            0 & 1 & \sigma^{\rm het}_{\rm oo} & \sigma^{\rm het}_{\rm (+,o)(-,e)} & 0 & 0 \\
            0 & \sigma^{\rm het}_{\rm oo} & \kappa C_{\rm o} + \lambda^{\rm het}_{\rm oo} &  \kappa \sqrt{C_{\rm o}C_{\rm e}} + \lambda^{\rm het}_{\rm (+,o)(-,e)} & \sigma^{\rm het}_{\rm (-,e)(+,o)} & 0 \\
            \Delta^{\rm het}_{\rm (+,o)(-,e)} & \sigma^{\rm het}_{\rm (+,o)(-,e)} & \kappa \sqrt{C_{\rm o}C_{\rm e}} + \lambda^{\rm het}_{\rm (+,o)(-,e)} & \kappa C_{\rm e} + \lambda^{\rm het}_{\rm (-,e)(-,e)} - \gamma & \sigma^{\rm het}_{\rm (-,e)(-,e)} & \Delta^{\rm het}_{\rm (-,e)(-,e)} \\
            0 & 0 & \sigma^{\rm het}_{\rm (-,e)(+,o)} & \sigma^{\rm het}_{\rm (-,e)(-,e)} & 1 - \gamma & 0 \\
            0 & 0 & 0 & \Delta^{\rm het}_{\rm (-,e)(-,e)} & 0 & 1 - \gamma
        \end{pmatrix}, \\[1ex]
        M'^{(1)}_{\rm 6d} &\coloneqq 
        \begin{pmatrix}
            1 - \gamma & 0 & 0 & \Delta^{\rm het}_{\rm (-,o)(+,e)} & 0 & 0 \\
            0 & 1 - \gamma & \sigma^{\rm het}_{\rm oo} & \sigma^{\rm het}_{\rm (-,o)(+,e)} & 0 & 0 \\
            0 & \sigma^{\rm het}_{\rm oo} & \kappa C_{\rm o} + \lambda^{\rm het}_{\rm oo} - \gamma & \kappa \sqrt{C_{\rm o}C_{\rm e}} + \lambda^{\rm het}_{\rm (-,o)(+,e)} & \sigma^{\rm het}_{\rm (+,e)(-,o)} & 0 \\
            \Delta^{\rm het}_{\rm (-,o)(+,e)} & \sigma^{\rm het}_{\rm (-,o)(+,e)} & \kappa \sqrt{C_{\rm o}C_{\rm e}} + \lambda^{\rm het}_{\rm (-,o)(+,e)} & \kappa C_{\rm e} + \lambda^{\rm het}_{\rm ee} & \sigma^{\rm het}_{\rm (+,e)(+,e)} & \Delta^{\rm het}_{\rm (+,e)(+,e)} \\
            0 & 0 & \sigma^{\rm het}_{\rm (+,e)(-,o)} & \sigma^{\rm het}_{\rm (+,e)(+,e)} & 1 & 0 \\
            0 & 0 & 0 & \Delta^{\rm het}_{\rm (+,e)(+,e)} & 0 & 1 
        \end{pmatrix}.
    \end{align}
    Define a convex function
    \begin{equation}
        B^{\rm het}(\kappa,\gamma) \coloneqq \max\{\sigma_{\rm sup}(M'^{(0)}_{\rm 6d}), \sigma_{\rm sup}(M'^{(1)}_{\rm 6d})\}. \label{eq:B_kappa_gamma_het}
    \end{equation}
    Then, for $\kappa,\gamma\geq 0$, we have
    \begin{equation}
        M^{\rm het}[\kappa,\gamma] \leq B^{\rm het}(\kappa,\gamma) I_{AC}.
        \label{eq:bounded_by_B_het}
    \end{equation}
\end{cor}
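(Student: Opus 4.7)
The plan is to mirror the proof of Corollary~\ref{cor:ope_ineq_M} for the heterodyne setting. The starting point is to diagonalize $\bra{0}_B\tau_{AB}^{\rm het}(\omega_r)\ket{1}_B$ from Eq.~\eqref{eq:tau_het_0_1}. Since this operator has the same algebraic structure as its homodyne counterpart Eq.~\eqref{eq:tau_hom_0_1} with $g_{\pm\beta,1/4}(x)$ replaced by $g_{\pm\beta,1/2}(\omega_r)$, the derivation that produced Eq.~\eqref{eq:form_of_u_pm} yields $\ket{u_\pm^{\rm het}(\omega_r)}_A = \cos(\theta_{\mu,\beta}^{\rm het}(\omega_r)/2)\ket{\pm}_A \pm \sin(\theta_{\mu,\beta}^{\rm het}(\omega_r)/2)\ket{\mp}_A$, where $\theta_{\mu,\beta}^{\rm het}(\omega_r)$ satisfies Eq.~\eqref{eq:def_of_theta_het} because the ratio $\sqrt{g_{\beta,1/2}(\omega_r)/g_{-\beta,1/2}(\omega_r)} = e^{2\beta\omega_r}$ replaces the homodyne $e^{4\beta x}$, together with $1-2q_{\mu,\beta} = e^{-2(\mu-\beta^2)}$.

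The second step is to establish the block structure. Equation~\eqref{eq:het_integrated} already exhibits $M^{\rm het}_{\rm ph}$ as a sum of terms sandwiched by the orthogonal projections $\Pi_{AC}^{(+,\mathrm{od}),(-,\mathrm{ev})}$ and $\Pi_{AC}^{(-,\mathrm{od}),(+,\mathrm{ev})}$. Since $\ket{\phi_\pm}_{AC}$ lie in the respective blocks by Eqs.~\eqref{eq:phi_plus} and~\eqref{eq:phi_minus}, both $\Pi^{\rm fid}$ and $\Pi^{\rm trash}_-$ respect the same decomposition, and so does $M^{\rm het}[\kappa,\gamma]$. For the $(+,\mathrm{od}),(-,\mathrm{ev})$ block I substitute $\ket{0}_A = (\ket{+}+\ket{-})/\sqrt{2}$ and $\ket{1}_A = (\ket{+}-\ket{-})/\sqrt{2}$ into $O_{AC}^\beta(x)$ of Eq.~\eqref{eq:def_O}, act on $\ket{u_+^{\rm het}(\omega_r)}_A\otimes\ket{\omega_r}_C$, and form $\hat P(\cdot)$. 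Using $|\braket{0|u_+^{\rm het}}|^2 = (1+\sin\theta_{\mu,\beta}^{\rm het})/2$, $|\braket{1|u_+^{\rm het}}|^2 = (1-\sin\theta_{\mu,\beta}^{\rm het})/2$, and $\braket{0|u_+^{\rm het}}\!\braket{u_+^{\rm het}|1} = \cos\theta_{\mu,\beta}^{\rm het}/2$, and then performing the translation $x\to x+\beta$ in the $x$-integral to merge $\ket{x-\beta}|\braket{x-\beta|\omega_r}|^2\bra{x-\beta}$ with $\ket{x}|\braket{x|\omega_r}|^2\bra{x}$, the $\ket{\pm}_A$-indexed matrix elements become exactly $M_{\rm oo}^{\rm het}$, $M_{\rm ee}^{\rm het}$, $M_{(+,\mathrm{o})(-,\mathrm{e})}^{\rm het}$, $M_{(-,\mathrm{e})(+,\mathrm{o})}^{\rm het}$ as defined in the statement. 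The analogous calculation on the other block with $\ket{u_-^{\rm het}(\omega_r)}_A$ produces $M_{\rm oo}^{\rm het}$, $M_{\rm ee}^{\rm het}$, $M_{(-,\mathrm{o})(+,\mathrm{e})}^{\rm het}$, $M_{(+,\mathrm{e})(-,\mathrm{o})}^{\rm het}$; the sign flip arises from $\braket{0|u_-^{\rm het}}\!\braket{u_-^{\rm het}|1}=-\cos\theta_{\mu,\beta}^{\rm het}/2$.

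Finally, I apply Lemma~\ref{lemma:operator_ineq} block by block. For the $(+,\mathrm{od}),(-,\mathrm{ev})$ block set $\Pi_\pm = \ket{\pm}\!\bra{\pm}_A\otimes\Pi_{\rm od/ev}$, $\ket{\psi}=\sqrt{\kappa}\,\ket{\phi_-}_{AC}$, $\alpha=1$, $\gamma_+=0$, $\gamma_-=\gamma$; the inner products in Eqs.~\eqref{eq:def_c_pm}--\eqref{eq:def_delta_pm} then coincide with $C_{\rm o,e}$, $\lambda_*^{\rm het}$, $\sigma_*^{\rm het}$, $\Delta_*^{\rm het}$ of the statement, so the resulting six-dimensional bound matrix is $M'^{(0)}_{\rm 6d}$. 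For the $(-,\mathrm{od}),(+,\mathrm{ev})$ block, take $\Pi_\pm=\ket{\mp}\!\bra{\mp}_A\otimes\Pi_{\rm od/ev}$, $\ket{\psi}=\sqrt{\kappa}\,\ket{\phi_+}_{AC}$, $\alpha=1$, $\gamma_+=\gamma$, $\gamma_-=0$, giving $M'^{(1)}_{\rm 6d}$. Taking the maximum of the two spectral suprema reproduces $B^{\rm het}(\kappa,\gamma)$ in Eq.~\eqref{eq:B_kappa_gamma_het} and yields Eq.~\eqref{eq:bounded_by_B_het}.

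The main obstacle is the bookkeeping in the second step: the two-term nature of $O_{AC}^\beta(x)$ produces four $\ket{x}$--$\ket{x-\beta}$ combinations per matrix element, and one must carefully verify that after the shift $x\to x+\beta$ the diagonal contributions combine with coefficients $1$ (for $M_{\rm oo/ee}^{\rm het}$) or $\sin\theta_{\mu,\beta}^{\rm het}(\omega_r)$ (for the off-diagonal blocks), while the cross terms retain the $\pm\cos\theta_{\mu,\beta}^{\rm het}(\omega_r)/2$ weights with the asymmetric sign pattern that distinguishes $M_{(+,\mathrm{o})(-,\mathrm{e})}^{\rm het}$ from $M_{(-,\mathrm{o})(+,\mathrm{e})}^{\rm het}$. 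Beyond this accounting the construction is mechanical.
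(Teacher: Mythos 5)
Your proposal follows the paper's proof essentially verbatim: derive $\ket{u_\pm^{\rm het}(\omega_r)}_A$ by analogy with the homodyne case, read off the $\Pi_{AC}^{(+,\mathrm{od}),(-,\mathrm{ev})}$/$\Pi_{AC}^{(-,\mathrm{od}),(+,\mathrm{ev})}$ block decomposition from Eq.~\eqref{eq:het_integrated}, identify the block matrix elements with $M^{\rm het}_{\rm oo}$, $M^{\rm het}_{\rm ee}$, and the off-diagonal operators, then apply Lemma~\ref{lemma:operator_ineq} with the same choices of $\Pi_\pm$, $\ket{\psi}$, $\alpha$, $\gamma_\pm$. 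The only difference is that you spell out the $\hat P(\cdot)$-expansion and $x\to x+\beta$ shift that the paper leaves as ``we observe that,'' and you omit the paper's remark that $M$ has purely continuous spectrum (which is needed to dismiss the eigenvector proviso in Lemma~\ref{lemma:operator_ineq}); neither is a substantive deviation.
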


\begin{proof}
    In the same way as Homodyne protocol, we have from Eqs.~\eqref{eq:tau_het_0_1} and \eqref{eq:W_A_het} that
    \begin{equation}
        \ket{u_\pm^{\rm het}(\omega_r)}_A = \cos\biggl(\frac{\theta_{\mu,\beta}^{\rm het}(\omega_r)}{2}\biggr)\ket{\pm}_A \pm \sin\biggl(\frac{\theta_{\mu,\beta}^{\rm het}(\omega_r)}{2}\biggr)\ket{\mp}_A.
    \end{equation}
    Combining this with Eqs.~\eqref{eq:het_integrated} and \eqref{eq:def_O}, we observe that
    \begin{align}
        &(\bra{+}_A\otimes\Pi_{\rm od}) M^{\rm het}_{\rm ph} (\ket{+}_A \otimes \Pi_{\rm od}) = (\bra{-}_A\otimes\Pi_{\rm od}) M^{\rm het}_{\rm ph} (\ket{-}_A \otimes \Pi_{\rm od}) = M^{\rm het}_{\rm oo},
        \label{eq:sandwiched_het_od_od} \\
        &(\bra{-}_A\otimes\Pi_{\rm ev}) M^{\rm het}_{\rm ph} (\ket{-}_A \otimes \Pi_{\rm ev}) = (\bra{+}_A\otimes\Pi_{\rm ev}) M^{\rm het}_{\rm ph} (\ket{+}_A \otimes \Pi_{\rm ev}) = M^{\rm het}_{\rm ee} 
        \label{eq:sandwiched_het_ev_ev}\\
        &(\bra{+}_A\otimes\Pi_{\rm od}) M^{\rm het}_{\rm ph} (\ket{-}_A \otimes \Pi_{\rm ev}) = \bigl[(\bra{-}_A\otimes\Pi_{\rm ev}) M^{\rm het}_{\rm ph} (\ket{+}_A \otimes \Pi_{\rm od})\bigr]^{\dagger} = M^{\rm het }_{\rm (+,o)(-,e)}
        \label{eq:sandwiched_het_plus_od_minus_ev}\\
        &(\bra{-}_A\otimes\Pi_{\rm od}) M^{\rm het}_{\rm ph} (\ket{+}_A \otimes \Pi_{\rm ev}) = \bigl[(\bra{+}_A\otimes\Pi_{\rm ev}) M^{\rm het}_{\rm ph} (\ket{-}_A \otimes \Pi_{\rm od})\bigr]^{\dagger} = M^{\rm het }_{\rm (-,o)(+,e)} 
        \label{eq:sandwiched_het_minus_od_plus_ev}
    \end{align}
    As can be seen from Eq.~\eqref{eq:het_integrated} as well as Eqs.~\eqref{eq:def_of_pi_fid}--\eqref{eq:def_of_M_kappa_gamma}, we have
    \begin{equation}
        M^{\rm het}[\kappa,\gamma] = \Pi_{AC}^{(+,\mathrm{od}), (-,\mathrm{ev})} M^{\rm het}[\kappa,\gamma]\, \Pi_{AC}^{(+,\mathrm{od}), (-,\mathrm{ev})} + \Pi_{AC}^{(-,\mathrm{od}),(+,\mathrm{ev})} M^{\rm het}[\kappa,\gamma]\, \Pi_{AC}^{(-,\mathrm{od}), (+,\mathrm{ev})},
    \end{equation}
    where $\Pi_{AC}^{(+,\mathrm{od}), (-,\mathrm{ev})}$ and $\Pi_{AC}^{(-,\mathrm{od}),(+,\mathrm{ev})}$ are defined in Eqs.~\eqref{eq:plus_od_minus_ev} and \eqref{eq:minus_od_plus_ev}.
    Then we apply Lemma \ref{lemma:operator_ineq} to the operators $\Pi_{AC}^{(+,\mathrm{od}), (-,\mathrm{ev})} M^{\rm het}[\kappa,\gamma]\, \Pi_{AC}^{(+,\mathrm{od}), (-,\mathrm{ev})}$ and $\Pi_{AC}^{(-,\mathrm{od}),(+,\mathrm{ev})} M^{\rm het}[\kappa,\gamma]\, \Pi_{AC}^{(-,\mathrm{od}), (+,\mathrm{ev})}$, respectively.  For $\Pi_{AC}^{(+,\mathrm{od}), (-,\mathrm{ev})} M^{\rm het}[\kappa,\gamma]\, \Pi_{AC}^{(+,\mathrm{od}), (-,\mathrm{ev})}$, using Eqs.~\eqref{eq:sandwiched_het_od_od}, \eqref{eq:sandwiched_het_ev_ev}, and \eqref{eq:sandwiched_het_plus_od_minus_ev}, we set
    \begin{align}
        \Pi_{\pm} &= \ket{\pm}\!\bra{\pm}_A \otimes \Pi_{\rm od(ev)}, \\
        M &= \Pi_{AC}^{(+,\mathrm{od}), (-,\mathrm{ev})} M^{\rm het}_{\rm ph} \Pi_{AC}^{(+,\mathrm{od}), (-,\mathrm{ev})} \\
        &= \ket{+}\!\bra{+}_A\otimes M^{\rm het }_{\rm oo} + \ket{-}\!\bra{-}_A\otimes  M^{\rm het}_{\rm ee}  +  \ket{+}\!\bra{-}_A\otimes M^{\rm het}_{\rm (+,o)(-,e)} + \ket{-}\!\bra{+}_A\otimes M^{\rm het}_{\rm (-,e)(+,o)}, \\
        \ket{\psi} &= \sqrt{\kappa} \ket{\phi_-}_{AC}, \\
        \alpha &= 1, \quad \gamma_+ = 0, \quad \gamma_{-} = \gamma,
    \end{align}
    where $\ket{\phi_-}_{AC}$ is defined in Eq.~\eqref{eq:phi_minus}.
    Since so-defined $M$ only has continuous spectrum, we can apply Lemma~\ref{lemma:operator_ineq} and obtain
    \begin{equation}
        \sigma_{\rm sup}\!\(\Pi_{AC}^{(+,\mathrm{od}), (-,\mathrm{ev})} M^{\rm het}[\kappa,\gamma]\, \Pi_{AC}^{(+,\mathrm{od}), (-,\mathrm{ev})}\) \leq \sigma_{\rm sup}(M'^{(0)}_{\rm 6d}). \label{eq:bounded_by_6d_0_het}
    \end{equation}
    We also apply Lemma \ref{lemma:operator_ineq} to $\Pi_{AC}^{(-,\mathrm{od}),(+,\mathrm{ev})} M^{\rm het}[\kappa,\gamma]\, \Pi_{AC}^{(-,\mathrm{od}), (+,\mathrm{ev})}$.  Using Eqs.~\eqref{eq:sandwiched_het_od_od}, \eqref{eq:sandwiched_het_ev_ev}, and \eqref{eq:sandwiched_het_minus_od_plus_ev}, we set
    \begin{align}
        \Pi_{\pm} &= \ket{\mp}\!\bra{\mp}_A \otimes \Pi_{\rm od(ev)}, \\
        M &= \Pi_{AC}^{(-,\mathrm{od}), (+,\mathrm{ev})} M^{\rm het}_{\rm ph} \Pi_{AC}^{(-,\mathrm{od}), (+,\mathrm{ev})} \\
        &= \ket{-}\!\bra{-}_A\otimes M^{\rm het}_{\rm oo}  + \ket{+}\!\bra{+}_A\otimes  M^{\rm het}_{\rm ee} + \ket{-}\!\bra{+}_A\otimes M^{\rm het}_{\rm (-,o)(+,e)} + \ket{+}\!\bra{-}_A\otimes M^{\rm het}_{\rm (+,e)(-,o)}, \\
        \ket{\psi} &= \sqrt{\kappa} \ket{\phi_+}_{AC}, \\
        \alpha &= 1, \quad \gamma_+ = \gamma, \quad \gamma_{-} = 0.
    \end{align}
    where $\ket{\phi_+}_{AC}$ is defined in Eq.~\eqref{eq:phi_plus}.
    Then, we observe 
    \begin{equation}
        \sigma_{\rm sup}\!\(\Pi_{AC}^{(-,\mathrm{od}),(+,\mathrm{ev})} M^{\rm het}[\kappa,\gamma]\, \Pi_{AC}^{(-,\mathrm{od}), (+,\mathrm{ev})}\) \leq \sigma_{\rm sup}(M'^{(1)}_{\rm 6d}).
        \label{eq:bounded_by_6d_1_het}
    \end{equation}
    Combining inequalities \eqref{eq:bounded_by_6d_0_het} and \eqref{eq:bounded_by_6d_1_het} completes the proof.
\end{proof}

Note that the elements of the matrices $M_{\rm 6d}^{(0)}$ and $M_{\rm 6d}^{(1)}$ in Eqs.~\eqref{eq:M_6d_0} and \eqref{eq:M_6d_1} depend linearly on the parameters $\kappa$ and $\gamma$.  Since the largest eigenvalue of a matrix is a convex function of its elements \cite{Horn1985,Overton1993}, the largest eigenvalues $\sigma_{\rm sup}(M_{\rm 6d}^{(0)})$ and $\sigma_{\rm sup}(M_{\rm 6d}^{(1)})$ are convex functions of $\kappa$ and $\gamma$, and so is $B^{\rm hom}(\kappa, \gamma)$ in Eq.~\eqref{eq:B_kappa_gamma_hom}.  The same fact can also be applied to $B^{\rm het}(\kappa, \gamma)$ in Eq.~\eqref{eq:B_kappa_gamma_het}.

\bibliographystyle{quantum}
\bibliography{refined_reverse}

\begin{thebibliography}{10}

\bibitem{BB84}
Charles~H. Bennett and Gilles Brassard.
\newblock ``Quantum cryptography: {Public} key distribution and coin tossing''.
\newblock
  \href{https://dx.doi.org/https://doi.org/10.1016/j.tcs.2014.05.025}{Theoretical
  Computer Science {\bf 560}, 7--11}~(2014).

\bibitem{Shannon1949}
C.~E. Shannon.
\newblock ``Communication theory of secrecy systems''.
\newblock \href{https://dx.doi.org/10.1002/j.1538-7305.1949.tb00928.x}{{The
  Bell System Technical Journal} {\bf 28}, 656--715}~(1949).

\bibitem{Ralph1999}
T.~C. Ralph.
\newblock ``Continuous variable quantum cryptography''.
\newblock \href{https://dx.doi.org/10.1103/PhysRevA.61.010303}{Phys. Rev. A
  {\bf 61}, 010303}~(1999).

\bibitem{Hillery2000}
Mark Hillery.
\newblock ``Quantum cryptography with squeezed states''.
\newblock \href{https://dx.doi.org/10.1103/PhysRevA.61.022309}{Phys. Rev. A
  {\bf 61}, 022309}~(2000).

\bibitem{Gottesman2001}
Daniel Gottesman and John Preskill.
\newblock ``Secure quantum key distribution using squeezed states''.
\newblock \href{https://dx.doi.org/10.1103/PhysRevA.63.022309}{Phys. Rev. A
  {\bf 63}, 022309}~(2001).

\bibitem{Cerf2001}
N.~J. Cerf, M.~L\'evy, and G.~Van Assche.
\newblock ``Quantum distribution of {Gaussian} keys using squeezed states''.
\newblock \href{https://dx.doi.org/10.1103/PhysRevA.63.052311}{Phys. Rev. A
  {\bf 63}, 052311}~(2001).

\bibitem{GG02}
Fr\'ed\'eric Grosshans and Philippe Grangier.
\newblock ``{Continuous Variable Quantum Cryptography Using Coherent States}''.
\newblock \href{https://dx.doi.org/10.1103/PhysRevLett.88.057902}{Phys. Rev.
  Lett. {\bf 88}, 057902}~(2002).

\bibitem{Grosshans2003}
Fr{\'e}d{\'e}ric Grosshans, Gilles Van~Assche, J{\'e}r{\^o}me Wenger, Rosa
  Brouri, Nicolas~J. Cerf, and Philippe Grangier.
\newblock ``Quantum key distribution using gaussian-modulated coherent
  states''.
\newblock \href{https://dx.doi.org/10.1038/nature01289}{Nature {\bf 421},
  238--241}~(2003).

\bibitem{hetero04}
Christian Weedbrook, Andrew~M. Lance, Warwick~P. Bowen, Thomas Symul,
  Timothy~C. Ralph, and Ping~Koy Lam.
\newblock ``Quantum {Cryptography} {Without} {Switching}''.
\newblock \href{https://dx.doi.org/10.1103/PhysRevLett.93.170504}{Phys. Rev.
  Lett. {\bf 93}, 170504}~(2004).

\bibitem{Silberhorn2002}
Ch. Silberhorn, T.~C. Ralph, N.~L\"utkenhaus, and G.~Leuchs.
\newblock ``Continuous {Variable} {Quantum} {Cryptography}: {Beating} the 3
  {dB} {Loss} {Limit}''.
\newblock \href{https://dx.doi.org/10.1103/PhysRevLett.89.167901}{Phys. Rev.
  Lett. {\bf 89}, 167901}~(2002).

\bibitem{two_state_Lutken}
Yi-Bo Zhao, Matthias Heid, Johannes Rigas, and Norbert L\"utkenhaus.
\newblock ``Asymptotic security of binary modulated continuous-variable quantum
  key distribution under collective attacks''.
\newblock \href{https://dx.doi.org/10.1103/PhysRevA.79.012307}{Phys. Rev. A
  {\bf 79}, 012307}~(2009).

\bibitem{Garcia2009}
Ra\'ul Garc\'{\i}a-Patr\'on and Nicolas~J. Cerf.
\newblock ``{Continuous-Variable Quantum Key Distribution Protocols Over Noisy
  Channels}''.
\newblock \href{https://dx.doi.org/10.1103/PhysRevLett.102.130501}{Phys. Rev.
  Lett. {\bf 102}, 130501}~(2009).

\bibitem{Leverrier2009}
Anthony Leverrier and Philippe Grangier.
\newblock ``Unconditional {Security} {Proof} of {Long}-{Distance}
  {Continuous}-{Variable} {Quantum} {Key} {Distribution} with {Discrete}
  {Modulation}''.
\newblock \href{https://dx.doi.org/10.1103/PhysRevLett.102.180504}{Phys. Rev.
  Lett. {\bf 102}, 180504}~(2009).

\bibitem{Leverrier2011}
Anthony Leverrier and Philippe Grangier.
\newblock ``{Erratum: Unconditional Security Proof of Long-Distance
  Continuous-Variable Quantum Key Distribution with Discrete Modulation [Phys.
  Rev. Lett. 102, 180504 (2009)]}''.
\newblock \href{https://dx.doi.org/10.1103/PhysRevLett.106.259902}{Phys. Rev.
  Lett. {\bf 106}, 259902}~(2011).

\bibitem{Kaur2021}
Eneet Kaur, Saikat Guha, and Mark~M. Wilde.
\newblock ``Asymptotic security of discrete-modulation protocols for
  continuous-variable quantum key distribution''.
\newblock \href{https://dx.doi.org/10.1103/PhysRevA.103.012412}{Phys. Rev. A
  {\bf 103}, 012412}~(2021).

\bibitem{four_state_Leverrier}
Shouvik Ghorai, Philippe Grangier, Eleni Diamanti, and Anthony Leverrier.
\newblock ``Asymptotic {Security} of {Continuous}-{Variable} {Quantum} {Key}
  {Distribution} with a {Discrete} {Modulation}''.
\newblock \href{https://dx.doi.org/10.1103/PhysRevX.9.021059}{Phys. Rev. X {\bf
  9}, 021059}~(2019).

\bibitem{four_state_Lutken}
Jie Lin, Twesh Upadhyaya, and Norbert L\"utkenhaus.
\newblock ``Asymptotic {Security} {Analysis} of {Discrete}-{Modulated}
  {Continuous}-{Variable} {Quantum} {Key} {Distribution}''.
\newblock \href{https://dx.doi.org/10.1103/PhysRevX.9.041064}{Phys. Rev. X {\bf
  9}, 041064}~(2019).

\bibitem{liu2021}
Wen-Bo Liu, Chen-Long Li, Yuan-Mei Xie, Chen-Xun Weng, Jie Gu, Xiao-Yu Cao,
  Yu-Shuo Lu, Bing-Hong Li, Hua-Lei Yin, and Zeng-Bing Chen.
\newblock ``{Homodyne Detection Quadrature Phase Shift Keying
  Continuous-Variable Quantum key Distribution with High Excess Noise
  Tolerance}''.
\newblock \href{https://dx.doi.org/10.1103/PRXQuantum.2.040334}{PRX Quantum
  {\bf 2}, 040334}~(2021).

\bibitem{Upadhyaya2021}
Twesh Upadhyaya, Thomas van Himbeeck, Jie Lin, and Norbert L\"utkenhaus.
\newblock ``{Dimension Reduction in Quantum Key Distribution for Continuous-
  and Discrete-Variable Protocols}''.
\newblock \href{https://dx.doi.org/10.1103/PRXQuantum.2.020325}{PRX Quantum
  {\bf 2}, 020325}~(2021).

\bibitem{denys2021}
Aur{\'{e}}lie Denys, Peter Brown, and Anthony Leverrier.
\newblock ``Explicit asymptotic secret key rate of continuous-variable quantum
  key distribution with an arbitrary modulation''.
\newblock \href{https://dx.doi.org/10.22331/q-2021-09-13-540}{{Quantum} {\bf
  5}, 540}~(2021).

\bibitem{leverrier2010}
Anthony Leverrier, Fr\'ed\'eric Grosshans, and Philippe Grangier.
\newblock ``Finite-size analysis of a continuous-variable quantum key
  distribution''.
\newblock \href{https://dx.doi.org/10.1103/PhysRevA.81.062343}{Phys. Rev. A
  {\bf 81}, 062343}~(2010).

\bibitem{Leverrier2015}
Anthony Leverrier.
\newblock ``{Composable Security Proof for Continuous-Variable Quantum Key
  Distribution with Coherent States}''.
\newblock \href{https://dx.doi.org/10.1103/PhysRevLett.114.070501}{Phys. Rev.
  Lett. {\bf 114}, 070501}~(2015).

\bibitem{samsonov2020}
E.~Samsonov, R.~Goncharov, A.~Gaidash, A.~Kozubov, V.~Egorov, and A.~Gleim.
\newblock ``Subcarrier wave continuous variable quantum key distribution with
  discrete modulation: mathematical model and finite-key analysis''.
\newblock \href{https://dx.doi.org/10.1038/s41598-020-66948-0}{Scientific
  Reports {\bf 10}, 10034}~(2020).

\bibitem{Papanastasiou2021continuous}
Panagiotis Papanastasiou and Stefano Pirandola.
\newblock ``{Continuous-variable quantum cryptography with discrete alphabets:
  Composable security under collective Gaussian attacks}''.
\newblock \href{https://dx.doi.org/10.1103/PhysRevResearch.3.013047}{Phys. Rev.
  Research {\bf 3}, 013047}~(2021).

\bibitem{Papanastasiou2021security}
Panagiotis Papanastasiou, Carlo Ottaviani, and Stefano Pirandola.
\newblock ``Security of continuous-variable quantum key distribution against
  canonical attacks''.
\newblock In 2021 International Conference on Computer Communications and
  Networks (ICCCN).
\newblock \href{https://dx.doi.org/10.1109/ICCCN52240.2021.9522349}{Pages
  1--6}.
\newblock ~(2021).

\bibitem{mountogiannakis2021}
Alexander~G. Mountogiannakis, Panagiotis Papanastasiou, Boris Braverman, and
  Stefano Pirandola.
\newblock ``Composably secure data processing for {Gaussian}-modulated
  continuous-variable quantum key distribution''.
\newblock \href{https://dx.doi.org/10.1103/PhysRevResearch.4.013099}{Phys. Rev.
  Research {\bf 4}, 013099}~(2022).

\bibitem{Kanitschar2023}
Florian Kanitschar, Ian George, Jie Lin, Twesh Upadhyaya, and Norbert
  L\"utkenhaus.
\newblock ``{Finite-Size Security for Discrete-Modulated Continuous-Variable
  Quantum Key Distribution Protocols}''~(2023).
\newblock  \href{http://arxiv.org/abs/2301.08686}{arXiv:2301.08686}.

\bibitem{Furrer2012}
F.~Furrer, T.~Franz, M.~Berta, A.~Leverrier, V.~B. Scholz, M.~Tomamichel, and
  R.~F. Werner.
\newblock ``{Continuous Variable Quantum Key Distribution: Finite-Key Analysis
  of Composable Security against Coherent Attacks}''.
\newblock \href{https://dx.doi.org/10.1103/PhysRevLett.109.100502}{Phys. Rev.
  Lett. {\bf 109}, 100502}~(2012).

\bibitem{Furrer2014}
F.~Furrer, T.~Franz, M.~Berta, A.~Leverrier, V.~B. Scholz, M.~Tomamichel, and
  R.~F. Werner.
\newblock ``{Erratum: Continuous Variable Quantum Key Distribution: Finite-Key
  Analysis of Composable Security Against Coherent Attacks [Phys. Rev. Lett.
  109, 100502 (2012)]}''.
\newblock \href{https://dx.doi.org/10.1103/PhysRevLett.112.019902}{Phys. Rev.
  Lett. {\bf 112}, 019902}~(2014).

\bibitem{Furrer2014reverse}
Fabian Furrer.
\newblock ``Reverse-reconciliation continuous-variable quantum key distribution
  based on the uncertainty principle''.
\newblock \href{https://dx.doi.org/10.1103/PhysRevA.90.042325}{Phys. Rev. A
  {\bf 90}, 042325}~(2014).

\bibitem{Berta2016}
Mario Berta, Fabian Furrer, and Volkher~B. Scholz.
\newblock ``The smooth entropy formalism for von {Neumann} algebras''.
\newblock \href{https://dx.doi.org/10.1063/1.4936405}{Journal of Mathematical
  Physics {\bf 57}, 015213}~(2016).

\bibitem{Leverrier2013}
Anthony Leverrier, Ra\'ul Garc\'{\i}a-Patr\'on, Renato Renner, and Nicolas~J.
  Cerf.
\newblock ``{Security of Continuous-Variable Quantum Key Distribution Against
  General Attacks}''.
\newblock \href{https://dx.doi.org/10.1103/PhysRevLett.110.030502}{Phys. Rev.
  Lett. {\bf 110}, 030502}~(2013).

\bibitem{Gaussian_unitary}
Anthony Leverrier.
\newblock ``Security of {Continuous}-{Variable} {Quantum} {Key} {Distribution}
  via a {Gaussian} de {Finetti} {Reduction}''.
\newblock \href{https://dx.doi.org/10.1103/PhysRevLett.118.200501}{Phys. Rev.
  Lett. {\bf 118}, 200501}~(2017).

\bibitem{leverrier2009security}
A~Leverrier, E~Karpov, P~Grangier, and N~J Cerf.
\newblock ``Security of continuous-variable quantum key distribution: towards a
  de {Finetti} theorem for rotation symmetry in phase space''.
\newblock \href{https://dx.doi.org/10.1088/1367-2630/11/11/115009}{New Journal
  of Physics {\bf 11}, 115009}~(2009).

\bibitem{leverrier2017}
Anthony Leverrier.
\newblock ``{SU(p,q) coherent states and a Gaussian de Finetti theorem}''.
\newblock \href{https://dx.doi.org/10.1063/1.5007334}{Journal of Mathematical
  Physics {\bf 59}, 042202}~(2018).

\bibitem{Lupo2018}
Cosmo Lupo, Carlo Ottaviani, Panagiotis Papanastasiou, and Stefano Pirandola.
\newblock ``Continuous-variable measurement-device-independent quantum key
  distribution: {Composable} security against coherent attacks''.
\newblock \href{https://dx.doi.org/10.1103/PhysRevA.97.052327}{Phys. Rev. A
  {\bf 97}, 052327}~(2018).

\bibitem{Ghorai2019}
Shouvik Ghorai, Eleni Diamanti, and Anthony Leverrier.
\newblock ``Composable security of two-way continuous-variable quantum key
  distribution without active symmetrization''.
\newblock \href{https://dx.doi.org/10.1103/PhysRevA.99.012311}{Phys. Rev. A
  {\bf 99}, 012311}~(2019).

\bibitem{Jouguet2012}
Paul Jouguet, S\'ebastien Kunz-Jacques, Eleni Diamanti, and Anthony Leverrier.
\newblock ``Analysis of imperfections in practical continuous-variable quantum
  key distribution''.
\newblock \href{https://dx.doi.org/10.1103/PhysRevA.86.032309}{Phys. Rev. A
  {\bf 86}, 032309}~(2012).

\bibitem{Lupo2020}
Cosmo Lupo.
\newblock ``Towards practical security of continuous-variable quantum key
  distribution''.
\newblock \href{https://dx.doi.org/10.1103/PhysRevA.102.022623}{Phys. Rev. A
  {\bf 102}, 022623}~(2020).

\bibitem{matsuura2021}
Takaya Matsuura, Kento Maeda, Toshihiko Sasaki, and Masato Koashi.
\newblock ``Finite-size security of continuous-variable quantum key
  distribution with digital signal processing''.
\newblock \href{https://dx.doi.org/10.1038/s41467-020-19916-1}{Nature
  Communications {\bf 12}, 252}~(2021).

\bibitem{Yamano2022}
Shinichiro Yamano, Takaya Matsuura, Yui Kuramochi, Toshihiko Sasaki, and Masato
  Koashi.
\newblock ``Finite-size security proof of binary-modulation continuous-variable
  quantum key distribution using only heterodyne measurement''~(2022).
\newblock  \href{http://arxiv.org/abs/2208.11983}{arXiv:2208.11983}.

\bibitem{Bauml2023}
Stefan B\"auml, Carlos~Pascual Garc\'{\i}a, Victoria Wright, Omar Fawzi, and
  Antonio Ac\'{\i}n.
\newblock ``Security of discrete-modulated continuous-variable quantum key
  distribution''~(2023).
\newblock  \href{http://arxiv.org/abs/2303.09255}{arXiv:2303.09255}.

\bibitem{lupo2021}
Cosmo Lupo and Yingkai Ouyang.
\newblock ``{Quantum Key Distribution with Nonideal Heterodyne Detection:
  Composable Security of Discrete-Modulation Continuous-Variable Protocols}''.
\newblock \href{https://dx.doi.org/10.1103/PRXQuantum.3.010341}{PRX Quantum
  {\bf 3}, 010341}~(2022).

\bibitem{Shor_Preskill}
Peter~W. Shor and John Preskill.
\newblock ``Simple {Proof} of {Security} of the {BB84} {Quantum} {Key}
  {Distribution} {Protocol}''.
\newblock \href{https://dx.doi.org/10.1103/PhysRevLett.85.441}{Phys. Rev. Lett.
  {\bf 85}, 441--444}~(2000).

\bibitem{Lo1999}
Hoi-Kwong Lo and H.~F. Chau.
\newblock ``{Unconditional Security of Quantum Key Distribution over
  Arbitrarily Long Distances}''.
\newblock \href{https://dx.doi.org/10.1126/science.283.5410.2050}{Science {\bf
  283}, 2050--2056}~(1999).

\bibitem{renner2008}
RENATO RENNER.
\newblock ``{SECURITY} {OF} {QUANTUM} {KEY} {DISTRIBUTION}''.
\newblock \href{https://dx.doi.org/10.1142/S0219749908003256}{International
  Journal of Quantum Information {\bf 06}, 1--127}~(2008).

\bibitem{complementarity}
M~Koashi.
\newblock ``Simple security proof of quantum key distribution based on
  complementarity''.
\newblock \href{https://dx.doi.org/10.1088/1367-2630/11/4/045018}{New Journal
  of Physics {\bf 11}, 045018}~(2009).

\bibitem{tsurumaru2020equivalence}
Toyohiro Tsurumaru.
\newblock ``{Equivalence of Three Classical Algorithms With Quantum Side
  Information: Privacy Amplification, Error Correction, and Data
  Compression}''.
\newblock \href{https://dx.doi.org/10.1109/TIT.2021.3126160}{IEEE Transactions
  on Information Theory {\bf 68}, 1016--1031}~(2022).

\bibitem{Hayashi_Tsurumaru}
Masahito Hayashi and Toyohiro Tsurumaru.
\newblock ``Concise and tight security analysis of the {Bennett}-{Brassard}
  1984 protocol with finite key lengths''.
\newblock \href{https://dx.doi.org/10.1088/1367-2630/14/9/093014}{New Journal
  of Physics {\bf 14}, 093014}~(2012).

\bibitem{tsurumaru2020leftover}
Toyohiro Tsurumaru.
\newblock ``{Leftover Hashing From Quantum Error Correction: Unifying the Two
  Approaches to the Security Proof of Quantum Key Distribution}''.
\newblock \href{https://dx.doi.org/10.1109/TIT.2020.2969656}{IEEE Transactions
  on Information Theory {\bf 66}, 3465--3484}~(2020).

\bibitem{pirandola2017}
Stefano Pirandola, Riccardo Laurenza, Carlo Ottaviani, and Leonardo Banchi.
\newblock ``Fundamental limits of repeaterless quantum communications''.
\newblock \href{https://dx.doi.org/10.1038/ncomms15043}{Nature Communications
  {\bf 8}, 15043}~(2017).

\bibitem{lodewyck2007}
J\'er\^ome Lodewyck, Matthieu Bloch, Ra\'ul Garc\'{\i}a-Patr\'on, Simon
  Fossier, Evgueni Karpov, Eleni Diamanti, Thierry Debuisschert, Nicolas~J.
  Cerf, Rosa Tualle-Brouri, Steven~W. McLaughlin, and Philippe Grangier.
\newblock ``Quantum key distribution over
  $25\phantom{\rule{0.3em}{0ex}}\mathrm{km}$ with an all-fiber
  continuous-variable system''.
\newblock \href{https://dx.doi.org/10.1103/PhysRevA.76.042305}{Phys. Rev. A
  {\bf 76}, 042305}~(2007).

\bibitem{jouguet2014}
Paul Jouguet, David Elkouss, and S\'ebastien Kunz-Jacques.
\newblock ``High-bit-rate continuous-variable quantum key distribution''.
\newblock \href{https://dx.doi.org/10.1103/PhysRevA.90.042329}{Phys. Rev. A
  {\bf 90}, 042329}~(2014).

\bibitem{wen2021}
Xuan Wen, Qiong Li, Haokun Mao, Xiaojun Wen, and Nan Chen.
\newblock ``{Rotation Based Slice Error Correction Protocol for
  Continuous-variable Quantum Key Distribution and its Implementation with
  Polar Codes}''~(2021).
\newblock  \href{http://arxiv.org/abs/2106.06206}{arXiv:2106.06206}.

\bibitem{horodecki2006}
Karol Horodecki, Debbie Leung, Hoi-Kwong Lo, and Jonathan Oppenheim.
\newblock ``{Quantum Key Distribution Based on Arbitrarily Weak Distillable
  Entangled States}''.
\newblock \href{https://dx.doi.org/10.1103/PhysRevLett.96.070501}{Phys. Rev.
  Lett. {\bf 96}, 070501}~(2006).

\bibitem{Renes2007}
Joseph~M. Renes and Graeme Smith.
\newblock ``{Noisy Processing and Distillation of Private Quantum States}''.
\newblock \href{https://dx.doi.org/10.1103/PhysRevLett.98.020502}{Phys. Rev.
  Lett. {\bf 98}, 020502}~(2007).

\bibitem{horodecki2008}
K.~Horodecki, M.~Horodecki, P.~Horodecki, D.~Leung, and J.~Oppenheim.
\newblock ``{Unconditional Privacy over Channels which Cannot Convey Quantum
  Information}''.
\newblock \href{https://dx.doi.org/10.1103/PhysRevLett.100.110502}{Phys. Rev.
  Lett. {\bf 100}, 110502}~(2008).

\bibitem{horodecki2009}
Karol Horodecki, Michal Horodecki, Pawel Horodecki, and Jonathan Oppenheim.
\newblock ``{General Paradigm for Distilling Classical Key From Quantum
  States}''.
\newblock \href{https://dx.doi.org/10.1109/TIT.2008.2009798}{IEEE Transactions
  on Information Theory {\bf 55}, 1898--1929}~(2009).

\bibitem{Bourassa2020}
J.~Eli Bourassa, Ignatius~William Primaatmaja, Charles Ci~Wen Lim, and
  Hoi-Kwong Lo.
\newblock ``Loss-tolerant quantum key distribution with mixed signal states''.
\newblock \href{https://dx.doi.org/10.1103/PhysRevA.102.062607}{Phys. Rev. A
  {\bf 102}, 062607}~(2020).

\bibitem{matsuura2019}
Takaya Matsuura, Toshihiko Sasaki, and Masato Koashi.
\newblock ``Refined security proof of the round-robin differential-phase-shift
  quantum key distribution and its improved performance in the finite-sized
  case''.
\newblock \href{https://dx.doi.org/10.1103/PhysRevA.99.042303}{Phys. Rev. A
  {\bf 99}, 042303}~(2019).

\bibitem{matsuura2023digital}
Takaya Matsuura.
\newblock ``Digital quantum information processing with continuous-variable
  systems''.
\newblock Springer Nature. ~(2023).
\newblock  url:~\url{https://doi.org/10.1007/978-981-19-8288-0}.

\bibitem{Cover2012}
Thomas~M. Cover and Joy~A. Thomas.
\newblock ``Elements of {Information} {Theory} ({Wiley} {Series} in
  {Telecommunications} and {Signal} {Processing})''.
\newblock Wiley-Interscience. USA~(2006).

\bibitem{tsurumaru2013}
Toyohiro Tsurumaru and Masahito Hayashi.
\newblock ``{Dual Universality of Hash Functions and Its Applications to
  Quantum Cryptography}''.
\newblock \href{https://dx.doi.org/10.1109/TIT.2013.2250576}{IEEE Transactions
  on Information Theory {\bf 59}, 4700--4717}~(2013).

\bibitem{Hoeffding1963}
Wassily Hoeffding.
\newblock ``{Probability Inequalities for Sums of Bounded Random Variables}''.
\newblock \href{https://dx.doi.org/10.2307/2282952}{Journal of the American
  Statistical Association {\bf 58}, 13--30}~(1963).

\bibitem{kato2020}
Go~Kato.
\newblock ``Concentration inequality using unconfirmed knowledge''~(2020).
\newblock  \href{http://arxiv.org/abs/2002.04357}{arXiv:2002.04357}.

\bibitem{Azuma_ineq}
Kazuoki Azuma.
\newblock ``{Weighted sums of certain dependent random variables}''.
\newblock \href{https://dx.doi.org/10.2748/tmj/1178243286}{Tohoku Mathematical
  Journal {\bf 19}, 357 -- 367}~(1967).

\bibitem{zhou2021}
Hongyi Zhou, Toshihiko Sasaki, and Masato Koashi.
\newblock ``Numerical method for finite-size security analysis of quantum key
  distribution''.
\newblock \href{https://dx.doi.org/10.1103/PhysRevResearch.4.033126}{Phys. Rev.
  Res. {\bf 4}, 033126}~(2022).

\bibitem{composability}
J\"{o}rn M\"{u}ller-Quade and Renato Renner.
\newblock ``Composability in quantum cryptography''.
\newblock \href{https://dx.doi.org/10.1088/1367-2630/11/8/085006}{New Journal
  of Physics {\bf 11}, 085006}~(2009).

\bibitem{chvatal1979}
V.~Chv\'{a}tal.
\newblock ``The tail of the hypergeometric distribution''.
\newblock
  \href{https://dx.doi.org/https://doi.org/10.1016/0012-365X(79)90084-0}{Discrete
  Mathematics {\bf 25}, 285--287}~(1979).

\bibitem{refined_Azuma_ineq1}
Maxim Raginsky and Igal Sason.
\newblock ``{Concentration of Measure Inequalities in Information Theory,
  Communications, and Coding}''.
\newblock \href{https://dx.doi.org/10.1561/0100000064}{Foundations and
  Trends{\textregistered} in Communications and Information Theory {\bf 10},
  1--246}~(2013).

\bibitem{refined_Azuma_ineq2}
Colin McDiarmid.
\newblock ``Concentration''.
\newblock \href{https://dx.doi.org/10.1007/978-3-662-12788-9_6}{Pages
  195--248}.
\newblock Springer Berlin Heidelberg. Berlin, Heidelberg~(1998).

\bibitem{Horn1985}
Roger~A. Horn and Charles~R. Johnson.
\newblock ``Matrix {Analysis}''.
\newblock \href{https://dx.doi.org/10.1017/CBO9780511810817}{Cambridge
  University Press}. ~(1985).

\bibitem{Overton1993}
M.~L. Overton and R.~S. Womersley.
\newblock ``Optimality conditions and duality theory for minimizing sums of the
  largest eigenvalues of symmetric matrices''.
\newblock \href{https://dx.doi.org/10.1007/BF01585173}{Mathematical Programming
  {\bf 62}, 321--357}~(1993).

\end{thebibliography}

\end{document}